\documentclass[11pt]{article}
\usepackage{graphicx}

\usepackage{lineno}

\title{Private Graph Property Testing}

\newif\ifshowauthors
\showauthorstrue
\ifshowauthors
\author{
Hendrik Fichtenberger\\
Google Research\\
fichtenberger@google.com
\and
Abigail Gentle\thanks{Research supported by an Australian government RTP scholarship.}\\
The University of Sydney\\
abigail.gentle@sydney.edu.au
\and
Tamalika Mukherjee\\
Max Planck Institute for Security and Privacy\\
tamalika.mukherjee@mpi-sp.org
\and
Sayantan Sen\thanks{Research supported by the NRF Investigatorship award (NRF-NRFI10-2024-0006)
and CQT Young Researcher Career Development Grant (25-YRCDG-SS).}\\
Centre for Quantum Technologies,\\
National University of Singapore\\
sayantan789@gmail.com
}
\else 
\author{
(Authors withheld)
}
\fi
\date{}

\usepackage{graphicx} %

\usepackage[utf8]{inputenc}
\usepackage{amsmath,amsthm,amssymb,xcolor}
\usepackage[margin=2.54cm]{geometry}
\usepackage{physics}
\usepackage[hidelinks]{hyperref}
\usepackage{float}
\usepackage{hyperref}
\usepackage{multirow}
\usepackage{enumitem}
\usepackage[capitalize]{cleveref}
\crefname{algocf}{algorithm}{algorithms} %
\Crefname{algocf}{Algorithm}{Algorithms} %
\usepackage{mathtools}
\usepackage{amssymb}
\usepackage{lineno}
\usepackage{dsfont}
\usepackage{booktabs}
\usepackage{comment}

\usepackage{pgfplots}
\usepackage[dvipsnames]{xcolor}
\pgfplotsset{compat=1.7}
\usetikzlibrary{tikzmark, arrows}
\usetikzlibrary{calc}
\usepackage{tkz-euclide}
\usepackage[normalem]{ulem}

\usepackage{amsthm}
\newtheorem{theorem}{Theorem}[section]
\newtheorem{definition}[theorem]{Definition}

\newtheorem{lemma}[theorem]{Lemma}
\newtheorem{remark}[theorem]{Remark}
\newtheorem{corollary}[theorem]{Corollary}
\usepackage[linesnumbered, boxed]{algorithm2e}
\usepackage{algpseudocode}
\newenvironment{proofof}[1]{\smallskip\noindent{\bf Proof of #1.}}%
{\hspace*{\fill}$\Box$\par}

\newcommand{\eps}{\varepsilon}
\newcommand{\poly}{\mathrm{poly}}

\usepackage{todonotes}
\setuptodonotes{inline}
\newcounter{mynotes}
  \newcommand{\priv}{\ensuremath{\varepsilon}\xspace}
  \newcommand{\pdelta}{\ensuremath{\delta}\xspace}
  \newcommand{\dst}{\ensuremath{\gamma}\xspace}
  \newcommand{\dist}{\dst}
  \newcommand{\failp}{\ensuremath{\beta}\xspace}
  \newcommand{\ns}{\ensuremath{n}\xspace}
  
  \newcommand{\pmech}{\cM\xspace}
  \newcommand{\defeq}{\coloneqq}
  
  \newcommand{\maxdeg}{\ensuremath{d}\xspace}
  \newcommand{\prop}{\ensuremath{\Pi}\xspace}
  \newcommand{\property}{\prop}
  \newcommand{\tester}{\ensuremath{\mathcal{T}}\xspace}
  \newcommand{\ms}{\ensuremath{\ell}\xspace}

  \newcommand{\eep}{{e^{\priv}}}
  \newcommand{\eepz}{{e^{\priv_{0}}}}
  \newcommand{\eepp}{{e^{\priv'}}}
  \newcommand{\ome}{\omega}
  \newcommand{\rd}{k} 
  \newcommand{\rad}{\operatorname{rd}}
  \newcommand{\sampler}{{\mathcal{S}}}
  \newcommand{\subsampler}{\sampler^{\text{sub}}}
  \newcommand{\cansampler}{\sampler^{\text{sub}}_{h}}
  \newcommand{\kdsampler}{\sampler^{\text{disc}}_{\ms, \rd}}
  \newcommand{\RR}{\ensuremath{\operatorname{RR}}\xspace}
  \newcommand{\blockdist}{\ensuremath{\lambda}\xspace}
  \newcommand{\diverge}{\ensuremath{\alpha}\xspace} %
  \newcommand{\indic}[1]{\mathds{1}\{ #1 \}}
  \newcommand{\transcript}{\ensuremath{W}\xspace}
  \newcommand{\spacetrans}{\ensuremath{\mathcal{W}}\xspace}
  \newcommand{\tsize}{\ensuremath{T}\xspace}
  \newcommand{\length}{L}
  \newcommand{\repeats}{r}
  
  \newcommand{\prox}{\phi}
  \newcommand{\bigO}[1]{O\!\left( #1 \right)}
  \newcommand{\bigOtilde}[1]{\widetilde{O}\!\left( #1 \right)}
  \newcommand{\bigOme}[1]{\Omega\!\left( #1 \right)}
  \newcommand{\bigOmetilde}[1]{\widetilde{\Omega}\!\left( #1 \right)}

\newcommand{\dmain}{\cX}
\newcommand{\range}{\cY}
\newcommand{\kdrange}{\range_{\ms,\rd}^{\text{disc}}}

  \newcommand{\cA}{\ensuremath{{\mathcal A}}\xspace}

  \newcommand{\cG}{\ensuremath{{\mathcal G}}\xspace}

  \newcommand{\cM}{\ensuremath{{\mathcal M}}\xspace}

  \newcommand{\cS}{\ensuremath{{\mathcal S}}\xspace}

  \newcommand{\cX}{\ensuremath{{\mathcal X}}\xspace}
  \newcommand{\cY}{\ensuremath{{\mathcal Y}}\xspace}

\usepackage[
    backend=biber,
    citestyle=alphabetic,   %
    bibstyle=alphabetic,    %
    sorting=nyt,
    maxbibnames=20,
    natbib=true,            %
    uniquelist=false
]{biblatex}
\usepackage{thmtools,thm-restate}

\Crefname{algocf}{Algorithm}{Algorithms}

\begin{document}

\maketitle

\begin{abstract}
Graph property testing asks whether a massive graph satisfies a given property, or is far from doing so, using only a sublinear number of queries to the graph. Since property testers typically inspect only a small, randomly sampled portion of the input, they appear naturally compatible with differential privacy and privacy amplification by subsampling. Despite this, few results link these two fields. We initiate a systematic study of differentially private graph property testing with the goal of designing efficient testers with formal privacy guarantees in the \emph{dense} and \emph{bounded-degree} graph models. We develop new privacy amplification theorems for several widely used graph-sampling procedures such as \emph{induced subgraph sampling}, \emph{random walks} and \emph{\texorpdfstring{$k$}{k}-disc sampling}. We then leverage these privacy amplification techniques to design a \emph{private canonical tester} in the dense graph model, as well as \emph{private bipartiteness testers} and \emph{subgraph freeness testers} in the dense and bounded-degree graph models. Finally, using the new privacy amplification theorem for \texorpdfstring{$k$}{k}-disc sampling, we prove that every property of \emph{hyperfinite} graphs is privately testable. The resulting query complexities of our private testers are comparable to those of their non-private counterparts.

\end{abstract}

\section{Introduction}

Graphs frequently arise as a natural model or abstraction of private data---consider social networks, computer networks, and relational databases, such as records of transactions or phone calls. These graphs are often massive, far beyond what one could process in linear time or space. This constraint has motivated the study of sublinear algorithms for \emph{testing graph properties}. Despite the deep, ongoing study of graph property testing and the widespread use of graphs as representations of sensitive data, little is known about the privacy of graph testing algorithms. In this work, we initiate a principled study of private graph property testing. 

The scale and scope of data collection today have led to huge innovations in the study of formal privacy guarantees, particularly \emph{differential~privacy~(DP)}~\cite{dwork2006}. %
Differential privacy on graphs has been extensively studied (see~\cite{RaskhodnikovaS16} for a survey), and can be broken down into two categories. Privacy can be established at the level of nodes (node-DP), typically representing individuals, where privacy is with respect to a differing node (including all edges incident to that node), or at the level of edges (edge-DP), typically representing relations between nodes, where privacy is with respect to a differing edge. Differential privacy for graphs can be formalized as follows.

\begin{definition}[Differential~privacy~{\cite{dwork2006}}]\label{def:DP}
  Let $\mathcal{G}_n$ denote the set of all $n$-vertex graphs, $\pdelta\in[0,1]$ and $\priv>0$. A randomized algorithm $\cA:\cG_n\to\cY$ is said to satisfy $(\eps,\delta)$-node-differential privacy, or node-DP in short (resp. edge-DP) if for every pair of node-neighboring (resp. edge-neighboring)\footnote{Two graphs $G_1=(V, E_1)$, $G_2=(V, E_2)$ are said to be \emph{node-neighboring}, denoted by $G_1\sim_v G_2$, if there exists a vertex $v\in V$ such that $E_1(V\setminus\{v\})=E_2(V\setminus\{v\})$. Similarly,  graphs $G_1$ and $G_2$ are called \emph{edge-neighboring} i.e., $G_1 \sim_e G_2$ if there exists an edge $e$ such that $E_1 \setminus{\{e\}}=E_2 \setminus{\{e\}}$.} graphs $G_1,G_2 \in \mathcal{G}_n$, and for all $\cS\subseteq\cY$, we have that $\Pr[\cA(G_1) \in \cS ] \leq e^\eps \Pr [\cA(G_2) \in \cS] + \delta$. When $\delta=0$, we simply say that the algorithm is $\eps$-DP, or that it satisfies pure differential privacy.
\end{definition}

Graph property testing is the problem of deciding whether a graph satisfies a graph property %
or is $\dst$-far from doing so, under an appropriate choice of distance for some parameter $\dst \in (0,1]$. Let us first formally define graph properties. 

\begin{definition}[Graph properties]\label{def:graph-property}
    A graph property is a set of graphs that is closed under graph isomorphism. More formally, $\property$ is a graph property if, for every graph $G=(V, E)$ and every bijection $\sigma:V\to V'$, $G \in \property$ if and only if $\sigma(G) \in \property$ where $\sigma(G)\defeq (V', \{\{\sigma(u), \sigma(v)\}: \{u,v\} \in E\} )$. 
\end{definition}
\noindent In this work we will only consider graphs with vertex set $V=[\ns]\defeq \{1,\dots,\ns\}$. 

Now we are ready to define the notion of graph property tester for a property \property, an algorithm that is given oracle access to a graph $G=(V,E)$ and has to decide whether $G$ is in \property, or far from being in \property.
\begin{definition}[Graph property tester]\label{def:graph-property-tester}
  A randomized algorithm is said to be a two-sided error \emph{$\dst$-property tester} $\tester$ for a graph property $\property$ if given oracle access to an unknown graph $G$ as input and a parameter $\dst \in (0,1]$, it satisfies the following two conditions of \emph{completeness} and \emph{soundness}.
  \begin{enumerate}
      \item \textbf{Completeness:} If $G\in\property$ then $\tester$ accepts with probability at least $2/3$.
      \item \textbf{Soundness:} If $G$ is $\dst$-far from every graph in \property then $\tester$ rejects with probability at least $2/3$. 
  \end{enumerate}
The distance used will depend on the model. If $\tester$ accepts every graph in \property with probability 1, then we say that it has one-sided error; otherwise, we say it has two-sided error. {The total number of queries performed by $\tester$ is known as its \emph{query complexity}. 
The query complexity of testing the property $\property$ is the minimum over the query complexities of all testers $\tester$ for $\property$.  
The notation $q(\ns,\dst,1/3)$ denotes the query complexity of $\tester$ when the unknown graph has $n$ vertices, the proximity parameter is $\dst$ and the failure probability of $\tester$ is at most $1/3$.}
\end{definition}

Graph property testing was initiated in the seminal work of \cite{goldreich1998property}. Over the past two decades, this area has been studied extensively, yielding a wide range of techniques (see the books~\cite{goldreich2017introduction,bhattacharyya2022property} and the surveys~\cite{DBLP:journals/fttcs/Ron09, DBLP:conf/propertytesting/CzumajS10,DBLP:journals/siamdm/RubinfeldS11} for more references).

A natural reason to expect some compatibility between graph property testing and differential privacy is that graph property testers are inherently sublinear. These algorithms typically inspect only a small, randomly selected portion of the input graph. This limited access naturally leads one to consider \emph{privacy amplification by subsampling}. When a DP mechanism is applied to a randomly sampled portion of a graph, a particular edge or vertex can only influence the result in the case that it is sampled. This connection is evidenced by the development of new private sublinear algorithms~\cite{blocki2022, BlockiGMZ23, BlockiGM21}. On the other hand, it is known that not all sublinear algorithms can be made differentially private \cite{BlockiFGM25}, thus revealing a dichotomy that motivates a more careful study of when sublinear access and the associated sampling procedures yield useful privacy guarantees.

Graph property testing provides a particularly natural setting to investigate this dichotomy. Since a property tester ultimately outputs only a single binary decision --- ``Yes" if the input graph satisfies the property and ``No'' when it is far from the property --- the classical \emph{Randomized Response}\footnote{Instead of releasing the tester's decision directly, the true answer is reported with high probability and flipped with some probability chosen to guarantee differential privacy, see \Cref{def:randomized-response}.}~\cite{kasiviswanathan2011} is a natural privacy mechanism. In particular, one can first run a non-private tester and then apply Randomized Response to its output. However, obtaining a strong privacy guarantee in this way alone may require a non-negligible probability of flipping the tester's decision, thereby degrading its accuracy. This motivates combining Randomized Response with privacy amplification arising from the tester's sampling procedure. 

To this end, the heart of our results is a collection of \emph{privacy amplification theorems for graphs}, tailored to sampling schemes commonly used in sublinear graph algorithms. We use these amplification results to design $\varepsilon$-node DP (and edge-DP) graph property testers in the dense and bounded-degree models. However, the amplification theorems themselves apply more broadly and may be useful for other sublinear graph algorithms.

\paragraph{Concurrent Work. }
To the best of our knowledge, the recent work of Dwork and Tankala~\cite{dwork2026efficientprivatepropertytesting} was the first to explicitly mention graph property testing under differential privacy. There, the authors were interested in identifying \emph{regularity templates} in the context of differential privacy, a combinatorial object of fundamental importance in graph property testing~\cite{alon2009combinatorial}. Their analysis arose from a more detailed study of indistinguishability. The authors used \emph{privacy amplification by subsampling}~\cite{steinke2022,balle2020a} in order to demonstrate that their algorithm can be made to satisfy $O(1/\ns)$-node differential privacy for any choice of input parameters. 

In terms of property testing, they extend the combinatorial characterization of constant-query testable properties in the dense graph model~\cite{alon2009combinatorial}, and show that the same set of properties is constant-query testable with node-level DP. {Extending this characterization, our work studies the canonical tester under node and edge DP, demonstrating that even non-constant query complexity properties are privately testable. We also discuss the bounded-degree setting at length, which was not discussed previously. Since there is no analogue to the canonical tester in this setting, we instead analyze two of the most common sampling schemes, and state our results as general applications of these schemes.}

\subsection{Our Results}

Our main results are privacy amplification theorems for different graph sampling schemes including induced subgraph sampling in the dense graph model, and random walks and $k$-discs (small breadth-first searches around randomly selected root nodes) in the bounded-degree model, which lead to $\eps$-DP property testers in the dense and bounded-degree model.  
Let us first define the notion of a private tester.

\begin{definition}[Private tester]
Consider a $\dst$-tester with query complexity $q$ for a property \property as defined in \Cref{def:graph-property-tester}. An $\priv$-private $\dst$-tester with query complexity $q'$ for \property is a tester that also satisfies $\priv$-differential privacy as defined in \cref{def:DP}.     
\end{definition}

{Before stating our main results, we first observe that property testers with one-sided error are incompatible with pure $\priv$-differential privacy. Consider a tester $\tester$ for a property $\property$ with one-sided error, by definition,  if $x \in \property$, then $\Pr[ \tester \text{ rejects } x] = 0$. If $\tester$ satisfies $\eps$-DP, then for any neighboring $x$ and $x'$, $\Pr[\tester \text{ rejects } x'] \leq e^{\eps} \Pr[ \tester \text{ rejects } x] = 0$. By transitivity, $\tester$ has to accept on its whole domain\footnote{See a formal proof in \cref{sec:pure-dp}.}. Therefore, a private property tester for any non-trivial property would need to relax either the one-sided error guarantee or the pure DP guarantee. In this paper, we choose to present our results with the strongest privacy guarantees, $\eps$-differential privacy. In particular, our private testers satisfy $\eps$-DP and are of two-sided error\footnote{Technically, our resulting private testers can be considered as ``almost one-sided''. This is because the completeness error (which needs to be 0 for one-sided error) is very small, approximately $o(1)$. See \cref{rem:almost-one-sided} for details.}. 
}

Many of our results are presented with respect to {reasonable lower bounds on $\priv$ (typically of the form $\priv\gtrsim \log(1 + \frac{1}{\ns})$),} above which there exist private testers with query complexities within a constant factor of their non-private counterparts. {One should note that below this value of $\priv$, {our testers are still private}, but the error induced by privacy will be greater than that of the non-private tester.} {In particular, choosing $\priv$ to be a constant in our algorithms yields  $\bigO{1/\ns}$-DP testers.}
The formal theorem statements include all relevant details about parameter regimes.

The rest of this section is organized as follows: We first state our privacy amplification theorems for the graph sampling schemes specific to the graph access model, and then state the guarantees of the private property testers we obtain by applying the amplification theorems therein. Throughout, we use $O_{\dst,\priv}(\cdot)$ to suppress multiplicative factors that may depend on the proximity parameter $\dst$, and privacy parameter $\eps$, and we use \(\widetilde O(\cdot)\) to suppress polylogarithmic factors in the relevant problem parameters.

\paragraph{Private property testing in the dense graph model. }

Let us start with our results on dense graphs. The input is an $n$-vertex graph accessed through \emph{adjacency queries}: given a pair of vertices $\{u,v\}$, the algorithm learns whether $\{u,v\}\in E$ or not. Distance from a graph property is measured relative to the $\Theta(n^2)$ possible edges.

Consider the \emph{induced subgraph sampling} approach, where we sample a set of (smaller) subgraphs $X_1, \ldots, X_\ell$ from the underlying unknown graph $G$, and based on the sampled subgraphs, decide whether $G$ satisfies some property of interest. This is one of the standard techniques to design testers in the dense graph model. We state privacy amplification results for node and edge-neighboring graphs. 

Note that privacy amplification by induced subgraph sampling for node-DP is implied by~\cite[Lemma 2.16]{dwork2026efficientprivatepropertytesting}.
We prove the following nontrivial extension to edge-DP. 
\begin{theorem}[Edge-DP privacy amplification, Informal version of \Cref{thm:induced-edge-amplification}]\label{thm:induced-edge-amplificationintro}
Let $G=([\ns],E)$ be a dense graph and $\pmech$ be an $(\priv, \pdelta)$\emph{-edge differentially private} algorithm that takes as input a sample of $\ms$~subgraphs of size $h$. Let $\eta = \frac{\ms h(h-1)}{\ns(\ns-1)}$. Applying \pmech to random subgraph samples of $G$ yields an $(\priv', \pdelta')$-edge DP algorithm $\pmech'$ where $
\priv' = \log \left( 1 + \eta(e^{2(h-1)\priv} - 1) \right) $, and, $
\pdelta' = \eta \frac{e^{2(h-1)\priv} - 1}{e^{\priv} - 1}\pdelta.$
\end{theorem}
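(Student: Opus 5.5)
We follow the standard coupling proof of privacy amplification by subsampling (as in Balle et al.), adapted to edge neighbours and to the induced-subgraph sampler. Fix edge-neighbouring graphs $G_1\sim_e G_2$ that differ in the edge $e=\{u,v\}$. The sampler draws $\ms$ vertex sets $S_1,\dots,S_\ms$ of size $h$, uniformly at random and with the same randomness for both graphs, and outputs the induced subgraphs. A sampled subgraph depends on $e$ only if both $u$ and $v$ lie in its vertex set. For a single $h$-set this happens with probability $\frac{h(h-1)}{\ns(\ns-1)}$. By a union bound, the event $B$ that at least one $S_i$ contains both endpoints has probability at most $\eta=\frac{\ms h(h-1)}{\ns(\ns-1)}$. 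On $B^c$ the two tuples of sampled subgraphs are identical.

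For any output set $\cS$ we split the probability according to $B$:
\[
\Pr[\pmech'(G_b)\in\cS]=(1-p)\,A+p\,C_b,\qquad p=\Pr[B]\le\eta,
\]
where $A$ is the common conditional probability on $B^c$ and $C_b$ is the conditional probability on $B$. The key step is to compare $C_1$ with a mixture of $A$-type quantities. Conditioned on $B$, a sample containing $u,v$ can be coupled with a sample in which $u$ (or $v$) is replaced by a fresh vertex not in the set, which gives a sample avoiding $e$. These two inputs to $\pmech$ differ in at most $2(h-1)$ edge slots: the edges from the swapped vertex to the other $h-1$ vertices of the set are removed, and the edges from the new vertex are added. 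So by group privacy applied to $\pmech$, $C_b\le e^{2(h-1)\priv}A'+\frac{e^{2(h-1)\priv}-1}{e^\priv-1}\pdelta$, where $A'$ is the probability on the coupled sample. We then need to argue that the coupled distribution, averaged over the replacement, equals the distribution on $B^c$. This follows from a symmetry (double-counting) argument over sets that contain exactly one of $u,v$ versus neither. If that identity fails, we instead compare against a convex combination, which suffices.

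Substituting, we get $\Pr[\pmech'(G_1)\in\cS]\le (1-p)A+p\,e^{2(h-1)\priv}A+p\frac{e^{2(h-1)\priv}-1}{e^\priv-1}\pdelta$. To turn this into a ratio against $\Pr[\pmech'(G_2)\in\cS]\ge (1-p)A+p\,C_2$, we use the sharper Balle et al.\ ``advanced joint convexity'' lemma rather than a crude bound. Writing $\mu_1=(1-p)\mu_0+p\nu_1$ and $\mu_2=(1-p)\mu_0+p\nu_2$ with $\nu_1,\nu_2$ both $(2(h-1)\priv,\cdot)$-close to $\mu_0$ via the coupling above, the lemma gives $\priv'=\log(1+p(e^{2(h-1)\priv}-1))$. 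This is monotone in $p$, so we may replace $p$ by $\eta$. For the $\delta$ term, the group-privacy blow-up $\frac{e^{2(h-1)\priv}-1}{e^\priv-1}\pdelta$ gets multiplied by $p\le\eta$, which is exactly the stated $\pdelta'$.

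The main obstacle is the coupling step: showing that conditioning on $B$ gives a distribution whose $\pmech$-outputs are $2(h-1)\priv$-close to the unconditioned "no-$e$" distribution. This requires checking that the swap-one-endpoint map pushes the uniform measure on $h$-sets containing $\{u,v\}$ onto a measure dominated by (or equal to) a mixture of the sets in $B^c$. With $\ms>1$ subgraphs, several $S_i$ may contain $e$. We handle this either by sampling the sets independently and applying the argument coordinatewise, where the union bound already accounts for multiple hits and joint convexity handles the mixture, or by conditioning on the first index that hits $e$.
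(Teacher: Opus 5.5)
Your overall framework matches the paper's. You decompose by the event that both endpoints of $e$ fall in one sampled set, apply advanced joint convexity, and bound the residual-versus-common term by a distance-bounded coupling plus group privacy at distance $2(h-1)$. The gap is the coupling you propose for that term, which does not work, and this is the step the theorem hinges on.

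Replacing one endpoint by a fresh vertex always yields a configuration in which \emph{exactly one} of $u,v$ is sampled. The common distribution on $B^c$, however, is the mixture $\lambda_0=q_T\lambda_T+q_H\lambda_H+q_M\lambda_M$ over three kinds of configuration: both endpoints sampled but in different blocks ($T$, possible when $\ell\ge 2$), exactly one sampled ($H$), and neither sampled ($M$). Here $q_M=1-O(\ell h/n)$ carries almost all of the mass. So the pushforward of your map is $\lambda_H$, not $\lambda_0$. No symmetry or double-counting argument can fix this, because your map never removes both endpoints and never separates them into different blocks.

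Your fallback of comparing against a convex combination works only if you cover every component. Write $\mu_C=\lambda_C\kappa_g M$ for the output law on class $C$. By convexity of $D_\alpha$ in its second argument, $D_\alpha(\mu_1\|\mu_0)\le q_TD_\alpha(\mu_1\|\mu_T)+q_HD_\alpha(\mu_1\|\mu_H)+q_MD_\alpha(\mu_1\|\mu_M)$. You therefore also need a coupling of $\lambda_1$ with $\lambda_M$ (replace both endpoints by distinct fresh vertices) and with $\lambda_T$ (swap one endpoint with a vertex of another block). For each move you must check, by counting preimages, that it pushes the uniform law on $\mathcal{B}_S$ onto the uniform law of its target class. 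This is exactly the paper's coupling $\pi_{1,0}$: it picks the move $T$, $H$ or $M$ with probabilities $q_T,q_H,q_M$ and verifies $\Pr[B_2=X]=1/|\mathcal{B}\setminus\mathcal{B}_S|$ in each case.

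Your distance count also locates the constant in the wrong place. Samples are graphs on the position set $[h]$ (\cref{def:edge-edit-distance}), so replacing one vertex changes only the $h-1$ slots at its position. The $2(h-1)$ in the theorem comes from the cross-block swap in case $T$, which changes $h-1$ slots in each of two blocks; case $M$ costs $2h-3$. If you count ``removed plus added'' edges as you do, the both-endpoint replacement costs up to $4h-6$ and the swap up to $4(h-1)$, so you would not recover the stated exponent.

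Two smaller points:
\begin{itemize}
\item Advanced joint convexity also needs a bound on $D_\alpha(\mu_1\|\mu_1')$. This comes from the identity coupling on block configurations, at distance $1$, not from both residuals being close to $\mu_0$; chaining through $\mu_0$ would double the distance.
\item In the paper's sampler the $\ell h$ vertices are distinct, so the blocks are disjoint, at most one block can expose $e$, and $\Pr[B]=\eta$ exactly. With independently drawn sets, conditioning on the first hitting block leaves later blocks free to contain $e$, and coupling those away breaks the $2(h-1)$ bound.
\end{itemize}
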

Unlike in the node-DP case, the distribution of subgraph samples that \emph{do} contain some edge is not necessarily close to the distribution of subgraph samples that \emph{do not} contain that edge. In fact, the presence or absence of an edge in a sample is wholly decided by whether its vertex endpoints are sampled. As such, samples that do contain a particular edge are unlikely to be edge-neighboring to those that do not, thus requiring a more involved and delicate analysis.
In the sequel, we use both the edge and node-privacy amplification theorems to design edge and node-DP property testers in the dense graph model. 

\Cref{thm:induced-edge-amplificationintro} is extremely general as stated, proving amplification for $(\priv,\pdelta)$-differentially private algorithms, where throughout we only use pure-DP. Furthermore, it interpolates nicely between the parameters $\ms$ and $h$ that represent the number and size of the sampled subgraphs. 

To highlight the tradeoff between these parameters, we consider first the canonical tester. The \emph{canonical testing} algorithm of \cite{goldreich2003three} is one of the most remarkable results of dense graph property testing.~\citet{goldreich2003three} proved that any $q$-query tester in the dense graph model can be transformed into a canonical non-adaptive tester that samples $O(q)$ vertices and queries the entire induced subgraph, incurring only a quadratic blow-up to $O(q^2)$ queries.
~As a first application of our edge-DP privacy amplification technique, we design an edge-private canonical tester for dense graphs. Note that the corresponding statement for node-differential privacy follows from standard privacy amplification by subsampling for databases and we state it below for completeness. %
\begin{theorem}[Private canonical tester]\label{thm:dpcanonicaltesterintro}
Consider any graph property $\prop$. Let  \tester be an arbitrary tester for \prop with query complexity $q$. Then,
\begin{itemize}
    \item {(Informal version of \Cref{thm:dpcanonicaltester})} For all $\priv\geq \log(1 + 4\frac{q^2}{\ns^2})$, there is an $\priv$-\emph{edge DP} tester for $\prop$ with query complexity {at most} $2q^2$. 
    \item (Informal version of~\cref{cor:node-dp-canonical}) For all $\priv \geq \log(1 + 4\frac{q}{\ns})$, there is an $\priv$-\emph{node DP} tester for $\prop$ with query complexity {at most} $2q^2$. 
\end{itemize}
\end{theorem}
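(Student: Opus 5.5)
Starting from the Goldreich--Trevisan canonical-tester transformation \cite{goldreich2003three}, I will turn $\tester$ into a non-adaptive tester $\tester_c$. The tester $\tester_c$ samples a uniformly random set of $h=2q$ vertices, queries the whole induced subgraph (at most $\binom{2q}{2}\le 2q^2$ queries), and outputs a decision that depends only on the isomorphism class of the sampled subgraph. I may amplify $\tester_c$ by a constant number of repetitions so that its completeness and soundness error is a small constant, say $1/10$, instead of $1/3$. I will not count this repetition when stating the constant $2q^2$; alternatively, it can be absorbed into $h$.

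The private tester then runs $\tester_c$ on the sampled subgraph and passes its binary decision through Randomized Response with parameter $\priv_0$. The output is correct with probability $e^{\priv_0}/(1+e^{\priv_0})$. Let $\pmech$ denote the map from a sampled $h$-vertex subgraph to the released bit. This map is $\priv_0$-DP with respect to \emph{any} change of its input, so in particular it is $\priv_0$-edge-DP and $\priv_0$-node-DP on the sample. Note, however, that changing one edge of the sample can change the decision arbitrarily, so I cannot rely on the loss $2(h-1)\priv_0$ from \Cref{thm:induced-edge-amplificationintro} being small relative to $\priv_0$ in general. Instead, I will take $\priv_0$ to be a suitable constant, say $\priv_0=\ln 9$, so that Randomized Response errs with probability only $1/10$. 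The total error is then at most $1/10+1/10<1/3$, which gives completeness and soundness.

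For privacy in the edge case, I apply \Cref{thm:induced-edge-amplificationintro} with $\ms=1$ and $h=2q$, which gives $\eta=\frac{2q(2q-1)}{\ns(\ns-1)}\le 4q^2/\ns^2$ up to the $\ns/(\ns-1)$ factor. It produces $\priv'=\log(1+\eta(e^{2(h-1)\priv_0}-1))$. The difficulty is that $e^{2(h-1)\priv_0}$ grows exponentially in $q$. For this reason I expect a cleaner route to be the main obstacle. The better approach is to argue directly, since the output is a single bit. For edge-neighbors $G_1\sim_e G_2$ differing on $\{u,v\}$, the sample distributions coincide on the event that $\{u,v\}$ is not fully sampled. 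That event has probability $1-\eta$. On the complementary event, the bit's distribution shifts by at most a factor $e^{\priv_0}$ in each direction (Randomized Response bounds the ratio of the bit's output probabilities). Hence
\[
\Pr[\pmech'(G_1)=b]\le(1-\eta)p+\eta e^{\priv_0}p=(1+\eta(e^{\priv_0}-1))p,
\]
where $p$ is the matching probability under $G_2$ coupled on the first event. I will have to check that the coupling bound is legitimate via the standard mixture-ratio argument. This yields an amplified privacy parameter of $\log(1+\eta(e^{\priv_0}-1))\le\log(1+8\eta)$.

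Finally, I handle the threshold and the node case. Given a target $\priv\ge\log(1+4q^2/\ns^2)$, I choose $\priv_0$ as large as possible subject to $\log(1+\eta(e^{\priv_0}-1))\le\priv$, i.e.\ $e^{\priv_0}-1\approx(e^\priv-1)/\eta\ge 1$. In the worst case this makes the Randomized Response success probability $e^{\priv_0}/(1+e^{\priv_0})\ge 1/2$ only marginally above one half. I therefore expect that the statement's constant $4$ (or the informal ``at most $2q^2$'') is obtained by folding a constant-factor repetition, or a slightly larger constant in the threshold, into the formal version; I would track this in \Cref{thm:dpcanonicaltester}. The node-DP case is identical, except that the relevant event is that the single differing vertex is sampled, which has probability $h/\ns=2q/\ns$. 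This gives the threshold $\log(1+4q/\ns)$ after the same parameter choice, which is the standard amplification by subsampling without replacement.
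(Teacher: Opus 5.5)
Your proposal is correct and is essentially the paper's argument: the canonical tester followed by $\mathrm{RR}_{\varepsilon_0}$, with privacy coming from the Randomized-Response-specific bound of \Cref{thm:rr-subgraph-tester} rather than the group-privacy bound of \Cref{thm:induced-edge-amplification}. Your mixture-ratio step is what the paper obtains via advanced joint convexity, and it closes because every output probability of the released bit lies in $[m,e^{\varepsilon_0}m]$ with $m=1/(e^{\varepsilon_0}+1)$; hence the two residual terms differ by at most $\eta(e^{\varepsilon_0}-1)m$, which is at most $\eta(e^{\varepsilon_0}-1)$ times the probability that $\mathcal{M}'(G_2)$ outputs $b$. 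The node case goes through ordinary subsampling amplification at rate $h/n$, and accuracy follows from the error bound $\beta/2+1/(e^{\varepsilon_0}+1)$, as in the paper.

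The slack you noticed at the informal threshold is real. The formal \Cref{thm:dpcanonicaltester} resolves it by your second option, requiring a target privacy parameter of at least $\log(1+4q^2/(n^2\beta))$, with $q$ there the number of sampled vertices. Repeating the tester before applying Randomized Response would not help, because it enlarges $\eta$ and so forces a smaller $\varepsilon_0$.
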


To the best of our knowledge, these are the first edge and node-DP canonical testers for graph property testing. 
Notably this theorem implies the existence of private testers with query complexities differing by only a constant factor from their non-private counterparts.

As an application of this tester, we consider the problem of \emph{bipartiteness testing} in the dense graph model. In this problem, given query access to an unknown dense graph $G$, the goal is to distinguish whether $G$ is bipartite, or $G$ is $\dst$-far from bipartite, for some parameter $\dst \in (0,1]$. There exists a canonical tester for bipartiteness that samples a small, fixed number of vertices, and then checks whether the induced subgraph is bipartite. Applying~\cref{thm:dpcanonicaltesterintro}, we see that bipartiteness is privately testable in the dense graph model.
We will also later consider the harder problem of bipartiteness testing in the bounded-degree model.

The private canonical tester (\Cref{thm:dpcanonicaltesterintro}) samples one large subgraph, setting $\ms=1$ and $h=q$ in the application of~\cref{thm:induced-edge-amplificationintro}. On the other end of the spectrum are problems where we might perform many small, or constant-sized tests of the graph, setting $\ms>1$ and $h$ to be constant. This approach to property testing is exemplified by the problem of testing \emph{$H$-freeness}. Here, given query access to an unknown dense graph $G$ and the complete description of a small subgraph $H$, the goal is to distinguish whether $G$ does not contain any copy of $H$ (up to isomorphism, see \Cref{defi:subgraphfreness}) or is far from being $H$-free. 
{Similar to the work of \cite{DBLP:journals/jal/AlonDLRY94,DBLP:journals/combinatorica/AlonFKS00}}, we use the well-studied \emph{graph removal lemmas} from \cite{ConlonF13}, a versatile and widely used tool in combinatorics. Combined with our induced subgraph sampling technique from \cref{thm:induced-edge-amplificationintro}, this gives us a private tester for $H$-freeness testing. 

We use the fact that a dense graph that is $\dst$-far from being $H$-free must contain at least $\prox(\dst,H)\cdot \ns^{|V(H)|}$ copies of $H$, for some parameter $\prox$ (see~\cref{lem:graph-removal} for more details). Importantly, $\prox$ can be enormous, such as in the case of triangle-freeness testing, while $h$ in the same example is only 3.

\begin{theorem}[Private $H$-freeness tester, Informal version of \cref{thm:private-POT-h-freeness}]\label{private-POT-h-freenessintro}
Fix a subgraph $H$, with $h$ vertices and distance parameter $\dst\in(0,1]$. 
For all $\priv=\bigOme{\log(1 + \frac{h^2}{\prox\ns^2})}$, there exists $\priv$-{edge}-DP $H$-freeness $\dst$-tester with query complexity $O\!\left(h^{2}/\prox\right)$. 
\end{theorem}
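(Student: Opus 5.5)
The plan is to build the tester from the standard non-private $H$-freeness tester, add Randomized Response on top, and certify privacy with the edge-DP amplification of \cref{thm:induced-edge-amplificationintro}. Concretely, the tester draws $\ms = \Theta(1/\prox)$ independent uniformly random $h$-vertex subsets $X_1,\dots,X_\ms$. It queries all $\binom{h}{2}$ pairs inside each subset, so it makes $O(h^2/\prox)$ queries in total. It then computes the bit $b = 1$ if and only if some $G[X_i]$ contains a copy of $H$, and finally releases $\RR_{\priv_0}(b)$ for a suitable inner privacy parameter $\priv_0$. In the language of \cref{thm:induced-edge-amplificationintro}, the mechanism $\pmech$ maps a tuple of $\ms$ sampled subgraphs to $\RR_{\priv_0}(b)$. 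Changing one edge flips $b$ by at most one bit, so $\pmech$ is $\priv_0$-edge DP with $\pdelta = 0$.

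Correctness comes first. If $G$ is $H$-free, then $b = 0$ always, and the output is wrong only with the Randomized Response flip probability $1/(1+e^{\priv_0})$. If $G$ is $\dst$-far from $H$-free, the removal lemma (\cref{lem:graph-removal}) gives at least $\prox\ns^{h}$ copies of $H$. Hence a random $h$-set contains a copy with probability at least $c\,\prox$ for some constant $c$, possibly absorbing an $h!$ factor into $\prox$ as the paper's normalization allows. With $\ms = C/\prox$ samples, $\Pr[b=1] \ge 1 - e^{-cC}$. Taking $\priv_0$ a sufficiently large constant makes the flip probability at most, say, $1/10$, and choosing $C$ large gives overall error at most $1/3$ in both cases. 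Boosting is unnecessary, since the constants absorb everything.

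For privacy, apply \cref{thm:induced-edge-amplificationintro} with $\eta = \ms h(h-1)/(\ns(\ns-1)) = \Theta(h^2/(\prox \ns^2))$. This yields $\priv' = \log(1 + \eta(e^{2(h-1)\priv_0}-1))$ and $\pdelta' = 0$. This step is the main obstacle, because the factor $e^{2(h-1)\priv_0}$ is exponential in $h$. If $h$ is treated as a constant, as in the $H$-freeness setting, then $e^{2(h-1)\priv_0}$ is a constant. We then get $\priv' \le \log(1 + K\, h^2/(\prox \ns^2))$ for a constant $K$, which is exactly the regime $\priv = \bigOme{\log(1+\frac{h^2}{\prox\ns^2})}$ claimed. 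If instead we want to avoid the exponential dependence, we can set $\priv_0 = \Theta(1/h)$. Then Randomized Response flips with probability close to $1/2$, and we would recover accuracy by repeating the whole test $O(h^2)$ times and taking a majority vote. That repetition would enlarge $\eta$, so I would first try the constant-$h$ route and state the hidden constant as depending on $h$.

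Finally, for a target $\priv$ above the threshold, the tester is $\priv'$-edge DP with $\priv' \le \priv$, so it is $\priv$-edge DP. For larger $\priv$ we may also increase $\priv_0$, which only improves accuracy; the query complexity stays $O(h^2/\prox)$.
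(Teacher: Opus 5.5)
Your construction and accuracy analysis are fine, but the privacy step does not give the stated threshold. Going through the general amplification theorem (\cref{thm:induced-edge-amplificationintro}) forces group privacy at distance $2(h-1)$. With a constant inner parameter $\varepsilon_0$ you get $\varepsilon'=\log\bigl(1+\eta(e^{2(h-1)\varepsilon_0}-1)\bigr)$, so the threshold is of order $\log\bigl(1+e^{\Theta(h)}h^2/(\phi n^2)\bigr)$. Calling $e^{2(h-1)\varepsilon_0}$ a constant is not legitimate here. The statement tracks $h$ explicitly, in the query complexity and in the threshold $h^2/(\phi n^2)$, so hiding an $e^{\Theta(h)}$ factor inside the $\Omega$ proves a weaker claim than the one stated.

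Solving for $\varepsilon_0$ from the target does not rescue this route. The flip probability is then about $\bigl(\eta/(e^{\varepsilon'}-1)\bigr)^{1/(2(h-1))}$, so the threshold still carries a factor $\beta^{-\Theta(h)}$. Your fallback ($\varepsilon_0=\Theta(1/h)$ plus majority vote) fails for the reason you give.

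The missing idea is that no group privacy is needed when the inner mechanism is randomized response on a single bit. $\mathrm{RR}_{\varepsilon_0}$ is $\varepsilon_0$-DP between \emph{any} two inputs. So in the bound
\[
D_{\alpha'}(\mu\|\mu')\le\eta\bigl[(1-\varphi)D_{\alpha}(\mu_1\|\mu_0)+\varphi D_{\alpha}(\mu_1\|\mu_1')\bigr],\qquad \alpha=e^{\varepsilon_0},
\]
both divergences vanish under an arbitrary coupling of $\omega_1$ with $\omega_0$ and with $\omega_1'$; no distance-bounded coupling is required. This is \cref{thm:rr-subgraph-tester}, which gives $\varepsilon'=\log\bigl(1+\eta(e^{\varepsilon_0}-1)\bigr)$, and it is what the paper uses. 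The paper sets $e^{\varepsilon_0}-1=(e^{\varepsilon'}-1)/\eta$, with $\eta=\ell h(h-1)/(n(n-1))$ and $\ell=\log(2/\beta)/\phi$. It then bounds the flip probability by $1/(e^{\varepsilon_0}+1)\le\eta/(e^{\varepsilon'}-1)$. This gives total error at most $\beta$ once $\varepsilon'\ge\log(1+2\eta/\beta)$, with only absolute constants. With that theorem, even your constant choice (e.g.\ $\varepsilon_0=\log 9$) gives the threshold $\log(1+8\eta)$.

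A minor point: the amplification theorems are stated for the sampler $\mathcal{S}^{\mathrm{sub}}_{\ell,h}$, which draws $\ell$ disjoint blocks, not independent $h$-sets. For the randomized-response argument this is harmless. It only uses the decomposition by the event that some sampled set contains both endpoints of the differing edge, and by a union bound that event still has probability at most $\ell h(h-1)/(n(n-1))$.
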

\noindent In comparison, applying the canonical tester would increase the query complexity to $h^2/\prox^2$ and require a lower bound of $\priv$ of the order of $\bigOme{\log(1 + \frac{h^2}{\prox^2\ns^2})}$.

Interestingly, our $H$-freeness tester is an application of a \emph{proximity-oblivious} tester, in the sense that the final $H$-freeness tester is composed of applying a basic tester multiple times, and the basic tester is independent (aka. oblivious) of the proximity parameter $\dst$. These testers seem particularly amenable to privacy amplification, and we will see another application of this framework in the bounded-degree setting.

\paragraph{Private property testing in the bounded-degree model. }We state our results for bounded-degree graphs next. The input is an $n$-vertex graph of maximum degree at most $d$, accessed through \emph{neighbor queries}: given a vertex $v$ and an index $i\in[d]$, the algorithm returns the $i$th neighbor of $v$ (according to some fixed ordering of the neighbors). Distance from a graph property is measured relative to the $O(dn)$ possible edge incidences. Since there is no canonical tester in this model, we study two algorithmic techniques widely used to design testers in this model: random walk based testers and BFS (breadth first search)-based testers. 

We start by discussing the design of private random walk based testers in the bounded-degree model. The random walk approach is ubiquitous in designing testers for bounded-degree graphs since the graphs can only be explored locally via the neighbors of the vertices. 
{Although random walks are generated adaptively, we mainly reason about the transcript generated by these random walks, which is essentially the list of all the vertices in the order they were visited by a sequence of random walks. }
Unlike the privacy amplification result for induced subgraph sampling in the dense graph model, where our result applied more generally, our privacy amplification result for random walks applies only to algorithms that output a binary decision, such as graph property testers. The main challenge towards proving a more general statement is that a single edge change can cause the subsequent random walk transcripts to diverge significantly. Without additional assumptions on the input graphs or random walks, it is hard to bound the possible scale of divergence in a reasonable manner for privacy amplification. Restricting to binary decisions lets us avoid comparing these transcripts directly, consequently, we state the following result in terms of a tester. We note that this result holds only for edge-DP\footnote{Though, in the $\maxdeg$-bounded degree model, $\priv$-edge differential privacy combined with group-privacy~(\cref{def:group-privacy}) implies $\priv'$-node DP where $\priv' = \maxdeg\cdot \priv$.}, vs. our other amplification results which hold for both edge and node-DP.

\begin{theorem}[Privacy amplification of random walks, Informal version of \Cref{thm:dpamplificationrandwalk}]\label{thm:dpamplificationrandwalkintro}
Let $G=([\ns], E)$ be a bounded degree graph. Suppose $\pmech$ is the application of randomized response with privacy parameter $\priv$ (see~\cref{def:randomized-response}) to the output of a tester {with query complexity $\tsize$} that samples random walks.\footnote{For our purpose, $\tsize \leq n$ since we can learn the graph completely with $O(n)$ queries.} Then running \pmech on $G$ yields an $\priv'$-edge DP algorithm, where 
\[
\priv' = \log \left( 1 + \frac{2\tsize}{\ns}(e^\priv - 1) \right).
\]
\end{theorem}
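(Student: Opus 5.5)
We use a coupling argument. Fix edge-neighboring graphs $G_1\sim_e G_2$ that differ in the single edge $e=\{u,v\}$. We couple the runs of the tester on $G_1$ and on $G_2$ through the same internal randomness $\omega$. This randomness consists of the random start vertices and the random neighbor indices chosen at each step. Let $B$ be the event, over $\omega$, that the transcript generated on $G_1$ ever visits $u$ or $v$.

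\textbf{Step 1: off $B$ the two runs agree.} If the walk on $G_1$ never visits $u$ or $v$, then every neighbor query it makes concerns a vertex whose adjacency list is identical in $G_1$ and $G_2$. By induction on the queries, the run on $G_2$ with the same $\omega$ makes exactly the same queries and receives the same answers. Hence the transcripts coincide, and so does the tester's decision. The event $B$ is symmetric: a run on $G_2$ that avoids $u,v$ also avoids them on $G_1$. So $B$ has the same probability $p$ under both graphs.

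\textbf{Step 2: bound $p$.} We want $p\le 2\tsize/\ns$. This is where the assumption that walks start at uniformly random vertices matters. Each walk is a stationary, or near-stationary, process only for regular graphs, so a crude union bound over positions does not work directly in general bounded-degree graphs. I expect this step to be the main obstacle. The first approach I would try is to give each of the $\tsize$ queried positions probability at most $2/\ns$ of hitting $\{u,v\}$. Concretely, I would assume, as in the formal version, that the tester's walks are lazy walks on the $d$-regularized graph, with self-loops padding each vertex to degree $d$. The uniform distribution is stationary for such walks. So, starting from a uniform vertex, each step's position is uniform, and a union bound gives $p\le \tsize\cdot 2/\ns$. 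If the formal statement uses a different walk, I would instead restrict the bound to the start vertices, or absorb the discrepancy into the hypotheses.

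\textbf{Step 3: amplification.} Write $a_i=\Pr[\text{tester outputs accept on } G_i]$. Step 1 gives $a_i=(1-p)c+p\,b_i$, where $c$ is the common acceptance probability off $B$ and $b_i\in[0,1]$. Randomized response with parameter $\priv$ maps the acceptance probability $a$ to $\Pr[\text{output } 1]=q_0+(1-2q_0)a$, where $q_0=1/(1+e^\priv)$. The ratio of these output probabilities under $G_1$ and $G_2$ is then a ratio of affine functions that share the $(1-p)$ component. Rather than compute it directly, I would invoke the standard subsampling-style amplification lemma. Conditioned on $B$, the output distributions are arbitrary, but randomized response ensures each is within $e^{\priv}$ of anything else, including the common part: both are mixtures of $\mathrm{RR}(0)$ and $\mathrm{RR}(1)$, which are $\priv$-close. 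A common mixture with weight $1-p$ plus arbitrary $\priv$-close components with weight $p$ gives ratio at most $1+p(e^\priv-1)$; this is the advanced joint convexity / Balle et al.\ argument. Substituting $p\le 2\tsize/\ns$ yields $\priv'=\log(1+\frac{2\tsize}{\ns}(e^\priv-1))$.
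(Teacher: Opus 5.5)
Your proof is correct and follows essentially the same route as the paper. You bound the probability that some walk position hits an endpoint of $e$ by $2\tsize/\ns$, using the uniform marginals of the regularized walk and a union bound. You then split the two transcript distributions into a common part plus residuals of that weight, and apply advanced joint convexity, with randomized response making every residual comparison cost nothing.

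The only difference is in how the decomposition is obtained. You condition on the exposure event under a common-randomness coupling, and you prove that this event is identical for both graphs; this also rigorously justifies the paper's unproved claim that $\operatorname{TV}(\omega,\omega')\le\Pr[H]$. The paper instead uses the maximal coupling with $\eta=\operatorname{TV}(\omega,\omega')\le 2\tsize/\ns$ and then monotonicity in $\eta$.
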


As an application of our privacy amplification result for regularized random walks, we revisit the problem of bipartiteness testing, this time in the bounded-degree model. Bipartiteness testing in this model requires $\widetilde{\Theta}(\sqrt{\ns})$ queries~\cite{goldreich1999a, goldreich2002property}. This is in contrast to the dense graph model, where the query complexity is independent of $\ns$. The bipartiteness tester is notable for performing many long random walks in $G$. Using our privacy amplification result, we obtain the first private bipartiteness tester for bounded-degree graphs.

\begin{theorem}[Private bipartiteness tester in bounded-degree graphs, Informal version of \Cref{thm:dpbipartitebounded}]\label{thm:dpbipartiteboundedintro}
For all $\priv=\widetilde{\Omega}_{\dst,\maxdeg}\left( {\log(1 + 1/\sqrt{\ns})} \right)$, there exists $\priv$-edge DP bipartiteness tester with query complexity $\widetilde{O}_{\dst,\maxdeg}(\sqrt{\ns})$. Notably this is within constant factors of the non-private query complexity bound.
\end{theorem}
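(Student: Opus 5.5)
We would obtain the private tester by taking the non-private bipartiteness tester of Goldreich and Ron~\cite{goldreich1999a}, boosting it so it errs with small constant probability, and composing its binary output with randomized response. \Cref{thm:dpamplificationrandwalkintro} then converts the randomized-response parameter into the final edge-DP guarantee. The Goldreich--Ron tester picks $O(1/\dst)$ start vertices and from each performs $\widetilde{O}_{\dst,\maxdeg}(\sqrt{\ns})$ lazy (regularized) random walks of polylogarithmic length. It rejects if it finds an odd cycle, i.e.\ two walks from the same start reaching a common vertex with different parities. It has one-sided error, total query complexity $\tsize=\widetilde{O}_{\dst,\maxdeg}(\sqrt{\ns})$, and rejects graphs that are $\dst$-far with probability at least $2/3$. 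First we amplify soundness by a constant number of independent repetitions, rejecting if any repetition rejects. This keeps completeness error $0$, makes soundness error at most $1/12$, and changes $\tsize$ only by a constant factor.

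Next we define $\pmech$: run the amplified tester and apply randomized response with parameter $\priv_0$ (\cref{def:randomized-response}). With probability $\frac{e^{\priv_0}}{1+e^{\priv_0}}$ it reports the true answer, and otherwise flips it. Its error on either side is at most $\frac{1}{1+e^{\priv_0}}+\frac1{12}$. Hence any constant $\priv_0\ge\log 3$ keeps both error probabilities below $1/3$, so $\pmech$ is a valid two-sided tester. We fix $\priv_0$ to be a suitable constant, chosen to fit the target $\priv$ as described below.

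By \Cref{thm:dpamplificationrandwalkintro}, $\pmech$ is $\priv'$-edge DP with $\priv'=\log\bigl(1+\frac{2\tsize}{\ns}(e^{\priv_0}-1)\bigr)$. Substituting $\tsize=C\sqrt{\ns}\,\mathrm{polylog}(\ns)$, where $C$ depends on $\dst,\maxdeg$, gives $\priv'=\log\bigl(1+\widetilde{O}_{\dst,\maxdeg}(1/\sqrt{\ns})\bigr)$ for constant $\priv_0$. So for any target $\priv$ at least this quantity, which is the stated threshold $\widetilde{\Omega}_{\dst,\maxdeg}(\log(1+1/\sqrt{\ns}))$, the mechanism is $\priv$-edge DP. Since $\priv'$ is monotone in $\priv_0$, we could instead pick $\priv_0$ larger to exactly match $\priv$, which only improves accuracy. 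The query complexity stays $\widetilde{O}_{\dst,\maxdeg}(\sqrt{\ns})$.

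The main obstacle is verifying that the Goldreich--Ron tester fits the hypotheses of \Cref{thm:dpamplificationrandwalkintro}. That theorem needs the tester's randomness to consist of random walks whose transcript length $\tsize$ bounds the probability that the differing edge is ever touched. The concern is that the lazy, regularized walk in the bounded-degree model (self-loops padding degree to $\maxdeg$) might traverse a given edge with probability exceeding $O(\tsize/\ns)$. The standard analysis treats each step as uniform over the $\maxdeg$ slots and the start vertices as uniform, so each queried incidence hits the differing edge with probability $O(1/\ns)$. We would check this carefully, together with the dependence of the polylog factors on $\maxdeg$ and $\dst$, to confirm that the factor $2\tsize/\ns$ applies.
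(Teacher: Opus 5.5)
Your proposal is correct and follows essentially the same route as the paper: run the Goldreich--Ron tester (regularized lazy walks from uniformly random start vertices), compose its binary output with randomized response, and invoke the random-walk amplification theorem. That theorem's proof (each transcript position is marginally uniform, plus a union bound over the two endpoints of the differing edge) already settles the hypothesis check you flag as the main obstacle. The only difference is cosmetic: the paper fixes the target $\varepsilon$ and sets $\varepsilon_0=\log\bigl(1+\frac{n}{2T}(e^{\varepsilon}-1)\bigr)$, bounding the flip probability by $\frac{2T}{n(e^{\varepsilon}-1)}$ to get the threshold $\varepsilon\ge\log(1+12T/n)$, whereas you fix a constant $\varepsilon_0$ and appeal to monotonicity, which yields the same $\log\bigl(1+\widetilde{O}(1/\sqrt{n})\bigr)$ threshold.
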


{We use the bipartiteness tester of~\cite{goldreich1999a,goldreich2002property} and not the state-of-the-art tester of~\cite{fei2026}. This is due to an incompatibility between our privacy amplification result and the specific kind of random walk they use.
For a full discussion, we refer interested readers to~\cref{sec:random-walks}.} 

We end by discussing the BFS-based graph testers. Here, the approach is to start BFS of sufficient depth from a few randomly sampled vertices and decide based on the explored subgraphs. This is sometimes known as \emph{$k$-disc sampling}. We first design a privacy amplification technique for $k$-disc sampling. {We note that similar to our induced subgraph sampling privacy amplification result, the following privacy amplification theorem applies more generally to any $(\eps,\delta)$-DP algorithm that uses the $k$-disc sampling scheme. 
}

\begin{theorem}[Privacy amplification for \texorpdfstring{$k$}{k}-disc sampling, Informal version of \Cref{thm:kdiscamplification}]\label{thm:kdiscamplificationintro}
Let $G=([\ns],E)$ be a $\maxdeg$-bounded-degree graph. Suppose $\pmech$ is an $(\priv, \pdelta)$\emph{-differentially private} algorithm that samples $\ms$-tuples of $\rd$-discs.
Then running $\pmech$ on $G$ yields an $(\priv', \pdelta')$-\emph{edge and node} DP algorithm, where 
\[
\priv'=K\log\left( 1 + \frac{\ms}{\ns}(e^{ \priv }-1) \right) \qquad \qquad \pdelta'=\pdelta\cdot K \frac{\ms}{\ns}\left(  \frac{\ms}{\ns}(e^{ \priv }-1)+1 \right)^{K},
\]
and $K\leq 2\maxdeg^{\rd}+1$.
\end{theorem}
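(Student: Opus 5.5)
Let $G_1\sim G_2$ be neighboring graphs, either edge-neighboring or node-neighboring. The plan is to reduce the statement to the standard amplification-by-subsampling bound for a single sampled element, applied once per \emph{affected root}, and then combine the $K$ affected roots by composition.

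\textbf{Step 1: bound the affected roots.} Call a root vertex $r$ \emph{affected} if its $\rd$-disc differs between $G_1$ and $G_2$.
\begin{itemize}
\item For an edge change $\{u,v\}$, the disc of $r$ can only change if $u$ or $v$ lies within distance $\rd$ of $r$ in $G_1$ or in $G_2$. So $r$ lies in the $\rd$-ball around $u$ or around $v$.
\item For a node change at $v$, $r$ must lie in the $\rd$-ball around $v$ in either graph.
\end{itemize}
In a $\maxdeg$-bounded-degree graph, the number of such roots is at most $\sum_{i\le \rd}\maxdeg^i$ per endpoint, taken in the union of both graphs. With some care, and with $\rd$-discs defined via BFS up to depth $\rd$, this gives $K\le 2\maxdeg^{\rd}+1$ affected roots. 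Every root outside this set $A$ has an identical disc in $G_1$ and $G_2$.

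\textbf{Step 2: couple the samples.} The algorithm samples $\ms$ roots uniformly, and its output depends only on the tuple of discs. I will write the sampler as first choosing the root tuple, then revealing the discs. The idea is to handle the affected roots one at a time, via a hybrid argument. Fix an ordering $a_1,\dots,a_K$ of $A$ and define hybrid graphs-of-discs $D_0,\dots,D_K$, where $D_j$ uses $G_2$'s disc for roots $a_1,\dots,a_j$ and $G_1$'s disc otherwise. Note that $D_j$ need not correspond to an actual graph. This is fine, because $\pmech$ is DP on the tuple of discs: I will read $\pmech$'s guarantee as holding for disc tuples differing in one coordinate.

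\textbf{Step 3: one-root amplification, then composition.} Consecutive hybrids differ only on the disc of a single root $a_j$. Each sampled root hits $a_j$ with probability $1/\ns$, so by a union bound the tuple contains $a_j$ with probability at most $\ms/\ns$.
\begin{itemize}
\item If we are willing to sample with replacement, the root may appear multiple times. To handle this, I will instead use the standard Poisson/with-replacement amplification lemma. Conditional on the positions where $a_j$ appears, the output distributions are related by $\pmech$'s $(\priv,\pdelta)$ guarantee applied to tuples differing in those coordinates. I would try to treat these coordinates as a single "record", i.e., assume the DP guarantee of $\pmech$ is with respect to changing one disc value wherever it appears.
\item The standard subsampling theorem then gives the bound $(\log(1+\frac{\ms}{\ns}(e^\priv-1)),\ \frac{\ms}{\ns}\pdelta)$ between consecutive hybrids.
\end{itemize}
Chaining $K$ hybrids via the group-privacy-style composition yields $\priv'=K\log(1+\frac{\ms}{\ns}(e^\priv-1))$. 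The $\delta$ term accumulates as $\sum_j e^{(j-1)\priv_0}\delta_0$, which is bounded by $K\frac{\ms}{\ns}\pdelta\,(1+\frac{\ms}{\ns}(e^\priv-1))^K$. This matches the stated $\pdelta'$.

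\textbf{Main obstacle.} The hard step is justifying the per-root amplification when one root appears several times in the $\ms$-tuple, and making precise that $\pmech$'s privacy is with respect to changing one root's disc. I would first try to define neighboring disc-tuples as those differing in all positions carrying one root label. If that fails, I would fall back on a coupling argument that conditions on the multiset of root positions.
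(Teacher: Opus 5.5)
Your proposal is correct and follows essentially the paper's route. The paper also treats the $\ns$ rooted $\rd$-discs as a database on which neighboring graphs differ in at most $K\le 2\maxdeg^{\rd}+1$ entries. It applies amplification by subsampling without replacement to get $\bigl(\log(1+\frac{\ms}{\ns}(e^{\priv}-1)),\frac{\ms}{\ns}\pdelta\bigr)$, and then invokes group privacy at distance $K$, which your hybrid chain simply unrolls. The ``main obstacle'' you flag does not arise, because $\kdsampler$ draws the $\ms$ roots \emph{without} replacement: consecutive hybrids therefore give sampled tuples differing in at most one coordinate, and the standard without-replacement theorem of Balle et al.\ applies directly.
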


As an application of our privacy amplification technique for $k$-disc sampling, we revisit the problem of $H$-freeness testing, but in the bounded-degree model. Since we can design a proximity-oblivious tester for $H$-freeness testing using the $k$-disc sampling technique, we directly obtain a private $H$-freeness tester in the bounded-degree model.

\begin{theorem}[Private \texorpdfstring{$H$}{H}-freeness tester in bounded-degree graphs, Informal version of \Cref{thm:dp-h-freeness}]\label{thm:dp-h-freenessintro}
Let $G=([\ns],E)$ be a $\maxdeg$-bounded degree graph, $H$ be a graph with $h$ vertices and radius at most $\rd$ (see~\cref{def:k-disc}). Let $K=2\maxdeg^\rd+1$ be an upper bound on the number of edges or nodes in one disc. For all $\priv\geq 8K/(\dst\ns)$, there exist $\priv$-{node and edge} DP $H$-freeness \dst-testers with query complexity $8K/\dst$. This is within a constant factor of the query complexity of the non-private tester.
\end{theorem}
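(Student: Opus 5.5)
The plan is to combine a proximity-oblivious $H$-freeness tester based on $\rd$-disc sampling with randomized response, and then invoke \Cref{thm:kdiscamplificationintro} with $\pdelta=0$.

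\textbf{Basic test.} Pick a uniformly random root $v$ and explore its $\rd$-disc with BFS. This costs at most $K=2\maxdeg^{\rd}+1$ queries, since a disc has at most $K$ nodes or edges. Reject if the disc contains a copy of $H$ rooted so that it fits inside a ball of radius $\rd$; the radius bound on $H$ guarantees that every copy of $H$ lies entirely within the $\rd$-disc of a suitable center vertex. The test never rejects an $H$-free graph.

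\textbf{Detection probability.} If $G$ is $\dst$-far from $H$-free, then more than $\dst\maxdeg\ns/2$ edges must be deleted to destroy all copies of $H$. Greedily pick a maximal edge-disjoint family of copies of $H$. Deleting one edge from each copy cannot leave $G$ $H$-free, by maximality and farness. Therefore the family has $\Omega(\dst\maxdeg\ns)$ copies. Each copy has a center vertex, and a fixed vertex can be the center of at most $O(K\maxdeg)$ edge-disjoint copies, since each copy uses an edge inside its disc. Hence at least $\dst\ns/K'$ distinct centers detect a copy, where $K'$ is of order $K$; I will tune the constants to match $K$. So one basic test rejects with probability $p\ge \dst/(cK)$ for a small constant $c$.

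\textbf{Amplified tester.} Repeat the basic test on $\ms=\Theta(K/\dst)$ independent roots. The total query count is $\ms K$. To land exactly on the stated $8K/\dst$ queries, I will count "queries" as disc samples or absorb factors into $K$, following the paper's convention. Let the algorithm $\pmech$ receive the $\ms$-tuple of discs and compute the indicator "some disc contains $H$". It then outputs this bit through randomized response with parameter $\priv_0$ a suitable constant, say $\priv_0=1$. Randomized response is $\priv_0$-DP on its input tuple, so \Cref{thm:kdiscamplificationintro} gives
\[
\priv' = K\log\Bigl(1+\tfrac{\ms}{\ns}(e^{\priv_0}-1)\Bigr) \le K\tfrac{\ms}{\ns}(e-1) = O\Bigl(\tfrac{K^2}{\dst\ns}\Bigr)
\]
for both node- and edge-DP. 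To reach a target $\priv$, I will take $\priv_0$ as large as allowed. Solving $K\log(1+\frac{\ms}{\ns}(e^{\priv_0}-1))=\priv$ gives $e^{\priv_0}-1=\frac{\ns}{\ms}(e^{\priv/K}-1)$. This $\priv_0$ is a constant whenever $\priv$ is large enough, which is where the hypothesis $\priv\ge 8K/(\dst\ns)$ enters. The intended reading seems to be $\ms=8K/\dst$ together with a per-disc amplification factor that is linear rather than quadratic in $K$, so I will check the exact exponent $K$ in the formal statement of \Cref{thm:kdiscamplification}.

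\textbf{Correctness.} In the completeness case, the output is "accept" except when randomized response flips it, which happens with probability $1/(1+e^{\priv_0})$. In the soundness case, the underlying tester rejects with probability $1-(1-p)^{\ms}\ge 1-e^{-8/c}$. We need both error probabilities at most $1/3$ after the flip, which requires $\priv_0\gtrsim \log 2$ plus slack.

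\textbf{Main obstacle.} The hardest step is matching constants. The requirement $\priv_0\ge$ a constant translates, via $e^{\priv/K}-1\approx\priv/K$, into $\priv\gtrsim K\ms/\ns=8K^2/(\dst\ns)$. This is off by a factor $K$ from the stated threshold $8K/(\dst\ns)$. I expect the resolution is either that the formal amplification theorem's factor $K$ already accounts for this, or that the statement intends $\priv_0$ to scale with $\priv$. I would first recheck the precise form of \Cref{thm:kdiscamplification} and adjust $\ms$ or $\priv_0$ accordingly.
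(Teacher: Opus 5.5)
\textbf{There is a genuine gap: the detection-probability bound loses a factor $K$, so the proposal does not reach the stated bounds.}

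Your overall architecture matches the paper's proof:
\begin{itemize}
\item run the $k$-disc proximity-oblivious test on $m$ roots;
\item pass the bit through $\mathrm{RR}_{\varepsilon_0}$ with $e^{\varepsilon_0}-1=\frac{n}{m}(e^{\varepsilon/K}-1)$;
\item invoke the $k$-disc amplification theorem with $\delta=0$;
\item bound the flip probability by $\frac{1}{e^{\varepsilon_0}+1}\le\frac{m}{n(e^{\varepsilon/K}-1)}\le\frac{mK}{n\varepsilon}$.
\end{itemize}

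The factor-$K$ discrepancy you flag at the end is real, and neither of your proposed resolutions removes it. The exponent $K$ in the amplification theorem is the group-privacy cost built into that theorem; rereading it will not make it disappear. Reinterpreting ``queries'' as ``discs'' also does not match the statement. There, $8K/\gamma$ counts BFS queries over $O(1/\gamma)$ discs, each of size at most $K$.

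Your bound $p\ge\gamma/(cK)$ forces $m=\Theta(K/\gamma)$ discs. That gives $\Theta(K^2/\gamma)$ queries and a privacy threshold $\varepsilon\gtrsim mK/n=\Theta(K^2/(\gamma n))$, so the theorem is not proved. The inference inside that step is also garbled. Maximality of an edge-disjoint family $\mathcal F$ means that deleting \emph{all} edges of its copies makes $G$ $H$-free, not one edge per copy. That gives $|\mathcal F|\ge\gamma dn/|E(H)|$, which loses a further factor $|E(H)|$.

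\textbf{The missing idea.} A single random root must detect a copy with probability $\Omega(\gamma)$, independent of $K$. The paper takes this from Goldreich's book as a lemma (detection probability at least $\gamma/2$). A direct proof is short: you charge the deletion cost to the centers directly, instead of counting how many copies can share a center.
\begin{itemize}
\item Let $C$ be the set of vertices whose $k$-disc contains a copy of $H$.
\item Every copy of $H$ has its center in $C$. Radius at most $k$ places all of the copy's vertices within distance $k$ of its center. Since discs are induced subgraphs, they also contain all of the copy's edges.
\item Because $H$ is connected with at least one edge, every copy uses an edge incident to its center. So deleting the at most $d|C|$ edges incident to $C$ makes $G$ $H$-free.
\item Farness then forces $d|C|>\gamma dn$, i.e.\ $\Pr[\text{root}\in C]>\gamma$.
\end{itemize}

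With this bound, $m=\log(2/\beta)/\gamma$ discs give soundness error $(1-\gamma)^m\le\beta/2$. The query count is $mK=O(K\log(1/\beta)/\gamma)$. Requiring $\frac{mK}{n\varepsilon}\le\beta/2$ yields the paper's threshold $\varepsilon\ge\frac{2K\log(2/\beta)}{\gamma\beta n}$, which is $\Theta(K/(\gamma n))$ for constant $\beta$, as stated.
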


Besides the problem of $H$-freeness testing, we also consider \emph{hyperfinite} graphs. A remarkable result of \cite{newman2013every} proved that any property of hyperfinite graphs can be (non-privately) tested by performing a number of queries that is independent of the size of the input graph, such properties are known as \emph{testable} properties. Interestingly, \cite{newman2013every} uses a $\rd$-disc-sampling based method to design their tester. By virtue of our privacy amplification result for $\rd$-disc-sampling based testers, we show that every property of hyperfinite graphs is privately testable. {We note that the following result shows the existence of a tester that is private regardless of the input graph being hyperfinite\footnote{For the bounded degree model, we assume that the input graphs always satisfy the bounded degree, and we argue about privacy with this assumption in mind. See \Cref{sec:discussion} for a more detailed discussion.}, but is only accurate if the input graph is hyperfinite. Unlike the non-private result, we do not check if the input graph is hyperfinite and thus our result is slightly weaker than the non-private variant.}

\begin{theorem}[{Private} testability of hyperfinite graph properties, Informal version of \Cref{thm:dphyperfinitetestable}]\label{thm:dphyperfinitetestableintro}

{Let $G=([\ns],E)$ be a $\maxdeg$-bounded degree graph and $\priv>\widetilde{\Omega}_{\maxdeg,\rho}({1/\ns})$. If $G$ is $\rho$-hyperfinite, then any property of $G$ is $\priv$-node and edge-DP testable with constant failure probability. If $G$ is not hyperfinite, then the algorithm is private, but does not have accuracy guarantees.}

\end{theorem}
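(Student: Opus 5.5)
We would obtain \Cref{thm:dphyperfinitetestableintro} by combining the constant-query tester of \cite{newman2013every} with Randomized Response and the $\rd$-disc amplification of \Cref{thm:kdiscamplificationintro}.

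\textbf{Step 1: the non-private tester.} For a $\rho$-hyperfinite class and a property $\prop$, \cite{newman2013every} give a tester that works as follows. It fixes a radius $\rd=\rd(\dst,\maxdeg,\rho)$ and a number $\ms=\ms(\dst,\maxdeg,\rho)$ of roots. It samples $\ms$ uniformly random vertices and explores their $\rd$-discs. It computes the empirical histogram of isomorphism types of rooted $\rd$-discs. It accepts iff this histogram is close in $\ell_1$ to the histogram of some graph in $\prop$. We take this tester verbatim and decrease its failure probability to a small constant $1/10$ by constant-factor repetition. The repetition is again an $\ms'$-tuple of $\rd$-discs with $\ms'=O(\ms)$. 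The query complexity is $O_{\maxdeg,\rho,\dst}(1)$, and the tester's randomness consists only of the uniformly chosen roots followed by deterministic BFS exploration. This is exactly the sampling scheme covered by \Cref{thm:kdiscamplificationintro}.

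\textbf{Step 2: privatize the decision.} We feed the output bit to Randomized Response with parameter $\priv_0$ (\cref{def:randomized-response}). Randomized Response is $\priv_0$-DP as a function of the sampled tuple of discs, for any neighboring notion, since the bit is an arbitrary function of the sample. By \Cref{thm:kdiscamplificationintro} with $\pdelta=0$, the composed algorithm is $\priv'$-node and edge DP, where
\[
\priv'=K\log\!\left(1+\tfrac{\ms'}{\ns}(e^{\priv_0}-1)\right),\qquad K\le 2\maxdeg^{\rd}+1 .
\]
We choose $\priv_0$ a constant, say $\priv_0=\log 20$, so that the flip probability $1/(1+e^{\priv_0})\le 1/20$. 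Then $\priv'\le K\cdot 19\ms'/\ns=O_{\maxdeg,\rho,\dst}(1/\ns)$. For any target $\priv$ at least this quantity, which is the claimed $\widetilde{\Omega}_{\maxdeg,\rho}(1/\ns)$ threshold, the mechanism is $\priv$-private. Larger $\priv$ only helps, since DP is monotone. When $\priv$ is smaller we could alternatively solve for $\priv_0$, but that is not needed.

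\textbf{Step 3: accuracy.} If $G$ is hyperfinite, the tester errs with probability at most $1/10$ and RR flips with probability at most $1/20$, so the total error is at most $3/20<1/3$. Privacy never used hyperfiniteness. \Cref{thm:kdiscamplificationintro} holds for every $\maxdeg$-bounded graph, which gives the ``private but not accurate'' clause.

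\textbf{Main obstacle.} The delicate point is checking that the Newman--Sohler tester really fits the hypotheses of \Cref{thm:kdiscamplificationintro}. Its decision must depend only on the $\ms$-tuple of $\rd$-discs rooted at independent uniform vertices, with no adaptive choices and no verification step that queries beyond the discs, and the disc exploration must not depend on neighbor-ordering artifacts. We also need the parameters $\rd$ and $\ms$ to depend on $\dst,\maxdeg,\rho$ only, so that $K$ is a constant and the $1/\ns$ dependence of $\priv'$ survives. We would restate their tester in this canonical ``sample discs, then decide'' form, using the fact that their algorithm only estimates the disc-frequency vector. If their original tester also checks hyperfiniteness, we would drop that part, which is exactly why our result is weaker.
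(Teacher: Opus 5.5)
Your proposal is correct and matches the paper's argument: run only the disc-frequency (second) phase of the Newman--Sohler tester without the hyperfiniteness check, apply randomized response to its output bit, and invoke the $\rd$-disc amplification theorem with the constant $K\le 2\maxdeg^{\rd}+1$. The only differences are that you fix $\priv_0=\log 20$ and appeal to monotonicity of DP, where the paper solves $\priv_0=\log\bigl(1+\frac{\ns}{s}(e^{\priv'/K}-1)\bigr)$ for the target $\priv'$, and that you boost accuracy by independent repetitions, which can reuse roots and so no longer give a without-replacement $\rd$-disc sample, where it is cleaner to enlarge the single sample size as in the paper's choice $s=\lceil\eta^{-2}N^2\log(40N/\failp)\rceil$.
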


\subsection{Technical Overview}

{Our private testers are designed by first running the corresponding non-private testers and then applying Randomized Response $RR_\eps$ (\Cref{def:randomized-response}) to the output (See Algorithm~\ref{alg:meta-tester-private}). Since $RR_\eps$ satisfies $\eps$-DP, the resulting private testers also satisfy $\eps$-DP. However, the accuracy of these testers may not necessarily compare to the accuracy of non-private testers. For this reason, we study privacy amplification, so that our resulting private testers satisfy $\eps'$-DP where $\eps' \ll \eps$, thus resulting in comparable accuracy to non-private testers. }

\begin{algorithm}\label{alg:meta-tester-private}
\caption{Private tester template for graphs}
\KwIn{Query access to an unknown graph $G$, $\dst$-tester $\mathcal{T}$ for property $\property$, privacy parameter $\eps$.}
\KwOut{\texttt{Accept} or \texttt{Reject}}

Run the non-private tester $\mathcal{T}$ for $\property$.

Apply $RR_\eps$ to the output of $\mathcal{T}$ and return the output.
\end{algorithm}

Our main technical contribution is a collection of privacy amplification theorems tailored to sampling procedures that arise naturally in sublinear graph algorithms. Note that privacy amplification by subsampling has been considered previously for graph algorithms, but only for sampling schemes that permit simple analogies to existing amplification results on datasets. These include an edge-privacy amplification by edge-subsampling lemma of~\cite{zhou2025}, and a node-privacy amplification by node-subsampling lemma of~\cite{dwork2026efficientprivatepropertytesting}. {Our work, in contrast, establishes privacy amplification for graph-specific sampling procedures where the neighboring relation on input graphs does not translate directly to the sampled objects, including adaptive procedures such as random walks.} 

\paragraph{Classical vs. graph privacy amplification. }Classical privacy amplification by subsampling on datasets relies on a relatively simple relationship between the neighboring relation on the original dataset and that on the subsample --- neighboring datasets differ in one record, and the subsampled datasets differ only if that record is selected. More complex cases can include considering the probability that a record is selected multiple times while sampling with replacement. The privacy-amplification decomposition of~\cite{balle2020a} (stated as \emph{advanced joint convexity} in \Cref{advanced-joint-convexity}) formalizes this by decomposing the subsampling distributions on neighboring datasets into mixtures comprised of a \emph{common conditional distribution} where the differing record \emph{is not} sampled, together with \emph{residual conditional distributions} corresponding to the event that the differing record \emph{is} sampled. 
By necessity, the outputs of a private algorithm run on two neighboring datasets must be indistinguishable in the instance that the differing record is not sampled. Then intuitively the overall privacy loss should be proportional to the probability with which we sample the said element.

For graph sampling procedures, changing a single edge or node can affect the sampling distribution in unexpected, or hard to quantify ways; consider an edge that connects two otherwise disconnected components in a graph, removing this edge could change the trajectory of a very large number of random walks. Our analyses therefore identify an appropriate mixture decomposition into common and residual distributions for each graph sampling primitive and establish the additional relationships between these distributions (through couplings and group privacy) needed to invoke the same amplification framework of~\cite{balle2020a}. %

\paragraph{Induced subgraph sampling. }Consider first the dense-graph setting, where we sample $\ms$ disjoint sets or blocks of $h$ vertices and observe the corresponding induced subgraphs. For two edge-neighboring $\ns$-vertex graphs $G,G'$ that differ on an edge $e=\{x_1,x_2\}$, the induced-subgraph samples are identical unless $e$ is ``exposed'', i.e., both endpoints of $e$ appear in the same sampled block. This event occurs with probability $\eta=\frac{\ms h(h-1)}{n(n-1)}$. For simplicity of exposition, we will say that $e$ is present in $G$, and absent in $G'$.
As in the classical setting, conditioning on whether $e$ is exposed yields (1) a common distribution corresponding to non-exposure of $e$, and (2) two residual distributions corresponding to samples that expose the presence of $e$ in $G$ and those same samples that expose the absence of $e$ in $G'$.
The two residual distributions essentially couple themselves, for every sample that exposes $e$ in $G$, the same choice of vertices will expose the absence of $e$ in $G'$. 
The additional difficulty in applying the privacy-amplification framework of~\cite{balle2020a} to graphs arises when we attempt to bound the distance between samples from the common conditional distribution to those in either of the residual distributions.

Unlike in classical subsampling, these two distributions need not be edge-neighboring, i.e., differing by a single edge. Conditioned on $e=\{x_1,x_2\}$ being exposed, a sampled block from one of the residuals contains both $x_1$ and $x_2$, whereas its nearest neighbor in the common conditional distribution may replace $x_2$ by another vertex. This can change all $h-1$ edges adjacent to $x_2$, rather than only the edge $e$. 
The key observation is that the underlying block configurations of these conditional distributions differ only in the placement of the endpoints $x_1$ and $x_2$. 
This allows us to construct a \emph{distance-bounded coupling} (\Cref{def:dcc}) between an exposed configuration with a non-exposed one by moving or replacing these endpoints, while changing only a bounded number of edges in the resulting induced subgraphs. 
We then invoke group privacy on this distance-bounded coupling to obtain the desired amplification guarantee in \Cref{thm:induced-edge-amplificationintro}. 

We emphasize that the privacy amplification result for induced subgraph sampling applies more generally to any DP algorithm employing this sampling scheme, and is not restricted to DP algorithms that output binary decisions, which would be sufficient for our paper, as we are mainly interested in designing DP graph property testers. 
{In our paper, we design the private canonical tester (\Cref{thm:dpcanonicaltesterintro}), as well as the $H$-freeness tester (\Cref{private-POT-h-freenessintro}), using this privacy amplification result.}

\paragraph{Random walk sampling. } The most commonly used form of random walk in graph property testing is the \emph{lazy regularized random walk}, often referred to as a lazy random walk, and it has the property of being \emph{doubly-stochastic}. Although a random walk is generated adaptively, we will {reason about} complete transcripts, thought of as the list of all the vertices in the order they were visited by a sequence of random walks. This allows us to view random walk sampling as inducing a distribution over fixed transcript objects, and to compare these distributions rather than reason about the adaptive choices step by step. Proving general privacy amplification guarantees for private algorithms that implement a random walk sampling can be challenging. To see this, consider two edge-neighboring $\ns$-vertex graphs $G,G'$ that differ on an edge $e=\{x_1,x_2\}$. 
Similar to the case of induced subgraph sampling, the privacy-amplification decomposition from~\cite{balle2020a} yields (1) a common conditional distribution corresponding to non-exposure of $e$, and (2) two residual conditional distributions corresponding to exposure of $e$ in $G$ and $G'$, respectively. Here, $e$ being exposed means either endpoint $x_1$ or $x_2$ appears in the random walk transcript. Unlike in induced subgraph sampling, however, even relating the two residual conditional distributions (from (2)) is not straightforward, as they need not be close --- once the walk reaches $x_1$ or $x_2$, the walks on $G$ and $G'$ may diverge for the remainder of the execution. Therefore, the corresponding transcripts may differ in many positions, so we cannot directly relate the two conditional residual distributions by a small distance bound as in the induced-subgraph case. Without a good upper bound on the distance between these transcript distributions, it seems hard to prove a general privacy amplification result, unless we make additional assumptions about the structure of the graphs or random walks themselves. Since we are primarily interested in designing private property testers in our work, we restrict ourselves to proving privacy amplification guarantees for edge-DP algorithms that output \emph{binary decisions}. 

In this case, we consider random walk transcripts that have been mapped to a bit by the non-private tester, and then apply Randomized Response to the resulting decision. Crucially, Randomized Response is $\varepsilon$-DP between \emph{any} two possible binary inputs, regardless of how different the underlying transcripts producing those inputs may be. This lets us handle both types of comparisons --- between the two residual conditional distributions corresponding to exposure of $e$ in $G$ and $G'$ (from (2)), and between either residual conditional distribution and the common conditional distribution (from (1)) --- required by the privacy amplification decomposition uniformly. Once these comparisons are handled by Randomized Response, it remains only to bound the weight of the residual distributions, which is controlled by the probability that the differing edge $e$ is exposed. For a lazy regularized random walk started from a uniformly random vertex, every subsequent position is also uniformly distributed over the vertices (because it is doubly stochastic), allowing us to bound the probability of reaching either endpoint of $e$ by the fraction of the graph explored by the walks. Combining this probability bound with the privacy-amplification framework of~\cite{balle2020a} yields our amplification guarantee for random walk based property testers in \Cref{thm:dpamplificationrandwalkintro}. {In our paper, we use the privacy amplification result for random walks to design a private tester for bipartiteness testing in the bounded-degree model (\Cref{thm:dpbipartiteboundedintro}). }

\paragraph{\texorpdfstring{$k$}{k}-disc sampling. }In the bounded-degree model, testers are often based on a small number of local explorations of the graph to a constant depth. %
A $k$-disc sampler chooses uniformly random roots and explores their radius-$k$ neighborhoods by breadth-first search. A single change to the underlying graph can therefore affect many possible $k$-discs, i.e., changing one node or edge may alter every disc whose root lies sufficiently close to that modification. Our analysis views the collection of all rooted $k$-discs of the graph as an auxiliary database. For two node (or edge)-neighboring bounded-degree graphs, all but a bounded number $K$ of these discs are identical, where $K$ is controlled by the maximum size of a radius-$k$ neighborhood. Sampling $r$ random roots is then exactly subsampling $r$ elements from this auxiliary database. For each differing entry, the classical subsampling privacy amplification decomposition directly gives a common conditional distribution when that entry is not sampled and residual conditional distributions when it is sampled. We can therefore apply classical privacy amplification at the level of the auxiliary database, and then use group privacy to account for the fact that one graph modification may change up to $K$ database entries. Observe that, unlike induced subgraph sampling and random-walk sampling, no new relationship between the common and residual conditional distributions needs to be established; the main task is instead to bound how many $k$-discs can be affected by a single graph modification. {We note that similar to the induced subgraph sampling result, our privacy amplification theorem in \Cref{thm:kdiscamplificationintro} applies more generally to private algorithms that use $k$-disc sampling as a primitive. In our paper, we use this result to design a private tester for $H$-freeness (\Cref{thm:dp-h-freenessintro}) and a general tester for hyperfinite graphs in the bounded-degree model (\Cref{thm:dphyperfinitetestableintro}). }

\subsection{Discussion}\label{sec:discussion}

\paragraph{Privacy in the bounded-degree model. }In the bounded-degree model, the maximum-degree bound $d$ is essential to our $k$-disc privacy amplification theorem. 
It is an important practice in differential privacy to avoid assuming that the inputs satisfy a promise, such as having bounded maximum degree, mainly because algorithms having such restrictions can exhibit blatant failures of differential privacy on graphs that violate this promise. Typically, in such settings, the goal is to guarantee privacy on all arbitrary input graphs, but guarantee accuracy only on those inputs that satisfy the promise~\cite{JainSW24}. 
A standard DP technique for bounded-degree graphs was introduced by~\cite{blocki2013a, kasiviswanathan2013}, where they constructed projections from arbitrary graphs into a bounded-degree hypothesis class and then ran the desired query on the projected graph. This ensures privacy for all input graphs, but accuracy of the desired query for graphs satisfying the degree bound. Such a preprocessing approach is not directly compatible with our objective of preserving the sublinear complexity of the underlying property tester, as even the most efficient DP projections require reading the entire graph, thus forfeiting the sublinear access advantage that motivates our setting. It is an interesting open question as to whether such DP projections can be designed in sublinear time, i.e., without reading the entire graph as input. 

\paragraph{More private property testers from privacy amplification. }In our work, we highlighted private property testers for some fundamental properties that can be obtained from our privacy amplification theorems. We believe our amplification theorems can be used to obtain private property testers for more properties and {merely state them here}. {Consider the problem of \emph{expansion testing}, where the goal is to distinguish whether an unknown graph is expanding (with respect to a sufficient parameter) or far from all such graphs.}
{The expansion tester of~\cite{czumaj2007} {and the more recent improvement by~\cite{DBLP:journals/siamcomp/KaleS11},} use the same lazy regularized random walks as in~\cite{goldreich1999a}. As such, it is compatible with our random walks privacy amplification theorem (\Cref{thm:dpamplificationrandwalkintro}), thus implying a private expansion tester.
} {Another problem in this context is \emph{cycle-freeness testing}, where the goal is to distinguish whether the unknown graph is free from having cycles or far from such graphs.} The cycle-freeness testing algorithm~\cite{goldreich2002property} performs a BFS until $8/\dst\maxdeg$ vertices are seen, and can also be described as a $\rd$-disc sampling algorithm. Thus it is compatible with our $k$-disc sampling privacy amplification theorem~\cref{thm:kdiscamplification} by setting $\rd= 8/\dst\maxdeg$. It is an open question whether one could do better than this, taking advantage of the fact that this algorithm does not usually reach its maximum depth.

For future work, it would be very interesting to see if the privacy amplification technique in the dense graph model could be leveraged to design private testers for all graph partition properties (\emph{biclique}, \emph{max-cut}, \emph{max-clique}, \emph{min-bisection} etc.)~\cite{goldreich1998property, DBLP:conf/soda/ShapiraS24}.
{Related to the problem of $H$-freeness testing is the problem of \emph{generalized subgraph freeness testing}~\cite{goldreich2011proximity, DBLP:conf/coco/AdlerK021, DBLP:conf/soda/AdlerKP21}. It would be interesting to see if our {privacy amplification }techniques could be {extended}
to this problem as well.} 
{Moreover, a separation between adaptive testers and non-adaptive testers is known in the property testing literature~\cite{DBLP:journals/siamcomp/GoldreichR11}. 
It would be very interesting to see whether that separation continues to hold under differential privacy as well.} {\emph{The partition oracle} is another important primitive used to design testers in the bounded-degree model~\cite{benjamini2008every, nguyen2008constant,hassidim2009local,newman2011every}. It would be very interesting to design a private variant of partition oracle.} Notably, a private partition oracle could be used to extend our hyperfinite testing result (\Cref{thm:dphyperfinitetestableintro}) in full generality, akin to its non-private counterpart. Finally, {it would be very interesting to see if our techniques could be leveraged for designing private tolerant testers.}

\paragraph{Privacy amplification for graph algorithms beyond property testing. }While our privacy amplification theorem for random walks applies mainly to private testers (or algorithms outputting binary decisions), our privacy amplification theorems for induced subgraph sampling and $k$-disc sampling apply to any edge (and node)-DP graph algorithms that employ the corresponding sampling scheme. 
As such, one can apply our results to graph \emph{estimation} problems, rather than just testing. {While differentially-private sublinear algorithms for estimation problems have been studied previously by~\cite{BlockiGMZ23}, their approach was inherently black-box. It would be interesting to see if our privacy amplification theorems could result in a white-box approach that leads to better accuracy for problems similar to the ones studied by~\cite{BlockiGMZ23}. }%

\paragraph{Private property testing for general graphs. }In our work, we consider the dense and bounded-degree models separately. However, our techniques could apply to the general graph model as well, which encapsulates both the dense and bounded-degree models. 
{While the bipartiteness tester in~\cite{goldreich1999a} only applied to bounded-degree graphs, the follow-up work of~\cite{kaufman2004} demonstrated an algorithm for general graphs combining insights from the dense and bounded-degree bipartiteness testers. It would be very interesting to see if our private bipartiteness testers can be used to design a similar tester in the general graph model.}

\subsection{Additional Related Works}

\paragraph{Differentially Private Hypothesis Testing. }Differential privacy is most often deployed as a means to protect individuals in a database, where each person is modeled as a row with many sensitive attributes. In this setting, one can make the common statistical choice to think of people as samples from a distribution over all people. Under this lens, it is clear why testing distribution properties has been at the forefront of private testing. The problem of privately testing distribution properties has been extensively studied~\cite{DBLP:conf/stoc/CanonneKMSU19, DBLP:conf/icml/CaiDK17, pmlr-v80-aliakbarpour18a, acharya2018differentially, pmlr-v202-kazan23a}.
The core question this area seeks to answer is: \emph{given two distributions, how many i.i.d. samples are needed to distinguish them privately?} For simple hypotheses,~\cite{DBLP:conf/stoc/CanonneKMSU19} characterized a DP variant of the classical Neyman-Pearson lemma. Classical results such as private identity and closeness testing of distributions have also been studied~\cite{pmlr-v80-aliakbarpour18a,acharya2018differentially, DBLP:conf/icml/CaiDK17}. This line of work has since been extended to the high-dimensional setting~\cite{CannoneKMUZ20}, and to private minimax lower-bound techniques~\cite{pmlr-v132-acharya21a}. Similar to our setting, a straightforward approach to designing DP hypothesis testers is applying randomized response to the non-DP hypothesis tester. Under distribution testing, the goal is to reduce the number of (presumably expensive) samples drawn from some distribution. Privacy is then applied to the actually sampled dataset. In contrast, the graph testing model presumes the existence of a massive object which we query at the time of testing. This is closer to the setting of differentially private machine learning, where a large dataset is processed in small random batches, and we aim to avoid paying for privacy for samples we might not see~\cite{abadi2016}.

\paragraph{Differential Privacy in Graphs. } Differential privacy in graphs has grown extensively since its introduction by~\cite{NissimRS07}. We give a few highlights of this large body of work here. The first graph differential privacy model studied was edge-DP. In particular,~\cite{NissimRS07, GuptaLMRT} designed edge-DP algorithms for fundamental problems such as minimum spanning tree cost, triangle counting, minimum cut, and vertex cover size. The stronger notion of node-DP was later studied in~\cite{blocki2013a, kasiviswanathan2013}. Subsequently, DP graph algorithms have been designed for a variety of problems across different computation models. In the continual release model, where the input is a stream of edges over time and the algorithm is required to output at every timestep while preserving DP, both node and edge-DP algorithms have been designed for several graph problems, with edge-DP having received more attention. These include triangle counting, densest subgraph, and maximum matching~\cite{fichtenberger_et_al:LIPIcs.ESA.2021.42, JainSW24, epasto2024power, RaskhodnikovaS24, Song18}. In the local model, edge-DP notions have been more extensively studied for problems such as subgraph counts~\cite{ImolaMC21, EdenLRS25} and densest subgraphs~\cite{DBLP:conf/focs/DhulipalaLRSSY22}. The notion of node-DP in the local model was very recently introduced by~\cite{raskhodnikova2026}.

\paragraph{Organization of the paper.}
The rest of the paper is organized as follows. In \Cref{sec:prelim}, we discuss the necessary preliminaries of this work. Then in \Cref{sec:privatetesterdensegraph}, we present the private testers for the dense graph model, followed by the private testers in the bounded-degree model in \Cref{sec:privatetesterboundeddegreegraph}. %
Proofs of some statements are moved to the appendix.

\section{Preliminaries}\label{sec:prelim}

Throughout this work, we will use the notations $O(\cdot)$ and $\Omega(\cdot)$ to hide the dependencies on parameters $\priv$ and $d$ that we consider to be constants. We will also use $\widetilde{O}(\cdot)$ and $\widetilde{\Omega}(\cdot)$, where we hide poly-logarithmic dependencies on the parameters. $[t]_{+}$ denotes $\max\{t,0\}$. {The notation $\mathbb{P}(Z)$ denotes the set of probability measures on the space $Z$.}
For two discrete distributions $P$ and $Q$ over a common finite support $\mathcal{D}$, the total variation distance between $P$ and $Q$ is defined as $TV(P,Q) = \frac{1}{2} \sum_{x \in \mathcal{D}} |P(x)-Q(x)|$.
For a graph $G=(V,E)$, $V$ and $E$ denote the vertex and edge sets of $G$, respectively. Throughout this work, we will assume the graph $G$ to be undirected and without self-loops or parallel edges. {Also, when we mention sampling, we mean sampling without replacement unless otherwise stated. In particular, this is true of all our results except those for random walks.}

\begin{definition}[$\diverge$-divergence]\label{def:alpha-div}
The $\diverge$-divergence (also known as hockey-stick divergence) with $\diverge\geq1$ between two probability measures $\mu,\mu'\in\mathbb{P}(Z)$ is defined as
\[
D_{\diverge}(\mu\|\mu')=\sup_{E}\bigl(\mu(E)-\diverge\mu'(E)\bigr)
=\int_{Z}\left[\frac{d\mu}{d\mu'}(z)-\diverge\right]_{+}d\mu'(z)
=\sum_{z\in Z}\bigl[\mu(z)-\diverge\mu'(z)\bigr]_{+},
\]
where $E$ ranges over all measurable subsets of $Z$, and the last equality holds for discrete $Z$    
\end{definition}

\paragraph{Differential Privacy.} We define some of the tools and techniques we use from the literature on differential privacy. 

\begin{theorem}[Privacy profiles~\cite{balle2020a}]\label{thm:dp-profile}
A mechanism $\pmech:\dmain\to \mathbb{P}(\range)$ is a mechanism from an input set to distributions over $\range$, equipped with a neighboring relation $\simeq_{\dmain}$. The \emph{privacy profile} of $\pmech$ is the function $\pdelta_\pmech:[0,\infty)\to[0,1]$ given by
\[
\pdelta_\pmech(\priv) \defeq \sup_{x\simeq_{\dmain}x'}D_{e^{\priv}}(\mathcal{M}(x)\|\mathcal{M}(x')).
\]
In particular, this implies that $\pmech$ is $(\priv,\pdelta(\priv))$-DP for all $\priv\geq 0$.
\end{theorem}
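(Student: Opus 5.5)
The claim to establish is the last assertion of the privacy-profile statement: if $\pdelta_\pmech(\priv)\defeq\sup_{x\simeq_{\dmain}x'}D_{e^{\priv}}(\pmech(x)\|\pmech(x'))$, then $\pmech$ is $(\priv,\pdelta_\pmech(\priv))$-DP for every $\priv\ge 0$. I would argue directly from the variational form of the hockey-stick divergence in \Cref{def:alpha-div}, namely $D_{\diverge}(\mu\|\mu')=\sup_{E}\bigl(\mu(E)-\diverge\mu'(E)\bigr)$, with $\diverge=e^{\priv}$.

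First I will check that the supremum form agrees with the integral form, so the two definitions are interchangeable. For a measurable $E$ we can write
\[
\mu(E)-\diverge\mu'(E)=\int_E\Bigl(\frac{d\mu}{d\mu'}-\diverge\Bigr)d\mu' + \mu_{\perp}(E).
\]
Here $\mu_{\perp}$ is the part of $\mu$ singular to $\mu'$; in the discrete case it is just the points where $\mu'(z)=0$. The right-hand side is maximized by choosing $E=\{z:\mu(z)>\diverge\mu'(z)\}$, which yields $\sum_z[\mu(z)-\diverge\mu'(z)]_+$. This identity is standard and only needs to be quoted.

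Now fix $\priv\ge0$, neighbors $x\simeq_{\dmain}x'$, and an event $\cS\subseteq\range$. By definition of the supremum,
\[
\Pr[\pmech(x)\in\cS]-e^{\priv}\Pr[\pmech(x')\in\cS]\le D_{e^{\priv}}(\pmech(x)\|\pmech(x'))\le \pdelta_\pmech(\priv).
\]
Rearranging gives $\Pr[\pmech(x)\in\cS]\le e^{\priv}\Pr[\pmech(x')\in\cS]+\pdelta_\pmech(\priv)$, which is exactly \Cref{def:DP} with $\pdelta=\pdelta_\pmech(\priv)$.

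The last point is that the neighboring relation is symmetric, so the supremum already covers both orderings $(x,x')$ and $(x',x)$ and no separate argument is needed for the reverse direction. The values of $\pdelta_\pmech$ lie in $[0,1]$ because $D_{\diverge}\ge0$ (take $E=\emptyset$) and $D_{\diverge}\le\mu(E)\le1$.

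There is no real obstacle here. The only subtle point is measure-theoretic: handling the part of $\mu$ singular to $\mu'$ in the Radon–Nikodym form. Using the supremum form directly avoids this.
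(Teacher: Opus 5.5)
Your argument is correct and is the reasoning the paper leaves implicit. The paper gives no separate proof: it cites \cite{balle2020a} and treats the $(\varepsilon,\delta_{\mathcal{M}}(\varepsilon))$-DP conclusion as immediate from the supremum-over-events form of $D_{e^{\varepsilon}}$ in \Cref{def:alpha-div}, which is exactly your one-line bound. Your preliminary comparison with the Radon--Nikodym form is unnecessary and can be dropped; as your own $\mu_{\perp}$ term shows, that integral expression agrees with the supremum only when $\mu\ll\mu'$.
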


Group privacy is a well-studied notion in the context of differential privacy. Intuitively, to protect all groups of size $k$ in a database, one would have to consider databases at distance $k$ from one another. 
\begin{definition}[Group privacy]
\label{def:group-privacy}
Replacing the standard neighboring relation $\simeq$ in~\cref{thm:dp-profile} with $\simeq^k$, its application $k$ times, defines the $k$-\emph{group privacy profile} \[
    \pdelta_{\pmech,k}(\priv) = \sup_{x\simeq^k x'}D_{e^\priv}(\pmech(x)\|\pmech(x')).
\]
The standard analysis of group privacy (see \cite{dwork2013}~for example) yields the following bound
\begin{equation}
    \pdelta_{\pmech,k}(\priv) \leq \frac{e^\priv - 1}{e^{\priv/k} - 1}\pdelta_\pmech(\priv/k)
\end{equation}
\end{definition}

We frequently make use of binary randomized response~\cite{kasiviswanathan2011}, a simple and optimal algorithm for making one bit private.
\begin{definition}[Randomized Response]
\label{def:randomized-response}
    For any $\priv>0$, randomized response $\RR_{\priv}:\{0,1\}\to\{0,1\}$ is defined as \[
    \RR_{\priv}(x) = \begin{cases}
        x &\text{with probability }\frac{e^\priv}{e^\priv + 1},\\
        1-x &\text{with probability } \frac{1}{e^\priv + 1}.
    \end{cases}
    \]
    Randomized response satisfies $\priv$-differential privacy.
\end{definition}

\paragraph{Privacy amplification by subsampling. }
We will also lean heavily on the privacy amplification by subsampling framework as laid out by~\cite{balle2020a}. First and foremost, we will make extensive use of their \emph{advanced joint convexity} theorem.

\begin{theorem}[{Advanced Joint Convexity of $D_{\diverge}$~\cite[Theorem 11]{balle2020a} }]\label{advanced-joint-convexity}
Let $\mu,\mu'\in\mathbb{P}(Z)$ be probability measures satisfying $\mu=(1-\eta)\mu_{0}+\eta\mu_{1}$ and $\mu'=(1-\eta)\mu_{0}+\eta\mu_{1}'$ for \emph{some} $\eta,\,\mu_{0},\,\mu_{1}$ and $\mu_{1}'$. Given $\diverge\geq1$, let $\diverge'=1+\eta(\diverge-1)$ and $\varphi=\diverge'/\diverge$. Then the following holds:
\[
D_{\diverge'}(\mu\|\mu')=\eta D_{\diverge}\bigl(\mu_{1}\|(1-\varphi)\mu_{0}+\varphi\mu_{1}'\bigr).
\]
Expanding this, using regular joint convexity, we get
\[
D_{\diverge'}(\mu\|\mu')\leq\eta\left[(1-\varphi)D_{\diverge}(\mu_{1}\|\mu_{0})+\varphi D_{\diverge}(\mu_{1}\|\mu_{1}')\right].
\]
\end{theorem}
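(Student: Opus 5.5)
The plan is to prove the identity at the level of signed measures and then take positive parts, so that no optimisation over events is needed. I will work with densities with respect to a common dominating measure (for instance $\mu_0+\mu_1+\mu_1'$), so that the discrete formula in \Cref{def:alpha-div}, $D_\alpha(\nu\|\rho)=\int[\nu-\alpha\rho]_+$, applies to all measures involved. Throughout I assume $\eta\in[0,1]$ (it is a mixture weight) and $\alpha\ge 1$.

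\textbf{Step 1: the pointwise identity.} Expanding the two mixtures gives
\[
\mu-\alpha'\mu' = (1-\alpha')(1-\eta)\mu_0+\eta\mu_1-\alpha'\eta\mu_1'.
\]
By the definition of $\alpha'$ we have $\alpha'-1=\eta(\alpha-1)$, so the right-hand side equals
\[
\eta\bigl[\mu_1-(\alpha-1)(1-\eta)\mu_0-\alpha'\mu_1'\bigr].
\]
Next I check that this bracket is exactly $\mu_1-\alpha\bigl((1-\varphi)\mu_0+\varphi\mu_1'\bigr)$. Two coefficient identities are needed:
\begin{itemize}
\item $\alpha\varphi=\alpha'$, which holds by the definition $\varphi=\alpha'/\alpha$;
\item $\alpha(1-\varphi)=\alpha-\alpha'=(\alpha-1)-\eta(\alpha-1)=(\alpha-1)(1-\eta)$.
\end{itemize}
Hence, as signed measures,
\[
\mu-\alpha'\mu'=\eta\bigl[\mu_1-\alpha\bigl((1-\varphi)\mu_0+\varphi\mu_1'\bigr)\bigr].
\]

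\textbf{Step 2: the equality.} Since $\eta\ge 0$, taking positive parts of the densities commutes with multiplication by $\eta$. Integrating then yields the claimed equality $D_{\alpha'}(\mu\|\mu')=\eta D_\alpha\bigl(\mu_1\|(1-\varphi)\mu_0+\varphi\mu_1'\bigr)$. One must also confirm that $(1-\varphi)\mu_0+\varphi\mu_1'$ is a probability measure, which requires $\varphi\in[0,1]$. This holds because $1\le\alpha'=1+\eta(\alpha-1)\le\alpha$.

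\textbf{Step 3: the inequality (joint convexity in the second argument).} For $\varphi\in[0,1]$, write
\[
\mu_1-\alpha\bigl((1-\varphi)\mu_0+\varphi\mu_1'\bigr)=(1-\varphi)(\mu_1-\alpha\mu_0)+\varphi(\mu_1-\alpha\mu_1').
\]
The map $t\mapsto[t]_+$ is convex, so the positive part of the left side is at most $(1-\varphi)[\mu_1-\alpha\mu_0]_+ +\varphi[\mu_1-\alpha\mu_1']_+$ pointwise. Integrating gives $D_\alpha(\mu_1\|(1-\varphi)\mu_0+\varphi\mu_1')\le(1-\varphi)D_\alpha(\mu_1\|\mu_0)+\varphi D_\alpha(\mu_1\|\mu_1')$. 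Multiplying by $\eta$ and combining with Step 2 gives the stated bound.

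I expect no real obstacle here. The only delicate point is the coefficient bookkeeping in Step 1, namely verifying $\alpha(1-\varphi)=(\alpha-1)(1-\eta)$, since this is what makes the $\mu_0$ terms line up. The range check $\varphi\in[0,1]$ in Step 2 is also essential, since it is what allows convexity to be used in Step 3.
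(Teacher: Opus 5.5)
Your proof is correct and takes essentially the same route as the original argument of Balle et al.; the paper itself only cites that result and gives no proof. The key step is the signed-measure identity $\mu-\alpha'\mu'=\eta\bigl[\mu_1-\alpha\bigl((1-\varphi)\mu_0+\varphi\mu_1'\bigr)\bigr]$, obtained from $\alpha'-1=\eta(\alpha-1)$, after which taking positive parts (equivalently, suprema over events) gives the equality, and your check that $\varphi\in[1/\alpha,1]$ is what justifies the joint-convexity step for the inequality.
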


While \cref{advanced-joint-convexity} holds for any decomposition, we use the language of couplings---especially the \emph{maximal coupling}, which exactly characterizes the total variation distance between $\mu$ and $\mu'$ (see~\cref{def:maximal-coupling}).

\begin{definition}[Coupling]\label{def:coupling}
A coupling between two distributions $\nu,\nu'\in\mathbb{P}(Y)$ is a distribution $\pi\in\mathbb{P}(Y\times Y)$ whose \emph{marginals along the projections} $(y,y')\mapsto y$ and $(y,y')\mapsto y'$ are $\nu$ and $\nu'$, respectively.
\end{definition}
\begin{definition}[Maximal Coupling~\cite{balle2020a}]
\label{def:maximal-coupling}
    Suppose $\nu,\nu'$ are distributions with finite support. Let $\nu_{0}(y)=\min\{\nu(y),\nu'(y)\}$ denote the overlap between $\nu$ and $\nu'$, and let $\eta=\text{TV}(\nu,\nu')=1-\sum_{y\in Y}\nu_{0}(y)$, so that $\nu_{0}$ has total mass $1-\eta$, while the residuals $\nu-\nu_{0}$ and $\nu'-\nu_{0}$ each have total mass $\eta$. The maximal coupling is the mixture $\pi=(1-\eta)\pi_{0}+\eta\pi_{1}$, where
\[
\pi_{0}(y,y')=\frac{\nu_{0}(y)\,\mathbf{1}\{y=y'\}}{1-\eta},\qquad
\pi_{1}(y,y')=\frac{(\nu(y)-\nu_{0}(y))(\nu'(y')-\nu_{0}(y'))}{\eta^{2}}.
\]
\end{definition}
\noindent To see the sense in which this coupling is maximal, project $\pi$ along the first marginal:
\begin{equation}\label{eq:proj}
\begin{aligned}
\sum_{y'\in Y}\pi(y,y')
&=\sum_{y'\in Y}\left((1-\eta)\pi_{0}(y,y')+\eta\,\pi_{1}(y,y')\right)\\
&=\nu_{0}(y)+\frac{\nu(y)-\nu_{0}(y)}{\eta}\underbrace{\sum_{y'\in Y}(\nu'(y')-\nu_{0}(y'))}_{=\eta}\\
&=\nu_{0}(y)+(\nu(y)-\nu_{0}(y))=\nu(y),
\end{aligned}
\end{equation}
and symmetrically, the second marginal is $\nu'$, so $\pi$ is indeed a coupling. Moreover $\Pr_{\pi}[y\neq y']=\eta=\text{TV}(\nu,\nu')$, which is the minimum over all couplings, hence \emph{maximal}.

The most useful couplings to a privacy analysis are what we call \emph{distance-bounded couplings}---couplings that demonstrate that everything in the support of one distribution is close to something in the support of the other. This is closely related to the concept of \emph{distance-compatible couplings} used in~\cite{balle2020a}, though we distinguish our definition from theirs, since we do not claim that our couplings are minimal.

\begin{definition}[Distance-bounded coupling]\label{def:dcc} 
Given a coupling $\pi$ between two distributions $\nu,\nu'\in\mathbb{P}(Y)$, we can use that coupling to rewrite the \emph{mixture distributions} $\tilde{\mu}=\nu M$ and $\tilde{\mu}'=\nu'M$ as $\tilde{\mu}=\sum_{y,y'}\pi_{y,y'}\mathcal{M}(y)$ and $\tilde{\mu}'=\sum_{y,y'}\pi_{y,y'}\mathcal{M}(y')$. Using joint convexity, the following holds:
\[
D_{e^{\varepsilon}}(\tilde{\mu}\|\tilde{\mu}')\leq\sum_{y,y'}\pi_{y,y'}D_{e^{\varepsilon}}(\mathcal{M}(y)\|\mathcal{M}(y'))\leq\sum_{y,y'}\pi_{y,y'}\delta_{\mathcal{M},d(y,y')}(\varepsilon),
\]
where $\delta_{\mathcal{M},d(y,y')}(\varepsilon)$ is the \emph{group-privacy profile} at distance $d(y,y')$.
\end{definition}

Following~\cite{balle2020a}, a \emph{subsampling mechanism} formalizes the notion of generating samples from a larger object. In this work, we will only consider subsampling mechanisms that take graphs as input.
In our definition below, we assume that $X$ contains graphs, and $Y$ contains either graphs or tuples of graphs. We distinguish between $X$ and $Y$ since $Y$ will often not be the same set as $X$, such as when we sample $\rd$-discs to produce tuples of rooted graphs. 

\begin{definition}[Subsampling mechanism]
    A \emph{subsampling mechanism} is a randomized algorithm $\sampler:X\to\mathbb{P}(Y)$ that takes as input a graph $G$  and outputs a finitely supported distribution over graphs or lists of graphs.
\end{definition}

We emphasize that while two graphs can be node or edge neighboring---subsamples of those graphs might not inherit the same notions of distance. Thus, throughout the paper, we will define appropriate notions of neighboring samples, as needed. 

{In order to better understand the subsampling mechanisms used in the paper, we give more intuition about the privacy of subsamples and our use of composition next.} Consider a private mechanism $\pmech:Y\to\mathbb{P}(Z)$, and define the subsampled mechanism $\pmech\circ\sampler:X\to\mathbb{P}(Z)$ where $\pmech\circ\sampler(x)=\pmech(\sampler(x))$ and the composition entails feeding a sample of $\sampler(x)$ into $\pmech$. For this to make sense as a privacy analysis, we must assume a neighboring relation $\simeq_Y$ on $Y$ such that $\pmech$ is $(\priv,\pdelta)$-differentially private with respect to $\simeq_Y$. We must also have a neighboring relation $\simeq_X$ on $X$. Then, the question becomes \emph{what are the values of $\priv'$ and $\pdelta'$ such that $\pmech\circ\sampler$ is $(\priv',\pdelta')$-differentially private with respect to $\simeq_X$.}

To help understand the use of composition throughout this work, consider that a non-private tester can be seen as a decision function $f:Y\to\{0,1\}$, composed with a sampling mechanism $\sampler$, so that $\tester=f\circ\sampler$. Similarly, a private mechanism can be seen as a function composed with a noise-channel $Q$ so that $\pmech = Q\circ f$. Under this lens, a subsampled mechanism $\pmech^\sampler = Q \circ f \circ \sampler = \pmech\circ\sampler = Q\circ\tester$. We use each of these decompositions (into tester plus noise, or private mechanism plus subsampling) depending on which aspect of an algorithm we wish to highlight.

\paragraph{Probability amplification of {non-private} testers. }%
To ensure the completeness and generality of our technical exposition, we will use \failp to denote the probability that a tester makes a soundness or completeness error.  If the probability of error in both cases is at most $\failp$, then we will simply say that \tester has probability of error at most \failp.

\begin{remark}
    Note that given a tester with failure probability at most $1/3$, we can produce one with arbitrary failure probability $\failp$ using $\bigO{\log(1/\failp)}$ repetitions of the algorithm. This follows from standard techniques on ``success amplification'' by majority vote. The typical argument goes, given \tester with probability of failure at most $1/3$; the majority, defined as the function $\mathds{1}\left\{\sum_{i=1}^\repeats \tester(x_i) > \repeats/2\right\}$ will equal the correct answer with high probability. Using the Chernoff bound, we can say that the failure probability (that is, the probability that the majority disagrees with the more likely outcome) decreases as $2e^{-\repeats}$, and so setting $\repeats\approx \log(1/\failp)$ yields a new tester with failure probability $\failp$. %
\end{remark}

\paragraph{The dense-graph model.} In the dense graph model, the degree of any vertex is unbounded, and the graph is typically represented by an adjacency matrix. The tester can query any entry of the (unknown) adjacency matrix, commonly known as \emph{edge query}. The notion of closeness and farness is defined as follows.
\begin{definition}[$\dst$-close/far for dense graphs]
\label{def:dense-closeness}
Let $G=([\ns],E)$ be a dense graph with unbounded degree, and let $\dst>0$. We say that $G$ is $\dst$-close to a property $\property$ if one can turn $G$ into a graph satisfying $\property$ by adding or deleting at most $\dst n^2$ edges. Otherwise, $G$ is said to be $\dst$-far from $\property$.
\end{definition}

In this model, many properties can be tested using a number of samples independent of $n$, the number of vertices in the graph. In particular, in the dense-graph model, for any graph property $\property$, if there exists a tester with query complexity $q$, then there exists a \emph{canonical tester} with query complexity $O(q^2)$. The canonical tester works by uniformly subsampling the vertices and checking whether the induced subgraph has the property of interest. This was first shown by \cite{goldreich2003three}. {The canonical tester is described in Algorithm~\ref{algo:canonicaltester}.}

\begin{algorithm}\label{algo:canonicaltester}
\caption{Canonical tester for dense graphs}
\KwIn{Query access to an unknown dense graph $G$, a graph property $\property$, parameter $\dst \in (0,1)$.}
\KwOut{\texttt{Accept} if $G$ satisfies $\property$, and \texttt{Reject} if $G$ is $\dst$-far from $\property$.}

$t \gets s_{\property}$ number of vertices of $G$ (depends on the property $\property$).

$S(\property) \gets$ a uniformly randomly chosen subgraph of $G$ of $t$ vertices.

Query all pairs of vertices in $S(\property)$ to know the induced subgraph $G[S(\property)]$ completely.

Output \texttt{Accept} if and only if $G[S(\property)]$ satisfies the property $\property$.    
\end{algorithm}

\paragraph{The bounded-degree model.} This model, as the name suggests, consists of graphs with bounded maximum degree $d$. In this model, a tester performs \emph{neighbor queries}: given a vertex $v$ and an integer $i \in [d]$, the oracle returns the $i$-th neighbor of $v$, and if $v$ has fewer than $i$ neighbors, the oracle returns a special symbol $\bot$. 
Similar to the dense graph model, the notion of closeness and farness is defined as follows. 

\begin{definition}[$\dst$-close/far for bounded-degree graphs]
\label{def:bounded-closeness}
Let $G=([\ns],E)$ be a graph with maximum degree $d$, and let $\dst \in (0,1]$. We say that $G$ is $\dst$-close to a property $\property$ if one can turn $G$ into a graph satisfying $\property$ by adding or deleting at most $\dst \cdot dn$ edges. Otherwise, $G$ is said to be $\dst$-far from $\property$.
\end{definition}

{
}

Now let us define the notion of hyperfinite graphs, which will be useful for our results in the bounded-degree model.

\begin{definition}[Hyperfinite graph and property]
Let $\dst \in (0,1]$ and $k \in \mathbb{N}$ be some parameters. A graph $G$ is said to be $(\dst, k)$-hyperfinite if one can remove at most $\dst d n$ edges from $G$ such that the remaining graph has connected components of size at most $k$ vertices. For a function $\rho: \mathbb{R}^{+} \rightarrow \mathbb{R}^{+}$, $G$ is $\rho$-hyperfinite if $G$ is $(\dst, \rho(\dst))$-hyperfinite. A property (aka. a collection of graphs) is hyperfinite if every graph in the property is $\rho$-hyperfinite for some function $\rho$. %
\end{definition}

\section{Private Testers for Dense Graphs}\label{sec:privatetesterdensegraph}

In this section, we present our results in the dense graph model. We will first present our edge-DP privacy amplification result in \Cref{thm:induced-edge-amplification}. Then, using this amplification result, we design the DP canonical tester in \Cref{thm:dpcanonicaltester}. Finally, we design a private $H$-freeness tester in \Cref{thm:private-POT-h-freeness}.

To present our edge privacy amplification result, we first need to formalize the concepts of induced subgraph sampling and edge distance.
Let us start with induced subgraph sampling.

\begin{definition}[Induced subgraph sampling]
\label{induced-subgraph-sample}
Let $\ns \geq \ms h$ for some positive integers $\ns, \ms, h$ and $G=([\ns],E)$ be an unknown graph. An induced subgraph sampler $\subsampler_{\ms,h}$ accesses $G$ via edge queries and outputs a distribution over $\ms$-tuples of $h$-vertex graphs. To obtain a sample from $\subsampler_{\ms,h}$, the following steps are performed:
\begin{itemize}
    \item[(i)] Draw $(v_{1},\dots,v_{\ms h})$ uniformly at random without replacement from $[\ns]$. Let $B = (b_1,\dots,b_\ms)$ where for every $i \in [\ms]$, $b_i = (v_{(i-1)h + 1},\dots, v_{ih})$ and $b_i(j) \defeq v_{(i-1)h + j}$ is its $j$'th entry.

    \item[(ii)] Output $S = (s_1,\dots, s_\ms)$ where for every $i \in [\ms]$, $s_i = ([h],E_i)$ and $E_i = \{ \{j,j'\}\subseteq [h]: \{b_i(j), b_i(j')\}\in E \}$.
\end{itemize}
For $\ms=1$, we will abbreviate $\subsampler_{\ms,h}$ as $\cansampler\defeq\subsampler_{1,h}$.

\end{definition}

Several testers in the dense graph model first sample one or more fixed-sized subgraphs from the unknown graph and then proceed to analyze the sampled subgraphs. The above definition of induced subgraph sampling captures this sampling phase of such testers. Notably, this approach is used for canonical testers, which we will discuss later in detail in this section.

Now to analyze and use the induced subgraph sampling based private testers, we need to define a neighboring relation between two samples (two $\ms$-tuples of $h$-vertex graphs) obtained from the induced subgraph sampler $\subsampler_{\ms,h}$. Note that the following notion of distance is sensitive to the order of samples.

\begin{definition}[Edge distance and edge-neighboring samples]
\label{def:edge-edit-distance}
Let $\range_{\ms,h}$ denote the set of $\ms$-tuples $y=(y_1,\dots,y_\ms)$ of graphs that can be obtained as the output of the induced subgraph sampler $\subsampler_{\ms,h}$, where for every $i \in [\ms]$, $y_i=([h],E_i)$.\footnote{Every $y_i \in y$ is a $h$-vertex graph with common vertex set $[h]$, they only differ in their corresponding edge sets.} Consider two samples $y,y'\in\range_{\ms,h}$ with $y \neq y'$. The \emph{edge distance} between $y$ and $y'$ is equal to the sum of the size of the \emph{symmetric difference}, denoted $\triangle$, between their edge sets:
\[
d(y,y') \defeq \sum_{i=1}^{\ms} |E_i \triangle E_i'|.
\]
We say these two samples $y$ and $y'$ are edge neighboring (denoted as $y\simeq_e y'$) if $d(y,y')=1$. 
\end{definition}

As mentioned in the introduction, very recently \cite{dwork2026efficientprivatepropertytesting} defined the notion of node-DP amplification. To better motivate our edge-DP amplification result, we first mention the node-DP amplification result from \cite{dwork2026efficientprivatepropertytesting}.

\begin{lemma}[{Node-privacy amplification for induced subgraph sampling~\cite{dwork2026efficientprivatepropertytesting}}]
\label{lem:node-dp-induced-subgraphs}
Let $\pmech$ be a graph algorithm which satisfies $(\priv, \pdelta)$\emph{-node differential privacy}. Consider the algorithm $\pmech'=\pmech\circ\subsampler_{\ms,h}$, which is designed by composing $\pmech$ with the induced subgraph sampler $\subsampler_{\ms,h}$. Then $\pmech'$ satisfies $(\priv',\pdelta')$\emph{-node differential privacy}, where $\priv' = \log\left( 1 + \frac{\ms h}{\ns}(e^{\priv} - 1)\right)$ and $\pdelta' = \frac{\ms h}{\ns}\pdelta$.
\end{lemma}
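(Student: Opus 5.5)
We follow the advanced joint convexity route (\cref{advanced-joint-convexity}), with the auxiliary vertex sequence $(v_1,\dots,v_{\ms h})$ drawn in step (i) of \Cref{induced-subgraph-sample} as the object being subsampled. We read the hypothesis on $\pmech$ as privacy with respect to node-neighboring samples: $y \simeq_v y'$ if there are a block $i$ and a position $j\in[h]$ such that $y$ and $y'$ agree everywhere except on edges of block $i$ incident to position $j$.

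Fix node-neighboring graphs $G \sim_v G'$ differing at vertex $v$. We condition on the event $A$ that $v \in \{v_1,\dots,v_{\ms h}\}$. By symmetry of sampling without replacement, $\Pr[A]=\ms h/\ns =: \eta$.
\begin{itemize}
\item On $A^c$, every queried pair avoids $v$. Hence the induced samples from $G$ and $G'$ coincide, and both have the same conditional law $\mu_0$.
\item On $A$, let $\mu_1$ and $\mu_1'$ be the conditional laws of $\subsampler_{\ms,h}(G)$ and $\subsampler_{\ms,h}(G')$.
\end{itemize}
Pushing forward through $\pmech$ gives $\pmech'(G)=(1-\eta)\pmech\mu_0+\eta\,\pmech\mu_1$, and similarly for $G'$ with the same common component. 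Set $\diverge=e^\priv$. Then \cref{advanced-joint-convexity} gives
\[
D_{\diverge'}(\pmech'(G)\|\pmech'(G')) \le \eta\bigl[(1-\varphi)D_{e^\priv}(\pmech\mu_1\|\pmech\mu_0)+\varphi D_{e^\priv}(\pmech\mu_1\|\pmech\mu_1')\bigr],
\]
with $\diverge' = 1+\eta(e^\priv-1)$, which is exactly $e^{\priv'}$.

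It remains to bound both divergences by $\pdelta$. Each is bounded through a distance-bounded coupling (\cref{def:dcc}) in which every coupled pair is at node distance at most $1$.

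\textbf{Pair $(\mu_1,\mu_1')$.} Couple by using the identical vertex sequence. The two samples then differ only in edges incident to the position holding $v$, so they are node-neighboring.

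\textbf{Pair $(\mu_1,\mu_0)$.} Take a sequence conditioned on containing $v$, say at position $p$. Replace $v$ by a uniformly random vertex $w$ outside the sequence. This map sends the uniform law on sequences containing $v$ to the uniform law on sequences avoiding $v$. To check this, note that each sequence avoiding $v$ has exactly $\ms h$ preimages, one for each position of $v$, since $w$ is then determined. So this is a valid coupling. Under it, the two induced samples differ only in edges incident to position $p$, hence they are node-neighboring.

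Given these couplings, joint convexity and the $(\priv,\pdelta)$ guarantee bound both divergences by $\pdelta$. Since $(1-\varphi)+\varphi=1$, this yields $\pdelta'=\eta\pdelta=\frac{\ms h}{\ns}\pdelta$.

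The main obstacle is the second coupling, between $\mu_1$ and $\mu_0$. There one must confirm that swapping $v$ for a fresh vertex is measure-preserving between the two conditional uniform laws. One must also confirm that the resulting samples are neighbors under the sample-level relation, which is unlike the edge case, where $h-1$ edges change. The preimage-counting check above handles the measure-preservation part, and the neighbor relation follows because only one position of one block changes.
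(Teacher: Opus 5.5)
Your proof is correct and matches the argument the paper intends. The paper gives no proof of its own: it defers to Dwork and Tankala and to standard amplification by subsampling without replacement. That standard argument is exactly yours: an advanced-joint-convexity decomposition on the event that $v$ is sampled (probability $\ell h/n$), plus the coupling that replaces $v$ by a fresh vertex at node distance $1$. It is also the node-DP analogue of the paper's appendix proof for edge-DP, where the three-case ($T$, $H$, $M$) coupling collapses to your single replacement and no group privacy is needed.
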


While our result for edge-DP amplification by subsampling looks similar to the node-DP amplification result stated above, it is far from immediate. On one hand, it has an improved dependence on $\ns$, going from around $1/\ns$ to $1/\ns^2$. This makes sense, since to query an edge we must sample both of its endpoints. On the other hand, the dependence on $\priv$ is worsened, incurring an additional $2(h-1)$ group-privacy profile term in our coupling argument. The $2(h-1)$ term is the maximum edge-distance between a sample that contains $e$ and a sample that doesn't under our coupling.

\begin{restatable}[Edge-privacy amplification for induced subgraph sampling]{theorem}{inducedsubgraph}\label{thm:induced-edge-amplification}
Let $\pmech:\range_{\ms,h}\to\mathcal{Z}$, where $\mathcal{Z}$ is any output domain, be an algorithm satisfying $(\priv, \pdelta)$\emph{-differential privacy} with respect to the edge-neighboring relation $\simeq_e$~(\cref{def:edge-edit-distance}). Then the process of sampling $y\sim\sampler_{\ms,h}(G)$ and computing $\pmech(y)$ satisfies $(\priv', \pdelta')$-edge differential privacy where \[
\priv' = \log \left( 1 + \frac{\ms h(h-1)}{n(n-1)}(e^{2(h-1)\priv} - 1) \right) ,
\]
and \[
\pdelta' = \frac{\ms h(h-1)}{n(n-1)} \frac{e^{2(h-1)\priv} - 1}{e^{\priv} - 1}\pdelta.
\]
\end{restatable}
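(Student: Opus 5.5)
Fix edge-neighboring graphs $G\sim_e G'$ differing on $e=\{x_1,x_2\}$, with $e\in E$ and $e\notin E'$. We work with the vertex-block configuration $B=(b_1,\dots,b_\ms)$ from step (i) of \cref{induced-subgraph-sample}, which has the same distribution under $G$ and $G'$. Say $B$ \emph{exposes} $e$ if $x_1,x_2$ lie in the same block. There are $\ms h(h-1)$ ordered choices of (block, two positions) and $n(n-1)$ ordered placements of $(x_1,x_2)$, so the exposure probability is exactly $\eta=\frac{\ms h(h-1)}{n(n-1)}$.

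Write $\mu=\pmech(\subsampler_{\ms,h}(G))$ and $\mu'=\pmech(\subsampler_{\ms,h}(G'))$, and decompose
\[
\mu=(1-\eta)\mu_0+\eta\mu_1,\qquad \mu'=(1-\eta)\mu_0+\eta\mu_1'.
\]
Here $\mu_0$ is the output law conditioned on non-exposure. It is identical for $G$ and $G'$, because every queried pair other than $e$ has the same adjacency in both graphs. The measures $\mu_1,\mu_1'$ are the output laws conditioned on exposure. Set $\diverge=e^{\priv''}$ with $\priv''=2(h-1)\priv$, so that $\diverge'=1+\eta(e^{2(h-1)\priv}-1)=e^{\priv'}$. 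By \cref{advanced-joint-convexity},
\[
D_{e^{\priv'}}(\mu\|\mu')\le\eta\bigl[(1-\varphi)D_{\diverge}(\mu_1\|\mu_0)+\varphi D_{\diverge}(\mu_1\|\mu_1')\bigr],
\]
so it suffices to show that both divergences on the right are at most $\frac{e^{2(h-1)\priv}-1}{e^{\priv}-1}\pdelta$. The reverse direction $D(\mu'\|\mu)$ is handled symmetrically.

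\textbf{Residual versus residual.} We couple the exposing configurations identically under $G$ and $G'$. The resulting samples differ only in the single pair $e$ inside one block, so they are at edge distance $1$ and $D_{\diverge}(\mu_1\|\mu_1')\le \pdelta_{\pmech,1}(2(h-1)\priv)\le\pdelta$. This is below the target bound, since $(e^{2(h-1)\priv}-1)/(e^{\priv}-1)\ge1$.

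\textbf{Residual versus common (main obstacle).} We need a distance-bounded coupling (\cref{def:dcc}) between uniformly random exposing configurations and uniformly random non-exposing ones that changes at most $2(h-1)$ edges. First take the conditional law of a configuration given that $x_1$ is at a fixed slot $(i,j)$ and $x_2$ is at another slot $(i,j')$ of the same block. Map it to a non-exposing configuration by swapping $x_2$ with a uniformly random vertex $w$ not in block $i$; $w$ may lie in another block or be unsampled. We must check that this map, averaged over $w$ and over the slot positions, pushes the uniform exposing law forward to the uniform non-exposing law. By symmetry, conditioned on the slot pattern, both laws should be uniform over the remaining assignments, but this counting (including the weights for $x_2$ moving into another block versus leaving the sample entirely) is the step I expect to require the most care. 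If a single swap does not give exact marginals, I would instead use a two-stage coupling: first couple the slot positions of $(x_1,x_2)$, then couple the rest identically.

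\textbf{Distance bound.} Under any such swap, only pairs incident to $x_2$'s old slot and $w$'s new slot can change. That is at most $h-1$ pairs in block $i$, and at most $h-1$ in $w$'s block if $w$ was sampled, for a total distance $\le 2(h-1)$. Group privacy (\cref{def:group-privacy}) with $k=2(h-1)$ at level $2(h-1)\priv$ gives
\[
\pdelta_{\pmech,k}(2(h-1)\priv)\le \frac{e^{2(h-1)\priv}-1}{e^{\priv}-1}\pdelta.
\]
Combining this with joint convexity via \cref{def:dcc} and the bound $(1-\varphi)+\varphi=1$ yields $\pdelta'$ as stated.
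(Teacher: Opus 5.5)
\textbf{There is a genuine gap in the residual-versus-common coupling.} Your decomposition, the residual-versus-residual coupling, the distance count for a single swap, and the final group-privacy and joint-convexity accounting all match the paper. The problem is the residual-versus-common step, which is the crux of the proof. Your primary construction there does not have the right second marginal.

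Swapping $x_2$ with a uniform $w\in[n]\setminus b_i$ always leaves $x_1$ sampled. So it never outputs a non-exposing configuration in which neither endpoint is sampled, nor one in which only $x_2$ is sampled. Under the target law $\lambda_0$ (uniform on non-exposing configurations), the pattern ``neither endpoint sampled'' has probability $q_M=\frac{(n-\ell h)(n-\ell h-1)}{n(n-1)-\ell h(h-1)}$. This is close to $1$ precisely in the regime $\ell h\ll n$ where amplification matters. Correspondingly, your map puts mass $\frac{(\ell-1)h}{n-h}$ on ``both sampled, in different blocks'' instead of the correct $q_T=\frac{\ell(\ell-1)h^2}{n(n-1)-\ell h(h-1)}$.

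The symmetry you appeal to only gives uniformity \emph{within} a fixed endpoint pattern. What must be matched is the law of the pattern itself. No map that relocates a single endpoint can do this, because the dominant case requires evicting both endpoints.

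\textbf{The fallback is the right shape but is left unspecified.} ``Couple the endpoint slots, then the rest identically'' is exactly the part that needs to be made concrete, and the choice matters. An arbitrary coupling of endpoint positions could move $x_1$ and $x_2$ into two different other blocks, changing four slots and breaking your $2(h-1)$ bound. The paper's coupling fixes this as follows:
\begin{itemize}
\item Independently draw a case $C\in\{T,H,M\}$ with probabilities $q_T$, $q_H=\frac{2\ell h(n-\ell h)}{n(n-1)-\ell h(h-1)}$, and $q_M$.
\item In case $T$, swap a uniformly chosen endpoint with a uniform vertex of a uniform other block.
\item In case $H$, replace a uniformly chosen endpoint by a uniform unsampled vertex.
\item In case $M$, replace both endpoints by a uniform ordered pair of distinct unsampled vertices.
\end{itemize}
Counting preimages then shows that every non-exposing configuration receives the same probability, i.e.\ $B_2\sim\lambda_0$. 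A fixed target is reached from $2(h-1)$ exposing configurations in case $T$, from $h-1$ in case $H$, and from $\ell h(h-1)$ in case $M$.

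Your distance argument must also cover case $M$, where two slots of the same block change. That costs at most $(h-1)+(h-2)=2h-3\le 2(h-1)$, so the rest of your proof goes through unchanged.
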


We give a detailed, but incomplete sketch of the proof here and leave the probability accounting details to~\Cref{sec:subgraph-amplification-proof}.

\begin{proof}[Proof sketch.]
Let $G\simeq_e G'$ be neighboring graphs differing on the edge $e=\{x_1,x_2\}$, $\omega = \subsampler_{\ms,h}(G)$ and $\omega'=\subsampler_{\ms,h}(G')$ be the corresponding distributions over $\range_{\ms,h}$.

Recall that the sampler draws $(v_1,\dots,v_{\ms h})$ uniformly without replacement from $[\ns]$, before dividing them into the $\ms$ ordered blocks $b_i\defeq(v_{(i-1)h+1},\dots,v_{ih})$, whose $j$'th entry we denote $b_i(j)$. Let $\blockdist\defeq \blockdist(\ms,h,\ns)$ be the distribution over block configurations $B=(b_1,\dots,b_\ms)$ defined by this process. We abuse notation and write $u\in b_i$ (respectively $\{u,v\}\subseteq b_i$) to mean that $u$ is an entry of the tuple $b_i$ (respectively that both $u$ and $v$ are). 

To move from block configurations to subgraphs, we define two mappings $g$ and $g'$ that fill in the edges between vertices. Since these are deterministic functions, it will suffice to reason about $\blockdist$ in our coupling. Formally, let $g$ be the function that takes as input a block configuration $B=(b_1,\dots,b_\ms)$ and outputs $y\defeq g(B)\defeq (y_1,\dots,y_\ms)$, where $y_i\defeq([h],E_i)$ and $E_i\defeq\{\{j,j'\}\subseteq[h]:\{b_i(j),b_i(j')\}\in E(G)\}$, exactly as in the definition of the sampler $\subsampler_{\ms,h}$~(\cref{induced-subgraph-sample}). Similarly, define $g'$ with respect to $E(G')$. Then, if $\kappa_g$, $\kappa_{g'}$ are the {Markov} kernels induced by $g$ and $g'$ respectively, then $\ome = \blockdist \kappa_g$ and $\ome' = \blockdist \kappa_{g'}$.

We will need to bound the probability that $e$ is seen while sampling induced subgraphs according to~\cref{induced-subgraph-sample}. 
Given $B=(b_1,\dots,b_\ms)\sim\blockdist$, let
$$
S := \left\{\exists i\in[\ms]\text{ such that }\{x_1,x_2\} \subseteq b_i\right\}.
$$
In other words, \(S\) is the event that the two endpoints of the differing edge $e=(x_1,x_2)$ are sampled in the same block. Then
$$\eta:= \Pr[S].$$ 
Note that $\eta$ is an upper bound on the total variation distance between $\omega$ and $\omega'$. Since there are $\binom{\ns}{2}$ possible pairs of vertices, and the probability that we see one of them is equal to the probability that they are selected among the first $\ms$ blocks, which together contain $\ms \binom{h}{2}$ pairs, we get that \[
\eta ={\frac{\ms \binom{h}{2}}{\binom{n}{2}}} = \frac{\ms h (h-1)}{\ns(\ns - 1)}.
\]

\noindent Using the notation of~\cite{steinke2022}, let $\blockdist_0\defeq \blockdist|_{S^c}$ denote $\blockdist$ conditioned on $S^c$. Similarly, define $\blockdist_1\defeq \blockdist|_{S}$. Let, 
\begin{equation*}
    \omega_1    \defeq \omega|_{S} = \blockdist_1\kappa_g,\qquad
    \omega_1'   \defeq \omega'|_{S} = \blockdist_1\kappa_{g'},\qquad
    \omega_0    \defeq \omega|_{S^c} = \blockdist_0\kappa_g
\end{equation*}
Note that we could equivalently write $\omega_0 = \omega'|_{S^C}$, since $\blockdist_0\kappa_g = \blockdist_0\kappa_{g'}$. Then, for every $y \in \range_{\ms,h}$ {(which is the set of $\ms$-tuples of graphs on the vertex set $[h]$)}, the law of total probability gives us, 
\begin{align*}
    \omega(y) &= \Pr[S^c]\Pr[g(B)=y\mid S^c] + \Pr[S]\Pr[g(B)=y\mid S]  \\
    &= (1-\eta)\cdot\omega_0(y)+\eta\cdot\omega_1(y)
\end{align*}
Since the identity holds pointwise, and we can derive a similar identity for $\omega'(y)$, we can write them distributionally as 
\begin{align*}
    \omega = (1-\eta) \omega_0 + \eta \omega_1,\qquad
    \omega' = (1-\eta) \omega_0 + \eta \omega_{1}'.
\end{align*}

Using the decomposition described above, we can apply advanced joint convexity~(\cref{advanced-joint-convexity}). 
Let $\pmech$ be an 
$(\priv,\pdelta)$-differentially private algorithm with respect to $\simeq_e$ (\cref{def:edge-edit-distance}), and let $M$ be the Markov kernel defined by \pmech. Let $\mu=\omega M$ and $\mu'=\omega' M$, and setting $\mu_0\defeq\ome_0 M$, $\mu_1\defeq\ome_1 M$ and $\mu_1'\defeq\ome_1' M$, so that $\mu = (1-\eta)\mu_0 + \eta\mu_1$ and $\mu' = (1-\eta)\mu_0 + \eta\mu_1'$, we get
\begin{align}
    D_{\diverge'}(\mu\|\mu') & = \eta D_{\diverge}(\mu_1 \| (1-\varphi)\mu_0 + \varphi\mu_1')\nonumber\\
    &\leq \eta\left[ (1-\varphi)D_{\diverge}(\mu_1\|\mu_0) + \varphi D_{{\diverge}}(\mu_1\|\mu_1') \right].\label{eq:joint-convexity-expansion}
\end{align}
The second inequality is by joint convexity. Our task is then to bound the $D_{\diverge}$ terms using couplings and group privacy profiles. Specifically, we will construct distance-bounded couplings between the pairs $\mu_1$ and $\mu_1'$ and $\mu_1$ and $\mu_0$. 

As a sketch, consider the problem of constructing the coupling between $\mu_1$ and $\mu_0$. Recall that $\mu$ is defined by the application of our private mechanism to a sample from $\ome$. Taking one step further in the chain of abstractions, consider the distributions $\blockdist_1$ and $\blockdist_0$, that are just over the vertex set. For the sake of this sketch, let $B\sim\blockdist_1$ be a sample from the distribution of vertex sub-samples conditioned on $\{x_1,x_2\}$ appearing in the same block (which would of course lead to $e$ being included in one of the subgraphs). A coupling can be described as a mapping from samples, such as $B$ from $\blockdist_1$ to samples that are distributed according to $\blockdist_0$. We characterize samples from $\blockdist_0$ as being in one of three possible states. (1) Both $x_1$ and $x_2$ get sampled, but they fall into different blocks. To transform $B$ into a sample of this form, we can move one of $x_1$ or $x_2$ into another block. (2) One of $x_1$ or $x_2$ is sampled, but the other is not. To transform $B$ into a sample of this form, we can move one of the endpoints out of the sample, replacing it with a new random element. (3) Neither $x_1$ or $x_2$ are sampled. In this case we simply have to remove $x_1$ and $x_2$ from $B$, replacing both with uniformly random vertices that weren't otherwise included in the sample.

An algorithm that turns samples like $B$ into samples from $\lambda_0$ with the right probabilities can be thought of as a coupling. For the full details of this coupling, and the simpler one between $\mu_1$ and $\mu_1'$, we refer to~\Cref{sec:subgraph-amplification-proof}, where the full probabilities and distance calculations are worked out. Here it suffices to say that our coupling yields bounded edge distance at most $2(h-1)$. Applying group-privacy at that distance yields the final theorem.
\end{proof}
\Cref{thm:induced-edge-amplification} is {extremely} general in that it applies to any graph algorithm that samples subgraphs in the way described. 
This generality however does not come for free, and for decision problems, we can in fact prove the following stronger statement, which drops the $2(h-1)$ group privacy parameter in the exponent of $\priv'$. 

\begin{theorem}[Improved privacy amplification for testers]
\label{thm:rr-subgraph-tester}
Consider an arbitrary Boolean function $f:\range_{\ms,h}\to \{0,1\}$.
Let $\pmech$ be $\priv$-randomized response applied to the output of $f$, that is, $\pmech = \RR_{\priv}\circ f$. Then $\pmech\circ\subsampler_{\ms,h}$ is $\priv'$-edge-DP with $\priv' = \log(1 + \eta(e^{\priv} - 1))$, $\eta=\frac{\ms h(h-1)}{\ns(\ns-1)}$. 
\end{theorem}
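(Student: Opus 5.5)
The plan is to reuse the decomposition from the proof of \Cref{thm:induced-edge-amplification} verbatim and replace the coupling and group-privacy step by the fact that randomized response is $\priv$-DP between \emph{any} two binary inputs. Fix edge-neighboring $G\simeq_e G'$ differing on $e=\{x_1,x_2\}$, and let $S$ be the event that $x_1,x_2$ fall in a common block. As in that proof, $\Pr[S]=\eta=\frac{\ms h(h-1)}{\ns(\ns-1)}$, and the output distributions of the sampler satisfy $\omega=(1-\eta)\omega_0+\eta\omega_1$ and $\omega'=(1-\eta)\omega_0+\eta\omega_1'$. The common part $\omega_0$ is shared because $g$ and $g'$ agree on every block configuration in $S^c$. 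Pushing these through the kernel of $\RR_\priv\circ f$ gives $\mu=(1-\eta)\mu_0+\eta\mu_1$ and $\mu'=(1-\eta)\mu_0+\eta\mu_1'$, where all three component measures live on $\{0,1\}$.

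Next I apply \Cref{advanced-joint-convexity} with $\diverge=e^\priv$ and $\diverge'=1+\eta(e^\priv-1)$, using the expanded form
\[
D_{\diverge'}(\mu\|\mu')\le\eta\left[(1-\varphi)D_{e^\priv}(\mu_1\|\mu_0)+\varphi D_{e^\priv}(\mu_1\|\mu_1')\right].
\]
It then suffices to show that each divergence on the right vanishes. Each of $\mu_1,\mu_0,\mu_1'$ has the form $\nu\,\RR_\priv$ for some distribution $\nu$ on $\{0,1\}$, namely the law of $f(y)$ under the corresponding conditional distribution. For any two inputs $b,b'\in\{0,1\}$ and any output $z$, we have $\Pr[\RR_\priv(b)=z]\le e^\priv\Pr[\RR_\priv(b')=z]$. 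This holds for all pairs, not only neighboring ones, because the domain has just two points and \Cref{def:randomized-response} gives the ratio bound directly. So $D_{e^\priv}(\RR_\priv(b)\|\RR_\priv(b'))=0$ for all $b,b'$.

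To pass from point masses to mixtures, I take any coupling of $\nu$ and $\nu'$, for instance the product coupling, and apply joint convexity exactly as in \Cref{def:dcc}. This gives $D_{e^\priv}(\nu\RR_\priv\|\nu'\RR_\priv)\le\sum\pi_{b,b'}\cdot 0=0$. Hence $D_{\diverge'}(\mu\|\mu')=0$. By symmetry, swapping the roles of $G$ and $G'$, the same holds with $\mu$ and $\mu'$ exchanged. Therefore $\pmech\circ\subsampler_{\ms,h}$ is $(\log\diverge',0)$-edge-DP, which is the claimed $\priv'=\log(1+\eta(e^\priv-1))$.

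The main point needing care, though it is not hard, is confirming that the decomposition is valid when the mechanism depends on $G$ only through the sample $y$, and that the probability of the event $S$ is computed correctly. Both are inherited from the proof of \Cref{thm:induced-edge-amplification}. Unlike that proof, no distance-bounded coupling is needed, which is exactly why the factor $2(h-1)$ in the exponent disappears.
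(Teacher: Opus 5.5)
Your proposal is correct and follows essentially the same route as the paper. It conditions on the event $S$ with $\eta=\frac{\ell h(h-1)}{n(n-1)}$, applies advanced joint convexity with $\alpha=e^{\varepsilon}$, and uses an arbitrary coupling together with the fact that $\mathrm{RR}_{\varepsilon}$ is $\varepsilon$-DP between any two binary inputs to make both residual divergences vanish, finishing by swapping the roles of $G$ and $G'$.
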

\begin{proof}
Let $G\simeq_e G'$ be edge-neighboring graphs differing on the edge $e=\{x_1,x_2\}$, $\omega = \subsampler_{\ms,h}(G)$ and $\omega'=\subsampler_{\ms,h}(G')$ be the corresponding distributions over $\range_{\ms,h}$. Recall from the proof sketch above that $S$ denotes the event that both endpoints of $e$ are sampled in the same block, that
\[
\eta = \Pr[S] = \frac{\ms h(h-1)}{\ns(\ns-1)}
\]
and that conditioning on $S$ yields the decomposition
\begin{align*}
    \omega = (1-\eta) \omega_0 + \eta \omega_1,\qquad
    \omega' = (1-\eta) \omega_0 + \eta \omega_{1}',
\end{align*}
where $\omega_0 = \omega|_{S^c} = \omega'|_{S^c}$, $\omega_1 = \omega|_{S}$ and $\omega'_1 = \omega'|_{S}$.

Let $M$ be the Markov kernel defined by $\pmech=\RR_{\priv} \circ f$, and set 
\[
\mu = \omega M,\quad \
\mu' = \omega' M,\quad\
\mu_0 = \omega_0 M,\quad\
\mu_1 =\omega_1 M,\quad\
\mu'_1 = \omega'_1 M,
\]
so that $\mu = (1-\eta) \mu_0 + \eta \mu_1$ and $\mu' = (1-\eta) \mu_0 + \eta \mu_1'$. Let $\diverge= e^{\priv}$, $\diverge' = 1 + \eta (\diverge - 1)=e^{\priv'}$ and $\varphi= \diverge'/\diverge$, so that $\priv' = \log(1+ \eta (\eep - 1))$ as in the theorem statement. Applying advanced joint convexity~(\cref{advanced-joint-convexity}) to these decompositions, following from joint convexity,
\begin{equation}
    \label{eq:rr-advanced-joint-complexity}
    D_{\diverge'}(\mu\|\mu') = \eta D_{\diverge}\left( \mu_1\| (1-\varphi)\mu_0 + \varphi\mu'_1 \right) \leq \eta \left[ (1-\varphi)D_{\diverge}(\mu_1\|\mu_0) + \varphi D_{\diverge}(\mu_1\|\mu'_1) \right].
\end{equation}

It remains to bound the two divergences on the right-hand side of~\cref{eq:rr-advanced-joint-complexity}. In contrast to the more general proof, we do not need distance-bounded couplings for this step. As such, let $\pi$ be an arbitrary coupling between $\omega_1$ and $\omega_0$. By~\cref{def:dcc}, 
\[
D_{\diverge}(\mu_1\|\mu_0) \leq \sum_{y,y'} \pi(y,y') D_\diverge(\pmech(y)\|\pmech(y')) \leq \sup_{a,b\in\{0,1\}} D_{\diverge}(\RR_{\priv}(a)\|\RR_{\priv}(b)),
\]
where the last inequality holds because $\pmech(y)=\RR_{\priv}(f(y))$ and $f$ takes values in $\{0,1\}$. The same argument repeated for $\omega_1$ and $\omega'_1$ yields an identical bound on $D_\diverge(\mu_1\|\mu'_1)$.

Using the definitions of randomized response~(\cref{def:randomized-response}) and pure differential privacy in terms of $\diverge$-divergences (\Cref{def:alpha-div}) to write\[
D_\diverge(\mu_1\|\mu_0) \leq \left[ \frac{\eep}{\eep + 1   } - \diverge \frac{1}{\eep + 1}\right]_+
\]
where this is clearly 0 whenever $\diverge\geq \eep$. Therefore, the right-hand side of~\cref{eq:rr-advanced-joint-complexity} is clearly 0. We can bound $D_{\diverge'}(\mu'\|\mu)$ by repeating this argument with $G$ and $G'$ swapped. Therefore the subsampled mechanism is $\priv'$-edge differentially private with $\priv' = \log(1 + \eta (\eep - 1))$. 
\end{proof}

It is important to note that this tightening is entirely unnecessary for node-DP, where any private mechanism will yield the same result.

Now we are ready to discuss the canonical tester. Recall that canonical testers are a subclass of testers in the dense graph model that allows lifting any adaptive tester to a non-adaptive one, with only quadratic blow-up in its query complexity. The result is formally stated as follows.

\begin{theorem}[{\cite[Theorem 4.5]{goldreich2003three}}]
\label{def:canonical-tester}
    Let $\prop$ be any graph property and $\tester$ be an arbitrary (possibly adaptive) $\dst$-tester for $\prop$ with completeness and soundness error at most $1/3$.
    Suppose $\tester$ has query complexity $q(\ns,\dst,1/3)$,\footnote{Recall that the expression $q(\ns,\dst,1/3)$ denotes the query complexity of $\tester$ when the unknown graph has $n$ vertices, the proximity parameter is $\dst$ and the failure probability of $\tester$ is at most $1/3$.} then there exists a non-adaptive canonical tester for $\prop$ that samples a random subgraph on $2q(\ns,\dst,1/6)$ vertices\footnote{Note that in the worst case the canonical tester incurs an additional factor of 2 in the error probability, hence the new sample complexity depends on a failure probability of $1/6$. Additionally, the query complexity of this algorithm is $O(2q^2)$, since we have to query all combinations of edges in the worst case. We will distinguish between \emph{sampling} vertices and \emph{querying} edges wherever it is ambiguous.} and tests $\prop$ with completeness and soundness error at most $1/3$. If $\tester$ has one-sided error, then the canonical tester also has one-sided error.
\end{theorem}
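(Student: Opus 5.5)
The plan follows the standard route of \cite{goldreich2003three}, proceeding in two stages: first make an arbitrary tester \emph{vertex-oblivious}, then make it \emph{non-adaptive}. Start with the given tester $\tester$ of query complexity $q=q(\ns,\dst,1/6)$; we may first amplify the error from $1/3$ down to $1/6$, which is why the statement refers to $q(\ns,\dst,1/6)$. The first step uses the fact that $\prop$ is closed under isomorphism (\cref{def:graph-property}). Define $\tester'$ to pick a uniformly random permutation $\sigma$ of $[\ns]$ and run $\tester$ on $\sigma(G)$, answering each query $\{u,v\}$ of $\tester$ by querying $\{\sigma^{-1}(u),\sigma^{-1}(v)\}$ in $G$. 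Since $\sigma(G)\in\prop$ iff $G\in\prop$, and distance to $\prop$ is invariant under relabelling, $\tester'$ has the same completeness and soundness guarantees, and it preserves one-sided error.

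Next comes the key structural step: show that $\tester'$ is equivalent to a process that exposes a uniformly random sequence of at most $2q$ distinct vertices and makes decisions based only on the induced subgraph among the vertices exposed so far. Each query involves at most two vertices, so at most $2q$ distinct vertices are ever touched. We simulate lazily. Whenever $\tester$ names a vertex it has not named before, $\sigma^{-1}$ maps it to a uniformly random vertex not yet exposed, independent of everything so far. Without loss of generality, we may also modify $\tester$ so that it additionally queries all pairs among the vertices it has touched; extra information never hurts. In that form, the behaviour of the process depends only on the labelled induced subgraph on the sequence $(v_1,\dots,v_{2q})$. That sequence is exactly a uniform without-replacement sample, i.e.\ distributed as $\cansampler$ with $h=2q$ (\cref{induced-subgraph-sample}).

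This gives the non-adaptive tester: sample $2q$ uniformly random distinct vertices, query the entire induced subgraph, and simulate $\tester'$ internally using those vertices in order as the answers to the lazy draws of $\sigma^{-1}$. Its output distribution coincides with that of $\tester'$, so the error is still at most $1/6\le 1/3$, and one-sidedness is preserved. A canonical tester in the strict sense must additionally decide based only on the \emph{isomorphism class} of the induced subgraph, not its labelling. For this, we randomly relabel the sample, since the sample distribution is itself permutation-invariant. We then accept with the averaged acceptance probability, or equivalently test membership in the set of subgraphs accepted with probability at least $1/2$. This last derandomization is where the error degrades by a factor of $2$: from $1/6$ to at most $1/3$.

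The main obstacle is making that final derandomization rigorous. Conditioned on the isomorphism class of the sampled subgraph, the randomized decision must be replaced by a deterministic rule, and we must argue via Markov's inequality that the resulting error at most doubles. Along the way we need to check that a graph in $\prop$ never makes a one-sided tester reject, so that one-sidedness survives.
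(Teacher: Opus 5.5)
Your proposal is correct. The paper cites this result and gives no proof of its own, and your argument is the standard Goldreich--Trevisan one: random relabelling via closure under isomorphism; the lazy-simulation observation that the at most $2q$ touched vertices form a uniform without-replacement sample, so their whole induced subgraph can be queried up front; symmetrizing over orderings so the acceptance probability $p(\cdot)$ depends only on the isomorphism class; and thresholding at $1/2$. The Markov step you flag as the obstacle is one line. It gives $\Pr_S[p(G[S])<1/2]\le 2\cdot\frac16$ for $G\in\Pi$ and $\Pr_S[p(G[S])\ge 1/2]\le 2\cdot\frac16$ for $\gamma$-far $G$. One-sidedness survives because every ordered sample has positive probability, so $p(G[S])=1$ for all samples of any $G\in\Pi$.

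The only slip is the word ``equivalently'' in your last step. Accepting with probability $p(H)$ keeps the error at $1/6$. It is the deterministic threshold rule that costs the factor $2$, which is why the statement uses $q(n,\gamma,1/6)$.
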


The canonical tester as stated above is clearly compatible with~\cref{thm:rr-subgraph-tester}. Putting the two together demonstrates the existence of private analogues of all canonical testers with query complexity $q=o(\sqrt{\ns})$. Clearly this includes all constant-query property testers and many more with non-constant query complexities.

\begin{theorem}[Edge-private canonical tester] \label{thm:dpcanonicaltester} 
Let $G=([\ns],E)$ be an unknown graph on $\ns$ vertices, \tester be a non-private canonical property tester for some property \prop, $\dist, \failp\in(0,1)$ and $\priv>0$ are the proximity, failure probability, and privacy parameters, respectively. \tester samples $q\defeq q(\ns,\dst,\failp/2)$ vertices from $G$ uniformly at random {without replacement}, queries the induced subgraph on the sampled vertices using at most {$q^2$ queries}, and rejects any graph at least $\dst$-far from \prop with probability at least $1-\frac{\failp}{2}$. Then there exists a private tester $\tester'$ that is the application of $\RR_{\priv}$ to the output of \tester, with $\priv = \log(1 + \frac{ \ns(\ns-1)}{q(q - 1)}(\eepp - 1))$
. $\tester'$ is an $\priv'$-\emph{edge differentially private}, canonical \emph{two-sided error} property tester with the same query complexity as \tester and failure probability \[
    \failp'\leq \frac{\failp}{2} + \frac{q(q-1)}{\ns(\ns-1)(e^{\priv'} - 1)}=\bigO{\max\left\{\failp, \frac{q^2}{\ns^2(e^{\priv'} - 1)}\right\}}.
\]
\noindent In particular, if $\priv' \geq \log(1 + \frac{4q^2}{\ns^2\failp})$, then $\tester'$ is an $\priv$-edge DP {$\dst$}-tester with query complexity $q^2$ and failure probability $\failp$.
\end{theorem}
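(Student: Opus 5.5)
The argument has two parts: privacy from \cref{thm:rr-subgraph-tester}, and accuracy from a direct calculation on randomized response. Throughout, $\priv'$ denotes the target privacy level and $\priv$ the internal randomized-response parameter, linked by the equation in the statement.

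\textbf{Privacy.} The canonical tester $\tester$ factors as $f\circ\cansampler$ with $h=q$. Here $\cansampler=\subsampler_{1,q}$ is the induced subgraph sampler of \cref{induced-subgraph-sample} with $\ms=1$. The map $f:\range_{1,q}\to\{0,1\}$ is the deterministic decision ``the sampled graph on $[q]$ has the property''. If the canonical tester uses internal randomness, we fix it or absorb it into a randomized $f$; joint convexity still gives the same bound, since randomized response is $\priv$-indistinguishable on any pair of bits.

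Take $\tester'=\RR_{\priv}\circ f\circ\cansampler$. \cref{thm:rr-subgraph-tester} with $\ms=1$, $h=q$ gives $\eta=\frac{q(q-1)}{\ns(\ns-1)}$, and the mechanism is $\log(1+\eta(e^{\priv}-1))$-edge DP. We choose $\priv$ to solve $1+\eta(e^{\priv}-1)=e^{\priv'}$, that is,
\[
e^{\priv}-1=\frac{\ns(\ns-1)}{q(q-1)}(e^{\priv'}-1).
\]
This is exactly the formula in the statement, so $\tester'$ is $\priv'$-edge DP. The query complexity is unchanged, because randomized response makes no queries.

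\textbf{Accuracy.} Suppose $G\in\prop$ or $G$ is $\dst$-far from $\prop$. The tester $\tester$ outputs the correct bit with probability at least $1-\failp/2$. Randomized response flips its input with probability $\frac{1}{e^{\priv}+1}\le\frac{1}{e^{\priv}-1}$. By a union bound,
\[
\failp'\le\frac{\failp}{2}+\frac{1}{e^{\priv}-1}=\frac{\failp}{2}+\frac{q(q-1)}{\ns(\ns-1)(e^{\priv'}-1)}.
\]
The $O(\max\{\cdot\})$ form then follows from $\frac{q(q-1)}{\ns(\ns-1)}\le \frac{q^2}{\ns^2}$, which holds since $q\le\ns$.

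\textbf{The ``in particular'' clause.} Assume $e^{\priv'}-1\ge\frac{4q^2}{\ns^2\failp}$. Then the second term is at most $\frac{q^2}{\ns^2}\cdot\frac{\ns^2\failp}{4q^2}=\failp/4$. Hence $\failp'\le 3\failp/4\le\failp$.

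The only real subtlety is the parameter bookkeeping: keeping straight which of $\priv$ and $\priv'$ is the internal randomized-response parameter and which is the target privacy level. The statement itself swaps these names in its final line. The other thing to check carefully is that \cref{thm:rr-subgraph-tester} applies with $\ms=1$ and $\ns\ge q$.
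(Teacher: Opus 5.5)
Your proposal is correct and takes essentially the same route as the paper. Privacy comes from \cref{thm:rr-subgraph-tester} with $\ms=1$, $h=q$ and $\priv$ chosen so that $1+\eta(e^{\priv}-1)=e^{\priv'}$. Accuracy comes from the union bound with $\frac{1}{e^{\priv}+1}\le\frac{1}{e^{\priv}-1}$. Your handling of the ``in particular'' clause (via $\frac{q(q-1)}{\ns(\ns-1)}\le\frac{q^2}{\ns^2}$) is slightly cleaner than the paper's, which instead derives the less demanding threshold $\log\bigl(1+\frac{2q(q-1)}{\ns(\ns-1)\failp}\bigr)$ that the stated condition implies.
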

\begin{proof}
    First, note that~\cref{thm:rr-subgraph-tester}
    implies that this process is private as claimed. The probability that $\RR_{\priv}$ flips any bit it receives as input is $1/(\eep + 1)$ (see~\cref{def:randomized-response}), therefore the probability that this process incurs an error is
    \begin{align*}
        \Pr[\RR_{\priv}\circ \tester\text{ errors}] &= \Pr[\tester\text{ errors}]\cdot\Pr[\RR_{\priv}\text{ does not flip input}] + \Pr[\tester\text{ correct}]\cdot \Pr[\RR_{\priv}\text{ flips input}]\\
        &\leq \Pr[\tester\text{ errors}] + \Pr[\RR_{\priv}\text{ flips input}]\\
        &\leq \failp' + \frac{1}{\eep + 1 } \\%\leq \failp + e^{-\priv_0}\\
        &\leq\failp' +  \frac{q(q-1)}{\ns(\ns-1)(\eepp - 1)}.
    \end{align*}
    where the last inequality is by~\cref{thm:rr-subgraph-tester} with $\ms=1$ and $h=q$. Setting $\failp'=\failp/2$, we can see that the total failure probability is at most $\failp$ when $\priv'\geq \log(1+\frac{2q(q-1)}{\ns(\ns-1)\failp})$.
\end{proof}

For completeness, we here state the equivalent result for node-DP, which follows immediately along the same lines, substituting where appropriate the edge-DP amplification statement for \cref{lem:node-dp-induced-subgraphs}.
{
\begin{corollary}[Node-private canonical tester]
\label{cor:node-dp-canonical}
    Let $G=([\ns],E)$ be an unknown graph on $\ns$ vertices, $\tester$ be a non-private canonical property tester for some property \property, $\dst,\failp\in(0,1)$ and $\priv>0$ are the proximity, failure probability and privacy parameters, respectively. \tester samples $q\defeq q(\ns,\dst,\failp/2)$ vertices from $G$ uniformly at random without replacement, queries the induced subgraph using at most $q^2$ queries, and rejects any graph that is $\dst$-far from \property with probability at least $1-\frac{\failp}{2}$. Then there exists a private tester $\tester'$ that is the application of $\RR_\priv$ to the output of \tester, with $\priv=\log(1 + \frac{\ns}{q}(\eepp - 1))$. $\tester'$ is an $\priv'$\emph{-node differentially private}, canonical \emph{two-sided error} property tester with the same query complexity as \tester and failure probability\[
    \failp' \leq \frac{\failp}{2} + \frac{q}{\ns(\eepp - 1)} = \bigO{\max \left\{ \failp, \frac{q}{\ns(\eepp - 1)} \right\} }.
    \]
    In particular, if $\priv\geq \log(1 + 4\frac{q}{\ns\failp})$, then $\tester'$ is an $\priv$-node DP \dst-tester with query complexity $q^2$ and failure probability $\failp$.
\end{corollary}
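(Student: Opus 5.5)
The plan mirrors the proof of \Cref{thm:dpcanonicaltester}, with the node-level amplification of \cref{lem:node-dp-induced-subgraphs} in place of \cref{thm:rr-subgraph-tester}. First, privacy. The tester $\tester'$ is $\RR_{\priv}\circ f\circ\cansampler$ with $h=q$ and $\ms=1$, where $f$ is the deterministic decision rule of the canonical tester applied to the sampled induced subgraph. I will use that $\RR_{\priv}\circ f$ is $\priv$-DP (with $\pdelta=0$) with respect to \emph{any} neighboring relation on the samples, and in particular with respect to node-neighboring samples. Indeed, randomized response on one bit satisfies $D_{\eep}(\RR_\priv(a)\|\RR_\priv(b))=0$ for all $a,b\in\{0,1\}$, as already used in the proof of \cref{thm:rr-subgraph-tester}.

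Invoking \cref{lem:node-dp-induced-subgraphs} with $\ms h=q$ then gives that $\tester'$ is $(\log(1+\frac{q}{\ns}(\eep-1)),0)$-node DP. I set this quantity equal to $\priv'$, so $\priv$ is the solution of $\eep-1=\frac{\ns}{q}(\eepp-1)$, which is exactly the choice in the statement. If I want to avoid relying on the exact form of the cited lemma, I can instead redo the argument of \cref{thm:rr-subgraph-tester}. Condition on the event $S$ that the differing vertex $v$ is sampled, which has probability $q/\ns$. Off $S$, the sampled subgraphs of $G$ and $G'$ coincide, giving the common component. Then apply advanced joint convexity (\cref{advanced-joint-convexity}) and bound both residual divergences by $0$ using the randomized-response property.

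Second, accuracy. The error decomposition is the same as in \Cref{thm:dpcanonicaltester}:
\[
\Pr[\tester'\text{ errs}]\le \Pr[\tester\text{ errs}]+\Pr[\RR_\priv\text{ flips}]\le \frac{\failp}{2}+\frac{1}{\eep+1}.
\]
Since $\eep+1>\eep-1=\frac{\ns}{q}(\eepp-1)$, this gives $\failp'\le \frac{\failp}{2}+\frac{q}{\ns(\eepp-1)}$. Requiring the second term to be at most $\failp/2$ gives the condition $\eepp\ge 1+\frac{2q}{\ns\failp}$, which is implied by the stated threshold $\log(1+4\frac{q}{\ns\failp})$.

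The query complexity is unchanged, because $\RR_\priv$ only post-processes the output. The tester has two-sided error, by the remark in the introduction that one-sided error is incompatible with pure DP. The only real obstacle is bookkeeping. The statement uses $\priv$ and $\priv'$ somewhat interchangeably in the final clause, so I must make clear that $\priv'$ is the node-DP guarantee and $\priv$ is the randomized-response parameter. Beyond that, I only need to check that the lemma applies with $\pdelta=0$ to a mechanism that is private for arbitrary pairs of inputs.
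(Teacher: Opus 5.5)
Your proposal is correct and matches the paper's own argument: the paper obtains this corollary by rerunning the proof of \Cref{thm:dpcanonicaltester} with \cref{lem:node-dp-induced-subgraphs} (with $\ell=1$, $h=q$) in place of \cref{thm:rr-subgraph-tester}, and it uses the same accuracy bound $1/(e^{\varepsilon}+1)\le q/(n(e^{\varepsilon'}-1))$ that you use. Your fallback is also valid. There you redo advanced joint convexity directly, conditioning on the differing vertex being sampled (probability $q/n$) and bounding both residual divergences by $0$ via randomized response. This sidesteps any question about which neighboring relation on samples the cited lemma assumes.
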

}
There are several interesting observations. Firstly, any constant-query non-private testable properties in the dense graph model can be privately tested via this canonical tester, incurring almost no additional error.  
{
\begin{remark}[On almost having one-sided error]
\label{rem:almost-one-sided}
While the private canonical tester has two-sided error, the error will not be wholly symmetric. {If we slightly expanded our proof} to distinguish between completeness and soundness error, then we would see that for one-sided error testers the private probability of failure will be $O(q^2/(\ns^2 \priv))$ on the completeness side, while the soundness error is $O(\failp + q^2/(\ns^2 \priv))$. As such, one might want to consider these algorithms as being ``almost one-sided.''
\end{remark}
}

To gain a better understanding of how the additional error in the private canonical tester affects the query complexity, we demonstrate its effect on bipartiteness testing in the dense graph model. It is known from \cite{goldreich1998property} that the canonical bipartiteness tester in this model samples $\widetilde{O}(1/\dst^2)$ vertices from the unknown graph uniformly at random and queries the induced subgraph of the sampled vertices. The tester outputs accept if and only if the sampled induced subgraph is bipartite. This requires $\widetilde{O}(1/\dst^4)$ queries for success probability at least $2/3$.

\begin{corollary}[Private canonical bipartiteness tester for dense graphs]
    \label{coro:biptestdense}
    For all $\priv=\bigOmetilde{\log(1+\frac{1}{\dst^4\ns^2})}$, there exists a {$\priv$-edge} DP $\dst$-tester for bipartiteness testing. If the query complexity of the non-private tester is $q=\bigOtilde{1/\dst^4}$, then the private tester has query complexity at most $4q$.
\end{corollary}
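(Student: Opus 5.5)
The plan is to instantiate \cref{thm:dpcanonicaltester} with the canonical bipartiteness tester of \cite{goldreich1998property}. First I would fix the non-private tester $\tester$. It samples $s=\widetilde{O}(1/\dst^2)$ vertices uniformly without replacement, queries every pair among them, and accepts iff the induced subgraph is bipartite. This tester has one-sided error, and it rejects graphs that are $\dst$-far from bipartite with probability at least $2/3$. I would amplify its soundness to $1-\failp/2$ for a fixed constant $\failp$ (say $\failp=1/3$). The cleanest way is to enlarge the sample by a constant factor, which keeps the sampler of the form $\cansampler$ with $\ms=1$ and $h=s$. Taking a majority over repetitions would leave the canonical single-subgraph form, although \cref{thm:rr-subgraph-tester} would still apply with $\ms>1$. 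Either way, the number of sampled vertices remains $s=\widetilde{O}(1/\dst^2)$, and the number of queries remains $q\le s^2=\widetilde{O}(1/\dst^4)$.

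Second, I would apply $\RR_{\priv}$ to the output and invoke \cref{thm:rr-subgraph-tester} with $\ms=1$ and $h=s$. The mechanism is then $\priv'$-edge DP with $\priv'=\log(1+\eta(e^{\priv}-1))$ and $\eta=s(s-1)/(\ns(\ns-1))$. Equivalently, following the proof of \cref{thm:dpcanonicaltester}, the error probability is at most
\[
\failp/2+\frac{1}{e^{\priv}+1}\le \failp/2+\frac{\eta}{e^{\priv'}-1}.
\]
Requiring this to be at most $\failp$ means choosing $\priv'\ge\log(1+2\eta/\failp)=\log\bigl(1+O(s^2/\ns^2)\bigr)$.

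The main point to handle carefully is the parametrization, since the theorem states the condition in terms of the \emph{vertex} sample size. Because $s^2=\widetilde{O}(1/\dst^4)$, the threshold becomes $\priv'=\widetilde{\Omega}(\log(1+\frac{1}{\dst^4\ns^2}))$, which is exactly the stated condition. I would check that the exponent $4$ comes from $s^2$, and not from squaring the query count again.

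For the query complexity claim, the private tester makes exactly the queries of $\tester$, since randomized response adds none. The only extra cost is the constant-factor blow-up from boosting the success probability to $1-\failp/2$, which is bounded by the stated $4q$. Finally, privacy holds for every input graph regardless of correctness, so the resulting tester is $\priv'$-edge DP with two-sided error, as claimed.
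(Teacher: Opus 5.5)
Your proposal is correct and follows essentially the same route as the paper. The paper also boosts the canonical tester by doubling its vertex sample, since two rounds take the one-sided soundness error from $1/3$ to $1/9$, and then applies \cref{thm:dpcanonicaltester} with $2s$ sampled vertices to get the threshold $\log(1+O(s^2/\ns^2))$ with $s=\widetilde{O}(1/\dst^2)$. That factor of two in the original sample size $s$ is exactly where the bound $(2s)^2=4q$ comes from, which you left as an unspecified constant-factor blow-up.
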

\begin{proof}
    The non-private canonical tester for bipartiteness samples $\ms=\bigOtilde{1/\dst^2}$ vertices and inspects the induced subgraph for bipartiteness testing and decides based on the sampled subgraph. Let us assume that this tester has the standard failure probability of at most $1/3$. Repeating this algorithm twice gives a new algorithm with failure probability at most $1/9$ (since the original tester has one-sided error). Applying~\cref{thm:dpcanonicaltester} to this algorithm, we see that for all $\priv\geq \log(1 + \frac{2(2\ms)(2\ms - 1)}{2\ns(\ns-1)(2/9)})$, the private tester has failure probability at most 1/3.
\end{proof}

One should rightly note that there are bipartiteness testers with improved query complexity in the dense graph model~\cite{DBLP:journals/siamdm/AlonK02}; however, they are not canonical. For better exposition, we only present the private \emph{canonical} bipartiteness tester in this work. Secondly, while dense-graph bipartiteness testing can be done with query complexity independent of the size of the graph, for the bounded-degree graph model, a more sophisticated bipartiteness tester needs to be designed, which we discuss later in \Cref{sec:random-walks}.

\subsection{Testing Subgraph Freeness}
We now move to subgraph freeness testing in dense graphs. We first define the subgraph freeness property and the corresponding testing problem, which we will refer to as $H$-freeness testing when $H$ is the subgraph of interest.

\begin{definition}[Subgraph freeness~\cite{goldreich2017introduction}]\label{defi:subgraphfreness}
Let $H$ be a fixed graph. A graph $G=(V,E)$ is $H$-free if $G$ contains no subgraph that is isomorphic to $H$.
\end{definition}

\begin{definition}[$H$-freeness testing]
Let $G=([\ns], E)$ be an unknown graph, $H$ be a known, {connected},\footnote{Subgraph freeness testing when $H$ is not a connected graph is certainly possible non-privately~\cite{goldreich2017introduction}, but we do not consider it in this work.} graph and $\dst \in (0,1]$ be a parameter. Given query access to $G$, the goal of \emph{$H$-freeness testing} is to distinguish with probability at least $2/3$ whether $G$ does not contain $H$ as a subgraph, or is $\dst$-\emph{far}-from $H$-free.
\end{definition}

{
In the statement of our privacy amplification theorem (\cref{thm:induced-edge-amplification}), the quantity $\ms\cdot h(h-1)$, where $\ms$ and $h$ are the number and size of the sampled subgraphs respectively, is clearly connected to the number of \emph{edges} sampled in the procedure. There are many practical instances where, unlike the canonical tester, we might sample many small subgraphs and would rather not pay quadratically for edge queries between those subgraphs that we do not need. A perfect example of this kind of tester is the $H$-freeness tester.
}

If a graph is $\dst$-far from being $H$-free, then an estimable fraction of the graph must be made of copies of $H$. This fact, along with the query complexity of known $H$-freeness testers, is captured by the \emph{graph removal lemma}.

{%
\citet[\S 8.4, Corollary 8.19]{goldreich2016two} showed that $H$-freeness testing and proximity oblivious testers are directly connected via \emph{Szemeredi's regularity lemma}~\cite{szemeredi1975regular}. For our purposes, we will use the following graph removal lemma, which follows from Szemeredi's regularity lemma.
}
\begin{lemma}[Graph Removal Lemma~{{\cite[Theorem 1.1]{ConlonF13}}}]
\label{lem:graph-removal}
Let $G$ be a dense graph with $\ns$ vertices. For every fixed graph $H$ and every $\dst \in (0,1]$, there exists $\prox = \prox(\dst, H) > 0$ such that if $G$ contains fewer than $\prox \cdot \ns^{v(H)}$ copies of $H$, then $G$ can be made $H$-free by removing at most $\dst \ns^2$ edges, {where $v(H)$ denotes the number of vertices in $H$}.
\end{lemma}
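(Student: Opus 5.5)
The approach is the classical Szemer\'edi-regularity argument, proved in contrapositive form. Suppose $G$ cannot be made $H$-free by deleting at most $\dst\ns^2$ edges. We will show that $G$ contains at least $\prox\ns^{h}$ copies of $H$, where $h=v(H)$ and $\prox$ depends only on $\dst$ and $H$. Throughout, fix a regularity parameter $\epsilon=\epsilon(\dst,h)\ll \dst/h^2$, to be chosen small enough for the counting step, and a density threshold $\dst/4$.

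\textbf{Step 1 (regularize and clean).} Apply Szemer\'edi's regularity lemma~\cite{szemeredi1975regular} with parameter $\epsilon$ and a lower bound $m_0\geq 1/\epsilon$ on the number of parts. This gives an equipartition $V_1,\dots,V_M$ with $M\le M(\epsilon)$, in which all but $\epsilon M^2$ pairs are $\epsilon$-regular. Delete the following edges:
\begin{itemize}
\item all edges inside parts, at most $M\cdot(\ns/M)^2=\ns^2/M\le\epsilon\ns^2$;
\item all edges across irregular pairs, at most $\epsilon\ns^2$;
\item all edges across pairs of density below $\dst/4$, at most $\frac{\dst}{4}\ns^2$.
\end{itemize}
In total this removes fewer than $\dst\ns^2$ edges for small $\epsilon$. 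By assumption the cleaned graph $G'$ still contains a copy of $H$.

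\textbf{Step 2 (pull back to the cluster graph).} Map each vertex of that copy to the index of the part containing it. This gives a map $\sigma:V(H)\to[M]$. Since edges inside parts were deleted, adjacent vertices of $H$ land in distinct parts. Every edge $uv\in E(H)$ therefore lands on a pair $(V_{\sigma(u)},V_{\sigma(v)})$ that is $\epsilon$-regular with density at least $\dst/4$. Note that $\sigma$ need not be injective, because non-adjacent vertices of $H$ may share a part.

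\textbf{Step 3 (counting lemma).} This is the main technical step. Apply the graph counting (embedding) lemma to the parts $V_{\sigma(u)}$, where pairs indexed by edges of $H$ are $\epsilon$-regular of density at least $\dst/4$. For $\epsilon\le \epsilon_0(\dst,h)$, the lemma yields at least $\frac12(\dst/4)^{e(H)}(\ns/M)^{h}$ homomorphic copies of $H$ with $u\mapsto V_{\sigma(u)}$. I would prove it by the standard greedy embedding: embed vertices one at a time, and at each step keep candidate sets of size at least $(\dst/4-\epsilon)^{j}|V_i|$. Regularity guarantees that all but at most $h\epsilon|V_i|$ choices keep every future candidate set large. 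Non-injectivity of $\sigma$ is harmless here, because we only require candidate sets to be large. The map may also be degenerate, sending two vertices to the same vertex; these degenerate maps number only $O_h(\ns^{h-1})$.

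\textbf{Step 4 (fix $\prox$).} Set $c=\frac12(\dst/4)^{e(H)}M(\epsilon)^{-h}$. For $\ns\ge n_0(\dst,H)$, the degenerate maps are at most $\frac{c}{2}\ns^h$, so there are at least $\frac{c}{2}\ns^h$ genuine copies. For the finitely many $\ns<n_0$, note that any graph containing $H$ contains at least one copy, which is at least $n_0^{-h}\ns^h$ copies. Finally, take $\prox=\min\{c/2,\,n_0^{-h}\}$ (dividing by $|\mathrm{Aut}(H)|$ if copies are counted as unlabeled subgraphs). This yields the lemma.

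The hard step is Step 3, in particular setting $\epsilon$ consistently with the density threshold so that the greedy embedding never empties a candidate set. A possible pitfall is that $\prox$ becomes tower-type in $1/\dst$ through $M(\epsilon)$. This is acceptable, since the lemma only asserts that some $\prox>0$ exists.
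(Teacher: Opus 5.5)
Your proposal is correct. You regularize, delete the intra-part, irregular-pair and sparse-pair edges (fewer than $\dst\ns^2$ in total), map a surviving copy of $H$ to a possibly non-injective map into the cluster graph, count homomorphisms greedily while discarding the $O_h(\ns^{h-1})$ degenerate ones, and handle small $\ns$ by the trivial one-copy bound. The paper gives no proof of its own; it only cites \cite{ConlonF13} and remarks that the lemma follows from Szemer\'edi's regularity lemma, and your argument is exactly that standard regularity-based proof.
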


Since our goal is to test whether an unknown graph is far from being $H$-free, we will make use of the contrapositive of \Cref{lem:graph-removal}. If a graph is $\dst$-far from being $H$-free, then it must contain \emph{at least} $\prox\cdot \ns^{v(H)}$ copies of $H$. Therefore, a random set of $v(H)$ vertices will contain a copy of $H$ for a $\prox$ fraction of the time. Doing this once is the ``proximity-oblivious tester'' of \citet{goldreich2016two}. Clearly, repeating this test $O(1/\prox)$ times yields a constant probability of witnessing the subgraph $H$ in a far from $H$-free graph. The following is implicit in~\cite[\S 8.4.8-19]{goldreich2017introduction}

\begin{corollary}[Testing $H$-freeness, {\cite[\S 8.4.8-19]{goldreich2017introduction}}]
Let $G=([\ns],E)$ be an unknown dense graph, $H$ be a known graph, and $\dst \in (0,1]$ be a {proximity} parameter. If $G$ is $\dst$-far from $H$-free, then there exists a one-sided error $H$-freeness tester with query complexity $O(v(H)^2/\prox(\dst,H))$, where $\prox(\dst,H)\cdot n^{v(H)}$ is the number of copies of $H$ in $G$.
\end{corollary}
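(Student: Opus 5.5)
The plan is to analyze the proximity-oblivious basic test and then amplify it by repetition. The basic test samples a uniformly random ordered $h$-tuple of distinct vertices, with $h=v(H)$, queries all $\binom{h}{2}$ pairs among them, and rejects if and only if the induced subgraph contains a copy of $H$ as a (not necessarily induced) subgraph. The full tester repeats this basic test $\ms$ times and rejects if any run rejects. The query complexity is $\ms\binom{h}{2}=O(h^2\ms)$, so it suffices to take $\ms=O(1/\prox)$.

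\textbf{Completeness.} If $G$ is $H$-free, then no induced subgraph of $G$ contains a copy of $H$. Hence every basic test accepts with probability $1$, and the tester has one-sided error.

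\textbf{Soundness.} If $G$ is $\dst$-far from $H$-free, the contrapositive of \Cref{lem:graph-removal} shows that $G$ contains at least $\prox\cdot \ns^{h}$ copies of $H$. A copy of $H$ is determined by its $h$ vertex images together with which edges are used. Each $h$-set of vertices supports at most $h!\,2^{\binom h2}$ copies, a constant depending only on $H$, and this constant can be absorbed into $\prox$. It follows that at least a $\prox$-fraction of ordered $h$-tuples span a copy of $H$. Each basic test therefore finds a copy with probability at least $\prox$, up to the normalization $\ns^h$ versus $\ns(\ns-1)\cdots(\ns-h+1)$, which only helps.

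\textbf{Amplification.} The $\ms$ runs are independent, so all of them miss with probability at most $(1-\prox)^{\ms}\le e^{-\prox\ms}$. Choosing $\ms=\lceil 2/\prox\rceil$ gives rejection probability at least $1-e^{-2}\ge 2/3$, and the query complexity becomes $O(h^2/\prox)$. If the samples are instead drawn as $\ms$ disjoint blocks without replacement, as in $\subsampler_{\ms,h}$ and as needed later for the privacy amplification, the blocks are no longer independent. In that case I would exploit negative association: conditioning on earlier blocks missing $H$-copies does not decrease the chance of the next block hitting one by more than a lower-order term when $\ms h\ll \ns$. Alternatively, I would state the corollary with independent repetitions and handle the block version separately.

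\textbf{Main obstacle.} The delicate point is the bookkeeping between ``copies of $H$'' counted by \Cref{lem:graph-removal}, which are labelled embeddings of order $\ns^h$, and the probability that a random $h$-tuple contains one. This involves constants that depend only on $H$ and are harmless, since $\prox$ already depends on $H$.
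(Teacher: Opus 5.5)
Your proposal is correct and follows the same route as the paper: apply the contrapositive of \Cref{lem:graph-removal} to get at least $\prox n^{v(H)}$ copies of $H$, note that one uniformly random $v(H)$-set spans a copy with probability at least $\prox$, and repeat $O(1/\prox)$ times, with one-sided error because an $H$-free graph is never rejected. The constant you absorb into $\prox$ is not needed: counting labelled embeddings, or noting that an $h$-set spans at most $h!/|\mathrm{Aut}(H)|$ copies, gives hitting probability at least $\prox\, n^h/(n(n-1)\cdots(n-h+1))\ge\prox$ directly; and the disjoint-block discussion goes beyond what this corollary, which uses independent repetitions, requires.
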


The above corollary follows from the graph removal lemma. Since $G$ is $\dst$-far from $H$-free, following \cref{lem:graph-removal}, there are at least $\prox(\dst,H)\cdot n^{v(H)}$ copies of $H$ in $G$. Therefore, a random selection of $v(H)$ vertices from $G$ will witness a copy of $H$ with probability at least $\prox$. Thus, repeating this process $O(1/\prox)$ times will yield a one-sided error $H$-freeness tester with query complexity $O(v(H)^2/\prox(\dst,H))$. The non-private algorithm (\Cref{alg:Hfreenessdensegraph}) is described in~\cref{sec:algorithms}. %

In fact, the above method of testing is closely related to the concept of \emph{proximity-oblivious testers}~\cite{goldreich2017introduction}. Proximity Oblivious testers are a set of testers that repeat some \emph{basic test} a number of times depending on the proximity parameter $\dst$, while the basic test is independent and oblivious to the proximity parameter. This model was introduced in the seminal work of \cite{goldreich2011proximity}, where the authors studied one-sided error testers. Later, \cite{goldreich2016two} also studied this class of testers in the two-sided error setting.

{As we mentioned in the introduction, under node differential privacy, there is absolutely no distinction between this tester and the canonical tester. This is because in both cases we sample effectively the same number of vertices, and under node-DP the number of vertices is the parameter that controls privacy amplification. In the case of edge differential privacy, the circumstances are different and we may not wish to pay for edge queries we never use. This is particularly salient for $H$-freeness, taking triangle freeness as an example: testing whether a graph is triangle-free while using the proximity oblivious tester relies on sampling an enormous quantity (see~\cref{rem:tower-type}) of constant-sized subgraphs.}

\begin{theorem}[Private $H$-freeness testing in dense graph model]\label{thm:private-POT-h-freeness}
Let $G=([\ns],E)$ be an unknown dense graph, $H$ be a known graph with $h=v(H)\ll \ns$ vertices, $\dst \in (0,1]$ be a parameter and $\prox(\dst,H)$ be as in~\cref{lem:graph-removal}. There exists an $\priv'$-edge differentially private, two-sided error property tester for $H$-freeness with query complexity\[
\bigO{\frac{h^2\log(1/\failp)}{\prox(\dst, H)}}
\]
that succeeds with probability $1-\failp$ so long as $\priv'=\bigOme{\log(1 + \frac{h^2\log(1/\failp)}{\ns^2\failp\prox})}$. 
\end{theorem}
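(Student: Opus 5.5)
The plan is to build the tester as $\RR_{\priv}\circ f\circ\subsampler_{\ms,h}$ and invoke \cref{thm:rr-subgraph-tester}. We take the induced subgraph sampler with blocks of size $h=v(H)$ and a number of blocks $\ms=\lceil c\log(1/\failp)/\prox\rceil$ for a suitable constant $c$. The Boolean function $f:\range_{\ms,h}\to\{0,1\}$ outputs reject if and only if some sampled block $s_i$ contains a copy of $H$. The query complexity is then $\ms\binom{h}{2}=\bigO{h^2\log(1/\failp)/\prox}$, because we only query pairs inside each block and never pairs between blocks. This is exactly the saving over the canonical tester.

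Privacy follows directly from \cref{thm:rr-subgraph-tester}. With $\eta=\frac{\ms h(h-1)}{\ns(\ns-1)}$, the composed mechanism is $\priv'$-edge DP for $\priv'=\log(1+\eta(\eep-1))$. Inverting this, we run $\RR_\priv$ with $\eep-1=(\eepp-1)/\eta$.

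Accuracy splits into completeness and soundness.

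\textbf{Completeness.} If $G$ is $H$-free, then no block ever contains $H$, so $f$ accepts with certainty. The only error comes from the flip of $\RR_\priv$, which happens with probability $1/(\eep+1)\le 1/(\eep-1)=\eta/(\eepp-1)$.

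\textbf{Soundness.} If $G$ is $\dst$-far, the contrapositive of \cref{lem:graph-removal} gives at least $\prox\ns^{h}$ copies of $H$. Hence a single uniformly random $h$-set of distinct vertices contains a copy with probability at least roughly $\prox$, up to a constant factor accounting for ordered versus unordered tuples and for the distinctness correction, which is negligible since $h\ll\ns$. The blocks are drawn without replacement, so they are not independent. I would handle this with a direct argument: conditioned on the previous blocks, the next block is a uniform $h$-set from the remaining $\ns-(i-1)h$ vertices. Either we compare to sampling with replacement, or we bound the loss of copies touching the at most $\ms h$ removed vertices by $\ms h\ns^{h-1}$. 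The latter is $o(\prox\ns^h)$ when $\ms h\ll\prox\ns$, which lies in the regime where the statement is nontrivial anyway; if needed I would also assume $\ns\ge 2\ms h/\prox$. Each block therefore succeeds with probability at least $\prox/2$, and $\Pr[\text{no block contains }H]\le(1-\prox/2)^{\ms}\le\failp/2$ for an appropriate $c$. Adding the $\RR$ flip, the total error is at most $\failp/2+\eta/(\eepp-1)$.

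It remains to set parameters. Requiring $\eta/(\eepp-1)\le\failp/2$ gives $\eepp\ge 1+2\eta/\failp$, that is,
\[
\priv'\ge\log\!\left(1+\frac{2\ms h(h-1)}{\ns(\ns-1)\failp}\right)=\bigOme{\log\!\left(1+\frac{h^2\log(1/\failp)}{\ns^2\failp\prox}\right)},
\]
which matches the theorem. The main obstacle is the without-replacement dependence in the soundness step, which the removed-vertex counting above addresses. Everything else is a direct instantiation of \cref{thm:rr-subgraph-tester} together with the error accounting already used in the proof of \cref{thm:dpcanonicaltester}.
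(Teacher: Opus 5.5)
Your proposal is correct and follows the paper's proof: apply $\RR_{\priv}$ to the $\ms$-fold repeated basic test, get privacy from \cref{thm:rr-subgraph-tester} with $\eta=\frac{\ms h(h-1)}{\ns(\ns-1)}$, bound the error by $\failp/2+\eta/(e^{\priv'}-1)$, and solve for $\priv'$. Your handling of the dependence between disjoint blocks is a point the paper skips, since it treats the repetitions as independent while invoking a theorem about disjoint blocks. Two corrections to that step: a uniform $h$-set meets a fixed set of $\ms h$ vertices with probability up to $\ms h^2/\ns$, so your side condition should read $\ns\gtrsim \ms h^2/\prox$, and this is a genuine extra assumption rather than automatic in the nontrivial regime. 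Alternatively, draw the $\ms$ blocks independently, since the proof of \cref{thm:rr-subgraph-tester} only uses the common/residual decomposition and goes through with $\eta\le\frac{\ms h(h-1)}{\ns(\ns-1)}$ by a union bound.
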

\noindent {Notably, the query complexity of the private $H$-freeness tester is within a constant factor of the non-private query complexity under reasonable assumptions on $\priv'$.}

\begin{proof}
   First, note that since the basic non-private test succeeds with probability $\prox$ and has one-sided error, we can repeat it $\ms\defeq {\log(1/\failp')/\prox}$ times in order to succeed with probability $1-\failp'$. Each test performs at most $2h^2$ edge queries. Then, applying~\cref{thm:rr-subgraph-tester}, solving for $\priv$ as a function of $\priv'$, we see that setting
   $\priv=\log(1 + \frac{\ns(\ns-1)}{\ms h (h-1) }(e^{\priv'} - 1))$ makes the algorithm private. This process errs with probability 
   \begin{align*}
   \failp'' &\leq \failp' + \frac{h(h-1)\log(1/\failp')}{\ns(\ns-1) (e^{\priv'} - 1)\prox}\\
    &\leq \failp' + \frac{h^2\log(1/\failp')}{\ns^2 \priv'\prox}.
   \end{align*}
   Setting $\failp' = \failp/2$, we have $\failp'' \leq \failp$ whenever $\priv'\geq \log(1 + \frac{2h(h-1)\log(2/\failp)}{\ns(\ns-1)\failp \prox})=\bigOme{\log(1+\frac{h^2\log(1/\failp)}{\ns^2\failp\prox})}$. {The query complexity bound follows from combining the query complexities in each run of the tester.}
\end{proof}

{Notice that the canonical tester (\Cref{thm:dpcanonicaltester}) for this problem would have query complexity at least $\bigO{h^2/\prox^2}$ and would pay just as much in the application of the privacy amplification lemma. For a fixed, constant-sized subgraph $H$, this can make an enormous difference, the following remark (\Cref{rem:tower-type}) makes this clear.}

Despite the simplicity of the graph removal lemma as we stated it, the regularity lemma is a rather heavy piece of machinery. As such, the best known bounds on $\prox$ for general $H$ are of the ``tower form''. They are independent of $\ns$, but enormous functions of $\dst$. Fortunately, for bipartite $H$ much better bounds are known, and improvements to these bounds immediately imply improvements to testing algorithms that rely on them. We summarize them in the following remark.

\begin{remark}[Dependence of $\prox$ on $\dst$]\label{rem:tower-type}
The magnitude of $\prox(\dst, H)$ depends critically on the structure of $H$:
\begin{itemize}
    \item If $H$ is \textbf{bipartite}, then $\prox(\dst, H) = \poly(\dst)$, yielding
    polynomial sample complexity in $1/\dst$ \cite{alon2002}. %
    \item If $H$ is \textbf{non-bipartite}, the best known bounds are of tower-type in
    $1/\dst$~\cite{GishbolinerS25}.\footnote{That is to say $\prox$ is of the form $1/T(m)$ where $T(m)=\exp(T(m-1))$ and $T(1)=2$~\cite{goldreich2017introduction}.} 
\end{itemize}
\end{remark}

\section{Private Testers for Bounded-Degree Graphs}\label{sec:privatetesterboundeddegreegraph}
In this section, we present our private graph property testers for bounded-degree graphs. We begin by presenting an amplification technique for regularized random walks in \Cref{thm:dpamplificationrandwalk} and use it to design a bipartiteness tester for bounded-degree graphs in \Cref{thm:dpbipartitebounded}. Then we define the notion of $k$-disc sampling, and prove a privacy amplification result for $k$-disc sampling in \Cref{thm:kdiscamplification}. Finally, as applications of the privacy amplification technique, we first design a private $H$-freeness tester in the bounded-degree model in \Cref{thm:dp-h-freeness}, and then show that every property of hyperfinite graphs is privately testable in \Cref{thm:dphyperfinitetestable}. 

\subsection{Random walk based private testers}
\label{sec:random-walks}
A fundamental sampling technique for sublinear graph algorithms is the random walk. Random walks are fundamental to many testing results in the bounded-degree model. %
We first define regularized random walks\footnote{These are also referred to as ``canonical random walks''~\cite{goldreich2011expansion}; or just as ``random walks'', depending on the paper. We make the distinction clear since it is crucial to our privacy analysis.} and the corresponding sampling scheme in terms of transcripts. Note that where we refer to sampling in this section, we mean sampling \emph{with} replacement.

\begin{definition}[Regularized random walk]
\label{def:reg-random-walk}
    A regularized random walk is a form of lazy random walk where at each step the algorithm either remains at the current vertex $v$ with probability $1 - \frac{d(v)}{2\maxdeg}$, or progresses to a neighboring vertex $w$ of $v$ with probability $1/2\maxdeg$, where $d(v)$ denotes the degree of the vertex $v$. This walk has the property that when started at a uniformly random vertex $v$, the next vertex it visits will also be uniformly random.
\end{definition}

Now let us formally define random walk sampling.

\begin{definition}[Random walk sampling]
\label{def:random-walk-transcript}
    Let $G=([\ns],E)$ be a $\maxdeg$-bounded graph, $\ms,\repeats,\length\geq1$.
    We can sample a \emph{transcript} \transcript from the space of regularized random walks $\transcript=(\transcript_1,\dots,\transcript_{\ms\cdot\repeats\cdot\length})\sim\spacetrans_{\ms,\repeats,\length}(G)$, by choosing $\ms$ random vertices uniformly at random and starting $\repeats$ random walks~(\cref{def:reg-random-walk}) of length $\length$ from each of them. We use $\transcript_i$ to denote the $i$'th vertex in the transcript. For example $\transcript_{\length + 1}$ is the start of the second walk, $\transcript_{\length\cdot \repeats+1}$ is the second starting-point, and $\transcript_{2}$ is the first random step (which may be no step at all, if the walk remains stationary). 
    Note that when we refer to the length of a walk, we mean the total number of steps taken, including steps where the walk remains at the same vertex.
\end{definition}

Now we are ready to present our privacy amplification result for regularized random walks.

\begin{theorem}[Privacy amplification for {binary decisions on} regularized random walks]\label{thm:dpamplificationrandwalk}
Let $f:\spacetrans\to\{0,1\}$ be an arbitrary Boolean function from the space of random walk transcripts~(\cref{def:random-walk-transcript}) to binary decision. Let $\pmech=\RR_{\priv}\circ f$ be the function that, given as input a transcript of regularized random walks, applies $f$ and applies randomized response with privacy parameter $\priv$ to the output of $f$. If those random walks are sampled as in~\cref{def:reg-random-walk} with  $\ms\cdot\repeats\cdot\length\leq\tsize$, then the mechanism satisfies $\priv'$-differential privacy where \[
\priv' = \log \left( 1 + \frac{2T}{\ns}(e^{\priv} - 1) \right).
\]    
\end{theorem}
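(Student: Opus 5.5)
The proof will follow the template of \cref{thm:rr-subgraph-tester}: decompose the transcript distributions on $G$ and $G'$ into a common part and residual parts, apply advanced joint convexity (\cref{advanced-joint-convexity}), and let randomized response handle every remaining comparison.

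Fix edge-neighboring graphs $G\simeq_e G'$ that differ on $e=\{x_1,x_2\}$, and let $\omega=\spacetrans_{\ms,\repeats,\length}(G)$ and $\omega'=\spacetrans_{\ms,\repeats,\length}(G')$.

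\textbf{Coupling the two walks.} Run the walks on $G$ and $G'$ with shared randomness: the same uniformly random start vertices, and at each step the same uniform choice of index $i\in[2\maxdeg]$. Under the regularized walk, $i$ selects the $i$-th neighbor if $i\le d(v)$ and stays put otherwise. The neighbor lists of all vertices other than $x_1,x_2$ coincide in $G$ and $G'$. So as long as the current vertex is neither $x_1$ nor $x_2$, the two walks take identical steps.

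\textbf{The exposure event.} Let $S$ be the event that some position of the transcript of $G$ equals $x_1$ or $x_2$. Until the first such visit the two transcripts are identical, so $S$ is also the event that the $G'$-transcript visits an endpoint. On $S^c$ the transcripts agree entirely. This yields the mixtures
\begin{align*}
\omega=(1-\eta)\omega_0+\eta\omega_1, \qquad \omega'=(1-\eta)\omega_0+\eta\omega_1',
\end{align*}
where $\eta=\Pr[S]$, $\omega_0$ is the common law of the transcript on $S^c$, and $\omega_1,\omega_1'$ are the conditional laws on $S$ in $G$ and $G'$. To be careful about the domain, I will define these as laws of the coupled pair, or work over the shared randomness, and push forward to transcripts.

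\textbf{Bounding $\eta$.} The regularized walk is doubly stochastic, so its uniform distribution is stationary. The start vertices are uniform, hence every transcript position $\transcript_j$ is marginally uniform on $[\ns]$. A union bound then gives
\[
\eta\le\sum_{j\le \ms\repeats\length}\Pr\bigl[\transcript_j\in\{x_1,x_2\}\bigr]\le \frac{2\tsize}{\ns}.
\]

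\textbf{Applying joint convexity.} Let $M$ be the kernel of $\pmech=\RR_\priv\circ f$, and set $\mu=\omega M$, $\mu'=\omega'M$, $\mu_0=\omega_0M$, $\mu_1=\omega_1M$, $\mu_1'=\omega_1'M$. Take $\diverge=e^\priv$ and $\diverge'=1+\eta(\diverge-1)$. Advanced joint convexity, followed by ordinary joint convexity, gives
\[
D_{\diverge'}(\mu\|\mu')\le\eta\bigl[(1-\varphi)D_\diverge(\mu_1\|\mu_0)+\varphi D_\diverge(\mu_1\|\mu_1')\bigr].
\]
Each of $\mu_0,\mu_1,\mu_1'$ is a mixture of $\RR_\priv(0)$ and $\RR_\priv(1)$. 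By joint convexity under any coupling, as in the proof of \cref{thm:rr-subgraph-tester}, both divergences are at most $\sup_{a,b}D_{e^\priv}(\RR_\priv(a)\|\RR_\priv(b))=0$. Swapping the roles of $G$ and $G'$ gives the reverse inequality. Hence $\pmech\circ\spacetrans$ is $\log(1+\eta(e^\priv-1))$-DP.

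\textbf{Finishing.} The quantity $\log(1+\eta(e^\priv-1))$ is monotone in $\eta$, and $\eta\le 2\tsize/\ns$, which gives the stated $\priv'$. To make this rigorous I would note that a decomposition with a smaller mixing weight can be re-expressed with a larger one. Concretely, $\omega=(1-\eta^*)\omega_0^*+\eta^*\omega_1^*$ for any $\eta^*\ge\eta$ with $\eta^*\le 1$, obtained by moving mass from $\omega_0$ into the residuals; this preserves the common component. Alternatively, the divergence bound of $0$ holds for any $\eta$, so one can plug in $\eta^*=\min\{1,2\tsize/\ns\}$ directly.

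\textbf{Main obstacle.} The delicate step is justifying the decomposition: that the common component is genuinely shared, given that the walks are generated adaptively. The coupling through the shared index randomness handles this. A secondary point is confirming that laziness and double stochasticity make every transcript position uniform, including the repeated walks started from the same vertex. This holds because each walk individually starts at a uniform vertex, and the union bound does not need independence between walks.
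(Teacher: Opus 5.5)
Your proof is correct and takes essentially the same route as the paper. Both decompose the transcript laws into a common part plus residuals of weight at most $2\tsize/\ns$, using uniform marginals of the doubly stochastic walk and a union bound, then apply advanced joint convexity and let randomized response make the residual divergences vanish. The one difference is in how the decomposition is obtained: the paper takes the maximal coupling with $\eta=\operatorname{TV}(\omega,\omega')$, asserts $\operatorname{TV}(\omega,\omega')\le\Pr[H]$ without proof, and reaches $\eta^*$ by monotonicity of $D_\diverge$ in $\diverge$, whereas your shared-randomness coupling conditions on the exposure event directly and so actually justifies that total-variation bound.
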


\begin{proof}
    Let $\transcript\sim\spacetrans_{\ms,\repeats,\length}$ be a transcript as defined in~\cref{def:random-walk-transcript}, and let $f:\spacetrans\to\{0,1\}$ be a (possibly randomized) function from transcripts to binary decisions. Let $\tsize = \ms\cdot \repeats \cdot \length$ be the total number of vertices seen in the course of the random walks, counting repetitions.

    Consider two bounded-degree graphs $G$ and $G'$ which differ in one edge $e$. Transcripts on $G$ and $G'$ will be identical unless one of the random walks happens to visit one or both endpoints of $e$. Let $H$ be the event that either endpoint of a particular edge $e=(v,w)$ is visited. {We define the event $H$ with respect to }either endpoint, since the presence of $e$ adjacent to $v$ and $w$ changes the probability of ``waiting'' on that vertex.

    Fix a vertex $v$. By the fact that the stationary distribution of a regularized random walk is uniform, the marginal distribution of $\transcript_i$ is uniform for every step $i$, meaning $\Pr[\transcript_i=v]=1/\ns$. 
    Let $X_i^{v}$ be the event that $\transcript_i = v$, $X^v$ be the event that $v$ is visited at any point in $\transcript$ that is to say $X^v=\bigcup_{i=1}^\tsize X_i^{v}$. Then by a union bound over the $X_i^v$'s
    \[
    \Pr[X^v] \leq \sum_{i=1}^\tsize \Pr[X_i^{v}] = \frac{\tsize}{\ns}.
    \]
    Since the event $H=X^v\cup X^w$, we can use a second union bound over the two endpoints to see that\[
    \Pr[H] \leq \Pr[v\in\transcript] + \Pr[w\in\transcript] \leq \frac{2\tsize}{\ns}.
    \]

    Formally, let $\omega,\omega'$ be the distribution of transcripts generated from $\spacetrans_{\ms,\repeats,\length}(G)$ and $\spacetrans_{\ms,\repeats,\length}(G')$ respectively. Use the fact that $\operatorname{TV}(\ome,\ome')\leq\Pr[H]\leq \frac{2\tsize}{\ns}$ and set $\eta^*=\frac{2\tsize}{\ns}$ as an upper bound on $\eta$.
    The maximal coupling implies the following decomposition,
    \begin{align*}
        \omega & =(1-\eta)\omega_{0} + \eta \omega_{1} \\
        \omega' & = (1-\eta)\omega_{0}+ \eta \omega_{1}',
    \end{align*}
    where $\eta\leq\eta^{*}$ since $\eta=\operatorname{TV}(\omega,\omega')\leq\Pr[H]\leq \eta^{*}$.

Now, let $\mu=\omega M$, $\mu'=\omega'M$ be the application of $f$ and $\text{RR}_{\priv}$ to the transcripts sampled from $\omega,\omega'$. We can apply advanced joint convexity~\cref{advanced-joint-convexity}
    . Set $\diverge=e^{\priv}$, let $\diverge'=1+\eta(\diverge-1)$, and $\diverge^* = 1 + \eta^* (\diverge - 1)$, $\beta=\diverge'/\diverge$. Note that $\diverge'\leq \diverge^*$ since $\eta\leq \eta^*$. Then
    \begin{align*}
        D_{\diverge^*}(\mu\|\mu')
        &\le D_{\diverge'}(\mu\|\mu')
        \tag{$\diverge\mapsto D_\diverge$ non-increasing}\\
        &= \eta D_{\diverge}\big(\mu_1 \| (1-\beta)\mu_0+\beta\mu_1'\big)
        \tag{advanced joint convexity}\\
        &\le \eta\big[(1-\beta)D_{\diverge}(\mu_1\|\mu_0)+\beta D_{\diverge}(\mu_1\|\mu_1')\big]
        \tag{joint convexity}\\
        &\le \eta \sup_{a,b\in\{0,1\}} D_{e^{\priv}}\!\big(\text{RR}_{\priv}(a) \| \text{RR}_{\priv}(b)\big)
        =0.
    \end{align*}
    
    \noindent The last inequality is implied by the following: {Since binary randomized response is $\priv$-DP over its entire input space, for every $a,b\in\{0,1\}$ and every output event $E$, 
    \[
    \Pr[\mathrm{RR}_{\priv}(a)\in E]\le e^{\priv} \Pr[\mathrm{RR}_{\priv}(b)\in E].
    \] 
    By the definition of pure differential privacy in terms of $\diverge$-divergence  
    \[
    D_{e^{\priv}} \bigl(\mathrm{RR}_{\priv}(a)\|\mathrm{RR}_{\priv}(b)\bigr)=0.
    \]
}
Therefore,
\begin{align*}
    \sup_{a,b\in \{ 0,1 \}}D_{e^{\priv}}(\text{RR}_{\priv}(a)\|\text{RR}_{\priv}(b)) =0.
\end{align*}
Hence the subsampled mechanism is $\priv'$-DP with $\priv'=\log\!\big(1+\eta^{*}(e^{\priv}-1)\big)$.
\end{proof}

Due to its fundamental importance, we design a private bipartiteness tester in this model.
We first state the celebrated bipartiteness tester of~\cite{goldreich1999a}. The non-private tester (\Cref{alg:biptest}) is presented in \Cref{sec:algorithms}.

\begin{theorem}[Bipartiteness tester~\cite{goldreich1999a,goldreich2002property}]
Let $G$ be an unknown bounded-degree graph with $n$ vertices, maximum degree $d$ and $\gamma \in (0,1]$ be a {proximity} parameter. Given query access to $G$, there exists an algorithm that can distinguish whether $G$ is bipartite or $\gamma$-far from bipartite by performing $\tilde{O}_{\gamma}(\sqrt{n})$ queries. Moreover, this bound is tight, up to polylogarithmic factors.    
\end{theorem}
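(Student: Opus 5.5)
This is the classical Goldreich–Ron bipartiteness tester, quoted from the literature and not proved here. The paper only uses it as a black box. Still, here is the plan I would follow to reprove it.

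\emph{Algorithm.} Repeat $O(1/\dst)$ times the following step.
\begin{enumerate}
\item Pick a uniformly random start vertex $s$.
\item From $s$, run $K=\widetilde{O}(\sqrt{\ns}\,\poly(1/\dst))$ regularized random walks (\cref{def:reg-random-walk}), each of length $\length=\poly(\log \ns/\dst)$.
\item Reject if some vertex is reached once by an even-length prefix and once by an odd-length prefix, since this exhibits an odd cycle.
\end{enumerate}
The query complexity is $O(K\length/\dst)=\widetilde{O}_\dst(\sqrt{\ns})$. Completeness is immediate and one-sided: in a bipartite graph every walk from $s$ reaches a given vertex only at times of a fixed parity. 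Lazy steps must be excluded from the parity count.

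\emph{Soundness.} Suppose $G$ is $\dst$-far from bipartite. I would argue by contrapositive: if for most start vertices $s$ the walks find no odd collision with good probability, then $G$ is close to bipartite. The plan has three steps.
\begin{enumerate}
\item Call $s$ \emph{good} if, for every vertex $v$, the even-arrival probability $p^0_s(v)$ and the odd-arrival probability $p^1_s(v)$ under length-$\length$ walks are roughly balanced at scale $1/\ns$. For such $s$, a birthday-paradox second-moment computation gives the following: with $K\approx\sqrt{\ns}$ walks, the expected number of even/odd collisions is about $K^2\sum_v p^0_s(v)p^1_s(v)=\Omega(1)$, and the variance is controlled. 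So a good $s$ rejects with constant probability.
\item If almost all $s$ are not good, each bad $s$ determines a partition: put $v$ on the side given by the parity that dominates its arrival probability. Edges violating this partition then come from a region where the walk behaves almost bipartitely.
\item Stitch these local partitions into a global one.
\end{enumerate}

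\emph{Main obstacle.} The stitching in step 3 is the hard part. Goldreich and Ron handle it by first partitioning the graph into pieces with good conductance, removing only $\dst \maxdeg \ns/10$ edges. On each piece, length-$\poly(\log \ns)$ walks mix to near uniform. Within a mixing piece, a nearly parity-consistent walk distribution forces a bipartition of that piece with few violating edges. Summing the violating edges over all pieces, and adding the removed edges, gives fewer than $\dst \maxdeg \ns$ edges, which contradicts farness.

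The delicate technical work is to prove the decomposition lemma while keeping the walk length polylogarithmic. Getting the second-moment bound in step 1 requires a similar level of care. The matching $\Omega(\sqrt{\ns})$ lower bound would be cited from \cite{goldreich2002property}. It comes from distinguishing random bipartite-like expanders from random expanders with odd cycles: these look locally identical until a birthday-type collision occurs.
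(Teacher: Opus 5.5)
The paper gives no proof of this statement. It quotes the tester of \cite{goldreich1999a} and the lower bound of \cite{goldreich2002property}, and afterwards uses only the shape of the algorithm. So citing it, as you do, is all that is required. Your algorithm, query count, and one-sided completeness argument (parity counted over non-lazy steps) are correct.

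Read as a reproof, however, your soundness outline breaks at the definition of a good start vertex. Demanding that $p^0_s(v)$ and $p^1_s(v)$ be balanced at scale $1/n$ for \emph{every} $v$ makes step~1 easy. But then badness is witnessed by a single vertex and says nothing about the bulk of the walk distribution, so step~2 cannot extract a near-bipartition from it. Here is a concrete case.
\begin{itemize}
\item Let $s$ have one neighbour in a bipartite expander $Y$ on $\sqrt{n}$ vertices. Its remaining neighbours lie in an expander $X$ on the other vertices that is $\Omega(1)$-far from bipartite, and there are no further $X$--$Y$ edges.
\item A $\Theta(1/d)$ fraction of the walks move into $Y$ first, with odd parity, and stay there. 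So every vertex of $Y$ has parity imbalance of order $1/(d\sqrt{n})\gg 1/n$, and $s$ is bad in your sense.
\item Yet the walks that enter $X$ mix there with balanced parity. Hence $\sum_v p^0_s(v)p^1_s(v)=\Omega(1/n)$, and $O(\sqrt{n})$ walks from $s$ find an odd collision with constant probability.
\item Meanwhile the dominant-parity split of $X$, like every split of $X$, violates $\Omega(n)$ edges. All of them lie where the walk is fully parity-mixed, contrary to what step~2 asserts.
\end{itemize}

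The dichotomy that works, and the one Goldreich--Ron use, is on the aggregate $\sum_v p^0_s(v)p^1_s(v)$ against a threshold of roughly $1/n$. If this sum is small and $p^0_s+p^1_s$ is nearly uniform on the piece containing $s$, then the minority parity is negligible at most vertices of the piece. An edge with both endpoints on the same side would push about a $1/(2d)$ fraction of one endpoint's dominant-parity mass into the other's minority parity, so there can be only few such edges.

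With the aggregate definition, though, your claim that ``the variance is controlled'' is no longer automatic. It needs an upper bound on $\max_v p_s(v)$; otherwise the $\Omega(1)$ expected collisions can sit on a rare event. For example, suppose $s$ lies in a bipartite region $B$ with $|B|\le n^{1/4}$ that the walks almost never leave, and the opposite parity returns only through an excursion of probability about $|B|/n$. Then $\sum_v p^0_s(v)p^1_s(v)=\Theta(1/n)$. But $\widetilde{O}(\sqrt{n})$ walks detect the odd cycle only with probability $\widetilde{O}(|B|/\sqrt{n})=o(1)$.

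In Goldreich--Ron, the bound on $\max_v p_s(v)$ comes from mixing inside $s$'s piece. So step~1 cannot be done independently of the decomposition. There is a further point you defer: the tester's walks run in $G$ and may leave the piece, and relating them to walks confined to the piece is part of the real work. Beyond the citation, what you have is a plan whose key definition must be changed, not a proof.
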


In~\cite{goldreich1999a}, the random walk used is a \emph{lazy regularized random walk} (as in \Cref{def:reg-random-walk}), where at each step the algorithm either remains at the current vertex with probability $1-\frac{d(v)}{2\maxdeg}$, or progresses to vertex $w$ with probability $\frac{1}{2\maxdeg}$, where $d(v)$ denotes the degree of the vertex $v$. This walk has the property that---being doubly stochastic---when started at a uniformly random vertex $w_0$, the next vertex it visits, $w_1$ will be uniformly random. 
As such, each step of the walk looks locally uniformly random. 

{In contrast, the elegant recent work of \cite{fei2026} shows that bipartiteness testing can be achieved using walks of length $\bigO{\log\ns}$ (a dramatic improvement on the prior known length of $\bigO{\log^6\ns}$) and uses \emph{simple} {lazy random walks} in their analysis.} The simple lazy random walk chooses an adjacent edge to traverse uniformly at random, and remains stationary at each step with constant probability $1/2$. Equivalently, it is a simple random walk with a random binomial length. They also do not start their walks from a uniformly random vertex, rather they sample a starting location proportional to the degree of that vertex. Formally, the stationary distribution of this random walk is the degree-weighted distribution $p(v)=d(v)/(2|E|)$, while the stationary distribution of the regularized random walk is the uniform distribution. It seems much more challenging to show any pure-DP privacy amplification for their algorithm. Consider the graph on $n$ vertices with two edges and a neighboring graph which has one edge. We will see said edge with constant probability whereas a walk started from a random vertex will see said edge with probability $O(1/\ns)$. Thus, achieving any privacy amplification for bipartiteness testing in logarithmic rounds seems difficult without changing the underlying nature of their algorithm or analysis.

In our work, we demonstrate both dense and bounded-degree bipartiteness testing algorithms, yet it is open whether our results imply a bipartiteness tester in the general graph model.

\begin{theorem}[Privacy of the bipartiteness tester]\label{thm:dpbipartitebounded}
For all $\priv\geq \log(1 + \frac{12\tsize}{\ns})$, there exists a $\priv$-differentially private \dst-tester for bipartiteness in bounded degree graphs with query complexity $\tsize=\bigO{\sqrt{\ns} \cdot \poly(\log (\ns)/\dst)}$.%
\end{theorem}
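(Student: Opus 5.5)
The plan is to instantiate the template of Algorithm~\ref{alg:meta-tester-private} with the Goldreich--Ron bipartiteness tester and to get privacy from \cref{thm:dpamplificationrandwalk}. Accuracy then follows the same accounting as in \cref{thm:dpcanonicaltester}.

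\textbf{Step 1: the non-private tester as a transcript function.} Let $\tester$ be the tester of~\cite{goldreich1999a,goldreich2002property}, boosted to failure probability $1/6$ by a constant number of repetitions. It picks $\ms=O(1/\dst)$ uniformly random start vertices. From each it runs $\repeats=\bigO{\sqrt{\ns}\,\poly(\log\ns/\dst)}$ lazy regularized random walks (\cref{def:reg-random-walk}) of length $\length=\poly(\log\ns/\dst)$. It rejects iff it finds an odd cycle, i.e.\ two walks from the same start reaching a common vertex with different parity of the number of proper (non-lazy) steps.

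This decision depends only on the walk data, so it is a Boolean $f:\spacetrans\to\{0,1\}$ applied to $\transcript\sim\spacetrans_{\ms,\repeats,\length}(G)$. One subtlety is that parity is not readable from the vertex sequence alone when a walk stays put versus moves. I would handle this by letting the transcript record the step type, or by allowing $f$ to be randomized, as the proof of \cref{thm:dpamplificationrandwalk} already permits. Recording the step type does not change the marginal analysis, since the event $H$ still governs any divergence between $G$ and $G'$. Set $\tsize=\ms\repeats\length=\bigO{\sqrt{\ns}\,\poly(\log\ns/\dst)}$. The query complexity is $O(\tsize)$, because each step costs $O(1)$ neighbor queries (the factor $\maxdeg$ is treated as a constant).

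\textbf{Step 2: privacy.} Define $\pmech=\RR_{\priv_0}\circ f$. By \cref{thm:dpamplificationrandwalk}, $\pmech\circ\spacetrans$ is $\priv$-edge DP with $\priv=\log(1+\frac{2\tsize}{\ns}(e^{\priv_0}-1))$. Solving for $\priv_0$ gives
\[
e^{\priv_0}-1=\frac{\ns}{2\tsize}(e^{\priv}-1).
\]

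\textbf{Step 3: accuracy.} As in the proof of \cref{thm:dpcanonicaltester}, the error is at most $\Pr[\tester\text{ errs}]+\frac{1}{e^{\priv_0}+1}$. This is at most
\[
\frac16+\frac{1}{e^{\priv_0}-1}=\frac16+\frac{2\tsize}{\ns(e^{\priv}-1)}.
\]
If $\priv\ge\log(1+\frac{12\tsize}{\ns})$, then $e^\priv-1\ge 12\tsize/\ns$, so the second term is at most $1/6$ and the total error is at most $1/3$. This matches the constant $12$ in the statement.

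\textbf{Main obstacle.} The delicate point is Step 1: checking that the GR tester really fits \cref{def:random-walk-transcript}. It must use only lazy regularized walks from uniform starts, whose every position is uniformly distributed, which is what the union bound in \cref{thm:dpamplificationrandwalk} needs. The parity bookkeeping must also not break the claim that transcripts on $G$ and $G'$ coincide off the event $H$. If a literal fit fails, I would rerun the proof of \cref{thm:dpamplificationrandwalk} on the enriched transcripts (vertex, step-type). The bound $\Pr[H]\le 2\tsize/\ns$ and the randomized-response argument go through verbatim.
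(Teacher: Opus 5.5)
Your proposal is correct and follows the paper's proof essentially step for step: treat the Goldreich--Ron tester, boosted to failure probability $1/6$, as a Boolean function $f$ of the random-walk transcript; apply $\mathrm{RR}_{\varepsilon_0}$ with $\varepsilon_0$ obtained by inverting \cref{thm:dpamplificationrandwalk}; and bound the error by $\frac16+\frac{2T}{n(e^{\varepsilon}-1)}\le\frac13$ once $\varepsilon\ge\log(1+\frac{12T}{n})$. The parity concern in your Step~1 does not arise, because the graphs have no self-loops: a lazy step is exactly one with $W_{i+1}=W_i$, so the plain vertex transcript already determines the parity of proper steps, and your enriched transcript is harmless but unnecessary.
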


\begin{proof}
    The bipartiteness tester by \citeauthor{goldreich1999a}~\cite{goldreich1999a} first samples $\ms=\poly(1/\dst)$ random vertices, and then from each sampled vertex, starts $\repeats=\bigO{\sqrt{\ns}}$ regularized random walks, each of length $\length=\poly(\log(\ns))$, rejecting if there exists an odd-cycle among the vertices explored. Let $f$ be the function on transcripts that outputs 1 if it witnesses an odd cycle, or 0 otherwise. Let $\tsize = \ms\cdot\repeats\cdot\length$ as above. Let $\pmech'$ be the tester that samples $\transcript\sim\spacetrans_{\ms,\repeats,\length}(G)$, computes $f$ and applies $\RR_{\priv_0}$ to the output. Applying~\cref{thm:dpamplificationrandwalk} with \[
    \priv_0 = \log\left( 1 + \frac{n}{2\tsize}(e^\priv - 1) \right).
    \] shows that $\pmech$ satisfies $\priv$-differential privacy. 
    
    If $f$ applied to the randomly sampled \transcript has non-private failure probability $1/6$, then we can compute that \pmech has failure probability \[
        \failp \leq \frac{1}{6} + \frac{2\tsize}{\ns(e^\priv - 1)}.
    \]
    Where $\failp\leq 1/3$ whenever $\priv \geq \log(1 + \frac{12T}{\ns})$, and $\failp\leq \frac{4T}{\ns\priv}$ otherwise. The final tester has two-sided error (where the original has one-sided error).%
\end{proof}

\subsection{\texorpdfstring{$k$}{k}-disc sampling based private testers}

In this section, we present our results on $k$-disc sampling based private testers. A \emph{$k$-disc} around a vertex $v$ is the subgraph induced by all vertices of distance at most $k$ from $v$. We formalize this notion and the notion of $k$-disc sampling below. 

\begin{definition}[$k$-discs and $k$-disc samples~\cite{newman2013every}]\label{def:k-disc}
A connected graph $G=(V,E)$ with a specially marked vertex $v$ as the root is defined to be a \emph{rooted graph}. A rooted graph $G$ with a root vertex $v$ has radius $\rad(G)=k$ if every vertex in $V$ is at distance at most $k$ from $v$. For a $d$-bounded degree graph $G$, an integer $k$ and a vertex $v$, let $B_{G}(v,k)$ be the subgraph rooted at a vertex $v$ that is induced by all vertices of $G$ that are at distance at most $k$ from $v$. $B_{G}(v,k)$ is called the \emph{$k$-disc} around $v$. A $k$-disc sample from a graph $G$ is obtained by first sampling a vertex $v$ uniformly at random from $V$ and then returning the $ k$-disc $B_{G}(v,k)$ around $v$.
\end{definition}

Now let us define the notion of $k$-disc sampler.

\begin{definition}[$k$-disc sampler]
    Let $\kdsampler$ be an algorithm that takes a graph $G=(V,E)$ as input and outputs a distribution on $\ms$-tuples of $\rd$-discs, each with a distinct root in $G$.  To generate a sample $S=(s_1,\dots,s_\ms)\sim\kdsampler(G)$, draw  $v_1,\dots,v_\ms$ {uniformly} without replacement from $V$, perform a breadth-first search of depth $\rd$ around each sampled vertex, and return a list of $v_i$-rooted graphs, $s_i$ for every $i \in [\ms]$.
\end{definition}

We use the standard notion of Hamming distance between our tuples.
\begin{definition}[$\rd$-disc tuple distance]
\label{def:k-disc-distance}
    Let $\kdrange$ denote the set of $\ms$-tuples of rooted graphs of radius at most $\rd$. For $y,y'\in\kdrange$, define the following notion of distance\[
    d(y,y') \defeq \sum_{i=1}^\ms \indic{y_i \neq y_i'},
    \]
    where $y=y'$ is the standard notion of equality between graphs---that they have the same vertex and edge sets without relabeling. We say that $y$ and $y'$ are \emph{neighbors} $y\simeq y'$ if and only if $d(y,y')=1$.
\end{definition}

Using our definitions for $\rd$-disc sampling, and our neighboring relation, we can state our privacy amplification result. 
\begin{theorem}[Privacy amplification for $k$-disc sampling]\label{thm:kdiscamplification}
    Let $\pmech:\kdrange\to\mathcal{Z}$ where $\mathcal{Z}$ is any output domain, be an algorithm that satisfies $(\priv,\pdelta)$-differential privacy with respect to the $\rd$-disc tuple distance. Let $\pmech'=\pmech\circ \kdsampler(G)$ be the algorithm that applies $\pmech$ to a $\rd$-disc sample $y\sim \kdsampler(G)$. $\pmech'$ satisfies $(\priv',\pdelta')$-\emph{edge} and \emph{node-differential privacy} where
    \[
   \priv' = K \log\left(1 + \frac{\ms}{\ns}(\eep - 1)\right),\qquad \pdelta' = K\frac{\ms}{\ns}\left(1 +  \frac{\ms}{\ns}(\eep -1)  \right)^K \pdelta,
    \]
    and $K\defeq 2\maxdeg^{rd} + 1$. 
\end{theorem}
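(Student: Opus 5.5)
We follow the route sketched in the technical overview: treat the $\rd$-discs of $G$ as an auxiliary database and reduce to classical amplification by subsampling without replacement, paying for the graph modification through group privacy. For a graph $G$ on $[\ns]$, define the database $D_G\defeq(B_G(1,\rd),\dots,B_G(\ns,\rd))$, indexed by root. A sample $S\sim\kdsampler(G)$ is obtained by drawing $\ms$ distinct indices $v_1,\dots,v_\ms$ uniformly without replacement and outputting $(D_G[v_1],\dots,D_G[v_\ms])$. Thus $\kdsampler(G)$ depends on $G$ only through $D_G$, and it is precisely the without-replacement subsampler applied to $D_G$. Moreover, replacing one entry of $D_G$ changes at most one coordinate of the sample, so two samples built from the same index tuple on databases differing in one entry are neighbors under \cref{def:k-disc-distance}.

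\emph{Step 1: bounded database distance.} Let $G\simeq G'$ differ in an edge $e=\{x_1,x_2\}$ or in a node $x$. The disc $B_G(v,\rd)$ can change only if $v$ lies within distance $\rd$ of the modification in $G$ or in $G'$. For edge changes, this means $v$ is within distance $\rd-1$ of $x_1$ or $x_2$ in the graph containing $e$. For node changes, it means $v$ is within distance $\rd$ of $x$, together with $x$ itself. Counting vertices in bounded-degree balls gives at most $K=2\maxdeg^\rd+1$ differing entries in either case, though constants need care here. Hence $D_G$ and $D_{G'}$ are at Hamming distance at most $K$, and we can form a chain $D_G=D^0,D^1,\dots,D^K=D_{G'}$ of databases in which consecutive ones differ in a single entry. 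The intermediate $D^j$ need not come from any graph, which is harmless because $\pmech$ is private on arbitrary tuples in $\kdrange$.

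\emph{Step 2: one-entry amplification.} Fix databases $D,D'$ differing only at index $i$. Let $S$ be the event that $i$ is among the sampled indices, so $\eta=\Pr[S]=\ms/\ns$. Conditioning on $S^c$ gives a common distribution $\omega_0$. Conditioning on $S$ gives residuals $\omega_1,\omega_1'$, coupled by using the same indices. Under this coupling the samples are neighbors, and a sample from $\omega_1$ can also be coupled to $\omega_0$ by replacing $i$ with a uniformly random unsampled index, again at distance $1$. Applying \cref{advanced-joint-convexity} together with \cref{def:dcc} yields that $\pmech$ composed with subsampling is $(\priv_1,\pdelta_1)$-DP for single-entry database changes. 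Here $\priv_1=\log(1+\frac{\ms}{\ns}(\eep-1))$ and $\pdelta_1=\frac{\ms}{\ns}\pdelta$, which is the standard without-replacement theorem of~\cite{balle2020a}.

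\emph{Step 3: group privacy.} Compose along the chain of length $K$. The standard group-privacy induction gives $\priv'=K\priv_1$ and $\pdelta'=\pdelta_1\sum_{j<K}e^{j\priv_1}$. Since $e^{\priv_1}=1+\frac{\ms}{\ns}(\eep-1)$, the sum is at most $K(1+\frac{\ms}{\ns}(\eep-1))^K$, which matches the stated $\pdelta'$. The argument is symmetric in $G$ and $G'$ and applies equally to edge and node neighbors.

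The main obstacle I expect is Step 1: bounding carefully the number of roots whose $\rd$-disc actually changes. For edge changes, one must be careful that discs in $G'$ where $e$ is absent can still differ, since distances shift. For node changes, $x$ may have up to $\maxdeg$ altered incident edges, so the count must be checked against $2\maxdeg^\rd+1$. The probabilistic part is routine given \cref{advanced-joint-convexity}.
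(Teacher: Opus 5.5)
Your proposal is correct and is essentially the paper's proof: you treat the $k$-discs of $G$ as an auxiliary database in which at most $K$ entries change, apply amplification by subsampling without replacement to get $\varepsilon_1=\log\bigl(1+\frac{\ell}{n}(e^{\varepsilon}-1)\bigr)$ and $\delta_1=\frac{\ell}{n}\delta$, and then pay group privacy at distance $K$, where $\delta_1\sum_{j<K}e^{j\varepsilon_1}\le Ke^{K\varepsilon_1}\delta_1$ gives the stated $\delta'$ (your explicit chain of databases, and your remark that the intermediate databases need not come from graphs, make the paper's group-privacy step more careful). One small correction in Step 1: for an edge change, a root at distance exactly $k$ from both $x_1$ and $x_2$ also gets a different disc, because $e$ lies in the induced ball (e.g.\ a triangle with $k=1$), so ``within distance $k-1$'' is too narrow; however, every affected root lies in $B_G(x_1,k)\cup B_G(x_2,k)$, which has at most $2\sum_{j=0}^{k}(d-1)^j\le 2d^k<K$ vertices, so your count still holds.
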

\begin{proof}
For brevity, the following proof is written with respect to node-DP, but we note in-text where the proof can be modified to yield results for edge-DP.

Let $G=([\ns],E)$ and $G'=([\ns],E')$ be graphs of maximum degree at most $\maxdeg$ such that $G\simeq_v G'$ are vertex-neighboring; that is to say that they share a common vertex set, but differ in all of the edges adjacent to one vertex $v$. Notice that if we produced the list of all distinct $\rd$-discs (having distinct roots) from $G$ and $G'$, they will be identical except for the $\rd$-discs containing $v$. Upper bounding this quantity, we see that the number of $\rd$-discs that can contain $v$ is at most $K\defeq 2\maxdeg^\rd + 1$ which is also an upper bound on the number of vertices in one $\rd$-disc. Our proof therefore proceeds as follows (1) Consider the list of all $\rd$-discs and apply privacy amplification by subsampling without replacement, and then (2) apply group privacy for groups of size $K$.

Recall that $\pmech$ is an algorithm that satisfies $(\priv,\pdelta)$-differential privacy with respect to the $\rd$-disc tuple distance (see \Cref{def:k-disc-distance}). Applying privacy amplification by subsampling without replacement~\cite{balle2020a}, we see that subsampling $\ms$ discs from this ``database'' will yield a new protocol $\pmech'$ that satisfies $(\priv_1,\pdelta_1)$-differential privacy where\[
\priv_1 = \log\left(1 + \frac{\ms}{\ns}(\eep  - 1) \right),\qquad \pdelta_1 = \frac{\ms}{\ns}\pdelta.
\]
Therefore, this algorithm also satisfies $(\priv', \pdelta')$-group privacy for groups of size $K$ (see~\cite{balle2020a}), where \[
\priv' = K \priv_1,\qquad \pdelta' = Ke^{K\priv_1}\pdelta_1.
\]
Since we already established that group-privacy at distance $K$ over tuples of distinct sampled $\rd$-discs is sufficient to guarantee node-DP for the underlying graph, this is enough to complete the proof. Expanding $\priv'$ and $\pdelta'$, we see that if $\pmech$ is an algorithm that satisfies $(\priv,\pdelta)$-differential privacy with respect to the $\rd$-disc tuple distance then $\pmech'$ will satisfy $(\priv',\pdelta')$-node differential privacy where\[
\priv' = K \log\left(1 + \frac{\ms}{\ns}(\eep - 1)\right),\qquad \pdelta' = K\frac{\ms}{\ns}\left(1 +  \frac{\ms}{\ns}(\eep-1)  \right)^K \pdelta.
\]
Notice that if we chose instead to upper bound the number of $\rd$-discs that could contain some \emph{edge}, we would still find that $K\leq2\maxdeg^{\rd} + 1$. As such, the same proof holds for edge-differential privacy.
\end{proof}

In the bounded-degree model, the problem of $H$-freeness testing is again one of the most natural problems to study since it aligns so closely with how we sample from our graphs. 
\begin{lemma}[{Detection probability of a $\rd$-disc sample~\cite[\S 9.2]{goldreich2017introduction}}]
    Let $H$ be a graph with $h=v(H)$ vertices. If $G=([\ns],E)$ is $\dst$-far from being $H$-free in the bounded-degree model, then a $\rd$-disc sample witnesses a copy of $H$ with probability at least $\dst/2$.
\end{lemma}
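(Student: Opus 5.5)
The plan is a direct counting argument: identify a set $U$ of ``good'' roots whose $\rd$-discs each contain a copy of $H$, and show that $U$ must be large because otherwise $G$ could be made $H$-free cheaply. I take $\rd$ to be the radius of $H$, as in \cref{thm:dp-h-freenessintro}. Since $H$ is connected, fix a center $c_H\in V(H)$ such that every vertex of $H$ is within distance $\rd$ of $c_H$ in $H$. I will assume $H$ has at least one edge, since otherwise $H$-freeness is trivial.

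\textbf{Step 1 (centers see whole copies).} For every copy $\phi(H)$ of $H$ in $G$, where $\phi$ is an injective homomorphism, the image $\phi(c_H)$ is the center of that copy. Every vertex $\phi(u)$ is at distance at most $\mathrm{dist}_H(c_H,u)\le \rd$ from $\phi(c_H)$ in $G$, because paths in $H$ map to walks in $G$. Hence $B_G(\phi(c_H),\rd)$, which is the subgraph \emph{induced} by all vertices within distance $\rd$, contains every vertex and every edge of $\phi(H)$. So a $\rd$-disc sample rooted at any such center witnesses $H$. Let $U\subseteq[\ns]$ be the set of all vertices that are centers of at least one copy of $H$.

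\textbf{Step 2 (small $U$ gives a cheap repair).} Delete all edges incident to $U$. This removes at most $\maxdeg|U|$ edges. Every copy of $H$ has its center in $U$, and since $H$ is connected with at least one edge, the center is incident to some edge of the copy. That edge is deleted, so the copy is destroyed. Deletions create no new copies, so the resulting graph is $H$-free. By $\dst$-farness (\cref{def:bounded-closeness}), we need $\maxdeg|U|>\dst\maxdeg\ns$, that is, $|U|>\dst\ns$.

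\textbf{Step 3 (conclude).} A $\rd$-disc sample picks its root uniformly from $[\ns]$, so it lands in $U$ with probability $|U|/\ns>\dst\ge\dst/2$, and in that case it witnesses $H$ by Step 1.

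I expect the only delicate point to be Step 1, namely being careful that the disc is the induced ball, so that containment of the copy's edges holds, and that the center is well defined via the radius of $H$. The bound obtained is actually $\dst$, which is stronger than the stated $\dst/2$. The slack would absorb variants of the argument, for example one that counts copies through arbitrary vertices or allows edge additions in the distance definition.
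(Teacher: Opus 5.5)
Your proof is correct, and it is essentially the standard centers argument behind the result the paper cites from Goldreich's book; the paper itself gives no proof. Deleting every edge at the set $U$ of centers of copies of $H$ destroys all copies, so farness forces $|U|>\gamma n$. The stated $\gamma/2$ is just slack, consistent with measuring distance in incidence-list entries where each deleted edge counts twice; under the paper's $\gamma d n$ edge normalization your stronger bound $\gamma$ holds as written.
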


The algorithm associated with the above lemma is presented in \Cref{alg:pot-h-free}.
Repeating this test $O(\log(1/\failp)/\dst)$ times yields~\Cref{alg:gen-h-free}, a general algorithm for $H$-freeness with one-sided error probability at most $\failp$.
Combining, we have the theorem below.

\begin{theorem}[Non-private $H$-freeness testing in the bounded-degree model \cite{goldreich2002property}]
Let $G$ be an unknown graph with maximum degree $d$, $H$ be a known graph with $h$ vertices, $\dst \in (0,1], \failp \in (0,1)$ 
be the proximity parameter and failure probability, respectively. There exists a constant query tester (\Cref{alg:gen-h-free}) (independent of $n$ and depends only on $\dst$ and $\failp$) that with probability at least $1-\failp$, distinguishes whether $G$ is $H$-free or $\dst$-far from $H$-free. 
\end{theorem}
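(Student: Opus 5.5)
The tester will be the proximity-oblivious basic test repeated independently; I want to show it has query complexity independent of $\ns$ and error at most $\failp$. The basic test samples a single $\rd$-disc, where $\rd$ is the radius of $H$ (or $h$ if we do not wish to compute the radius). It does this by choosing a uniformly random root $v$ and running a BFS of depth $\rd$, which costs at most $\maxdeg\cdot|B_G(v,\rd)| \le \maxdeg K$ neighbor queries with $K = 2\maxdeg^{\rd}+1$. It rejects if and only if $B_G(v,\rd)$ contains a copy of $H$. This is exactly \Cref{alg:pot-h-free}, and the general tester \Cref{alg:gen-h-free} repeats it $\ms = \lceil 2\ln(1/\failp)/\dst\rceil$ times, rejecting if any repetition rejects. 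The query complexity is therefore $O(\maxdeg^{\rd+1}\log(1/\failp)/\dst)$, which depends only on $\maxdeg$, $h$, $\dst$ and $\failp$, and not on $\ns$.

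For completeness, I use connectivity of $H$ together with one-sidedness. Every copy of $H$ found lies inside $G$, so an $H$-free graph is never rejected; the error on this side is $0$.

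For soundness, suppose $G$ is $\dst$-far from $H$-free. By the detection-probability lemma stated just before, each independent $\rd$-disc sample witnesses a copy of $H$ with probability at least $\dst/2$. Independence across the $\ms$ repetitions then gives
\[
\Pr[\text{all repetitions accept}] \le (1-\dst/2)^{\ms} \le e^{-\dst \ms/2} \le \failp.
\]
If the roots are drawn without replacement, as in $\kdsampler$, I would instead argue that $(1-\dst/2)^{\ms}$ still upper bounds the probability, or simply use sampling with replacement here, since this is the non-private statement.

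The main obstacle is really only the detection lemma, which we may assume. The one point needing care is that the BFS must reveal the entire induced disc, including the edges among depth-$\rd$ vertices, so that any copy of $H$ rooted at a vertex within distance $\rd$ is actually seen. To handle this, I will query all $\maxdeg$ neighbor slots of every discovered vertex, including those at the last layer; this keeps the cost within $\maxdeg K$ per disc.
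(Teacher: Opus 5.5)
Your proposal is correct and follows the paper's own argument: the proximity-oblivious $k$-disc test of \Cref{alg:pot-h-free}, the $\gamma/2$ detection lemma, and $O(\log(1/\beta)/\gamma)$ independent repetitions giving a one-sided error tester, with you additionally spelling out the per-disc query bound and the calculation $(1-\gamma/2)^{\ell}\le e^{-\gamma\ell/2}\le\beta$ that the paper leaves implicit. One small correction: completeness does not use connectivity of $H$, since any copy seen in a sampled disc is already a copy in $G$; connectivity (through the radius of $H$) is used only inside the detection lemma, to guarantee that every copy of $H$ lies in the $k$-disc around the image of its center.
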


With the above theorem in hand, we apply randomized response~(\cref{def:randomized-response}) with $\priv = \log(1 + \frac{\ns}{\ms}(e^{ \priv'/K } - 1))$ to the output of~\Cref{alg:gen-h-free}, we get the following theorem about private $H$-freeness tester in the bounded-degree model.

\begin{theorem}[Private $H$-freeness testing in bounded-degree model]
\label{thm:dp-h-freeness}
Let $H$ be a graph with $h$ vertices and radius $\rd$. For all $\priv\geq \frac{K\log(2/\failp)}{\dst\failp\ns}$, there exists an $\priv$\emph{-differentially private}, two-sided error property tester for $H$-freeness with failure probability $\failp$. The algorithm samples $q=\frac{\log(2/\failp)}{\dst}$ vertices, and makes up to ${q\cdot 2\maxdeg^\rd+1}$ queries while performing a BFS around each sampled vertex. Note that this is within a constant factor of the non-private query complexity. The privacy guarantees hold for both edge and node differential privacy. 
\end{theorem}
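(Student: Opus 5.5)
We combine the non-private tester of \Cref{alg:gen-h-free} with $\RR$ and invoke \Cref{thm:kdiscamplification}. Concretely, the tester samples $\ms=q=\log(2/\failp)/\dst$ roots without replacement, explores the $\rd$-disc around each, and computes the bit $f(y)$, which records whether some disc contains a copy of $H$. Since $H$ is connected with radius $\rd$, every copy of $H$ lies inside the $\rd$-disc of a suitable root, so $f$ is a function of the $\ms$-tuple of discs alone. We output $\RR_{\priv_0}(f(y))$.

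\textbf{Privacy.} The mechanism $\RR_{\priv_0}\circ f$ is $\priv_0$-DP with respect to any pair of tuples, and hence in particular with respect to the $\rd$-disc tuple distance of \Cref{def:k-disc-distance}, with $\pdelta=0$. \Cref{thm:kdiscamplification} then gives edge- and node-DP with parameter $K\log(1+\frac{\ms}{\ns}(e^{\priv_0}-1))$. We choose
\[
\priv_0=\log\Bigl(1+\frac{\ns}{\ms}(e^{\priv/K}-1)\Bigr),
\]
which makes this parameter exactly $\priv$. One could instead try to avoid the group-privacy factor $K$ by arguing as in \Cref{thm:rr-subgraph-tester}. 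However, the exposure probability is about $K\ms/\ns$ in any case, so we simply reuse the stated theorem.

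\textbf{Accuracy.} By the detection lemma, a single disc witnesses $H$ with probability at least $\dst/2$ when $G$ is $\dst$-far. Taking $q=O(\log(2/\failp)/\dst)$ repetitions gives one-sided failure probability at most $\failp/2$; the constant must be adjusted slightly to account for sampling without replacement and for the factor $2$ in $\dst/2$. Randomized response flips the bit with probability
\[
\frac{1}{e^{\priv_0}+1}\le e^{-\priv_0}\le \frac{\ms}{\ns(e^{\priv/K}-1)}\le \frac{\ms K}{\ns\priv},
\]
where the last step uses $e^x-1\ge x$. The total error is therefore at most $\failp/2+qK/(\ns\priv)$, which is at most $\failp$ as soon as
\[
\priv\ge \frac{2Kq}{\ns\failp}=\frac{2K\log(2/\failp)}{\dst\failp\ns}.
\]
This matches the stated threshold up to the constant.

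\textbf{Query count.} Each BFS to depth $\rd$ touches at most $K=2\maxdeg^\rd+1$ vertices, so the number of queries is at most $q\cdot K$, which is the same as for the non-private tester.

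\textbf{Main obstacle.} The delicate step is bookkeeping of constants: getting exactly the threshold $\frac{K\log(2/\failp)}{\dst\failp\ns}$ rather than a constant multiple of it. I expect this to require setting the non-private failure probability and $\ms$ carefully, for example using $1/(e^{\priv_0}+1)$ rather than $e^{-\priv_0}$. If the constants cannot be matched exactly, I would state the result with an absolute constant factor.
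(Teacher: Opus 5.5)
Your proposal is correct and follows essentially the same route as the paper's proof. Like the paper, you compose $\RR_{\priv_0}$ with the disc-based decision $f$, invoke \Cref{thm:kdiscamplification} with $\priv_0=\log(1+\frac{\ns}{\ms}(e^{\priv/K}-1))$, bound the flip probability by $\frac{\ms K}{\ns\priv}$, and set the non-private failure probability to $\failp/2$. The paper's own proof also ends at the threshold $\priv\geq\frac{2K\log(2/\failp)}{\dst\failp\ns}$ and likewise ignores the factor $2$ coming from the detection probability $\dst/2$, so the constant discrepancy you flag lies in the statement itself, not in your argument.
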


\begin{proof}
    Let $q=\log(1/\failp')/\dst$, $K=2\maxdeg^\rd+1$. Sample the $q$-tuple $y\sim\sampler^{\text{disc}}_{q,\rd}(G)$ and let $f$ be the function that outputs $1$ if $y$ contains a copy of $H$. By~\cref{thm:kdiscamplification}, applying randomized response to the output of $f$ with $\priv=\log\left(1 + \frac{\ns}{q}(e^{\priv'/K} -1 )\right)$ will satisfy $\priv'$-differential privacy. 
    
    Calculating the new failure probability, we see that
    \begin{align*}
        \beta''& \leq \Pr[\text{non-private tester errors}] + \Pr[\text{randomized response flips output}]\\
        &= \failp' + \frac{1}{e^\priv + 1}\\
        &\leq \failp' + \frac{\log(1/\failp')}{\dst\ns(e^{\priv'/K}-1)}\\
        &\leq \failp' + \frac{K\log(1/\failp')}{\dst\failp\ns}.
    \end{align*}
    Finally, set $\failp'= \failp/2$ and solve for $\priv$ to see that the private tester has failure probability at most \failp whenever $\priv'\geq \frac{2K \log(2/\failp)}{\dst\failp\ns}$. Otherwise the tester has the failure probability of $\failp'' \leq \frac{2K\log(1/\failp')}{\dst\priv'\ns}$.
\end{proof}

\begin{theorem}[{Every hyperfinite property is testable~\cite{newman2013every}}]
\label{thm:non-priv-hyperfinite}
Let \property be a $\rho$-hyperfinite family of graphs with maximum degree $\maxdeg$, and let $\dst,\failp\in(0,1]$. Given query access to an unknown bounded-degree graph $G$, there exists an algorithm (\Cref{alg:hyperfinite-tester}) that performs constant number of queries (independent of $n$, and depends only on $\dst$ and $\failp$) and with probability at least $1- \failp$, distinguishes whether $G \in \property$, or $\dst$-far from \property.
{\cref{alg:hyperfinite-tester} is a (two-sided error) property tester for \property}. %
\end{theorem}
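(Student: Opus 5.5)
The plan follows Newman--Sohler: testing reduces to estimating the $\rd$-disc frequency vector of $G$, for a suitable radius $\rd=\rd(\dst,\rho,\maxdeg)$, up to small $\ell_1$ error. The statement being a cited result, I would reconstruct the argument in three steps.

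First, invoke a partition oracle for $\rho$-hyperfinite graphs, due to Hassidim--Kelner--Nguyen--Onak and Levi--Ron. It answers queries of the form ``which part contains $v$'' using a number of queries that depends only on $\dst$, $\rho$ and $\maxdeg$. The guarantee is that every part has size at most $k=\rho(\dst')$, and with probability $2/3$ at most $\dst'\maxdeg\ns$ edges are cut.

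Second, use the oracle to show the frequency claim. For any two $\rho$-hyperfinite graphs $G_1,G_2$ on $\ns$ vertices, if their $\rd$-disc frequency vectors $\mathrm{freq}_\rd(G_1),\mathrm{freq}_\rd(G_2)$ are within $\ell_1$ distance $\xi(\dst)$, then $G_1$ and $G_2$ are $\dst$-close, up to isomorphism.
\begin{itemize}
\item After removing $\dst'\maxdeg\ns$ edges, each graph becomes a disjoint union of components of size at most $k$.
\item There are only finitely many isomorphism types of such components.
\item The local oracle computation around a vertex depends only on its $\rd$-disc, for $\rd$ large enough compared with the oracle's locality radius. Hence the $\rd$-disc frequencies determine, up to small error, how many components of each type appear in each partitioned graph.
\item Matching equal-type components across the two graphs then costs $O(\dst'\maxdeg\ns)$ edge modifications.
\end{itemize}
This step is the main obstacle. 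The delicate points are making the oracle's randomness consistent (using shared random seeds) so that it is a function of $\rd$-discs, and controlling the loss in the ``frequencies determine the partition'' step by Markov-type averaging.

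Third, build the tester. Sample $\ms=O(\log(1/\failp)\,|\text{disc types}|^2/\xi^2)$ uniform roots, compute the empirical disc distribution, and check whether it is $\xi/2$-close to $\mathrm{freq}_\rd(G')$ for some $G'\in\property$ on $\ns$ vertices. Completeness and soundness follow from Hoeffding's inequality together with the frequency claim; the claim is applied to $G$ and the witness $G'$, both of which lie in the hyperfinite class. Because this compares against an estimated frequency vector rather than testing exactly, the error is two-sided.
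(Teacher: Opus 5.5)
There is a genuine gap in your soundness argument. The statement allows the input $G$ to be an \emph{arbitrary} bounded-degree graph; only $\Pi$ is assumed $\rho$-hyperfinite. Your frequency claim in the second step needs \emph{both} graphs to be $\rho$-hyperfinite: its first bullet removes few edges from each of them to reach components of size at most $k$. Yet in the third step you apply it to the input $G$ and a witness $G'\in\Pi$, asserting that both lie in the hyperfinite class. When $G$ is $\varepsilon$-far from $\Pi$ this is unjustified. For instance, $G$ may be a bounded-degree expander, which is far from every hyperfinite graph. Then your second step gives no reason why closeness of $G$'s disc frequencies to those of some $G'\in\Pi$ should force $G$ to be close to $G'$. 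As written, you have proved only the promise version, in which the input is guaranteed to be hyperfinite. That is exactly the weaker guarantee the paper settles for in \Cref{thm:dphyperfinitetestable}, where it runs only Phase~2 and explicitly gives up accuracy on non-hyperfinite inputs.

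The missing ingredient is Phase~1 of \Cref{alg:hyperfinite-tester}. Before comparing frequencies, run the partition oracle with shared seeds at $q$ random vertex--neighbor pairs, and reject if the empirical fraction of cut pairs exceeds $4\gamma_1$. For $G\in\Pi$ this passes with high probability, because $G$ is $\rho$-hyperfinite. Conversely, if it passes, then with high probability the oracle's own locally computed partition of $G$ into parts of size at most $k_1$ cuts few edges. That is all your matching argument really needs about $G$: the frequencies of part types in $G$ are then determined by its $D$-disc frequencies, exactly as for $G'$.

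Alternatively, you could keep a one-phase tester, but then you must prove a stronger second step in which only $G'$ is assumed hyperfinite. One possible route: take an oracle driven by i.i.d.\ per-vertex seeds, with locality radius $R$ and parts of size at most $k$ on every input. Its expected number of cut edges is then a bounded linear functional of the $(R+1)$-disc frequency vector. So a small expected cut for $G'$ transfers to $G$ once $D\ge R+1$ and the two frequency vectors are close. Either way, some argument handling non-hyperfinite inputs is required. The rest of your outline (estimating the disc-frequency vector by sampling, completeness, and the Newman--Sohler claim for two hyperfinite graphs) agrees with the paper's Phase~2.
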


Hyperfinite graphs are a generalisation of planar graphs including, for example, every class of graphs defined by not containing a specific forbidden minor. In a seminal work,~\citet{newman2013every}, demonstrated that every property of hyperfinite graphs is (constant-query) testable. They showed that it is sufficient to just explore a number (independent of $\ns$) of randomly sampled $\rd$-discs of the graph. Applying~\cref{thm:kdiscamplification}, we get the following private version of their result. 

{The proof of \Cref{thm:non-priv-hyperfinite} has two parts. In the first part, one checks whether the unknown input graph is hyperfinite, and then, if yes, in the second step, checks for the property \property. The first step is achieved by the use of a technique known as \emph{partition oracle}. However, as mentioned in the introduction, we currently do not have a private variant of the partition oracle, and as a result, our private hyperfinite tester works under the additional assumption that the unknown input graph is promised to be hyperfinite. We note that our resulting tester is private (as long as the input graph has bounded degree), and only the accuracy is affected in the event that the hyperfinite promise is violated. }

\begin{theorem}[Every property of hyperfinite graphs is \emph{privately} testable]
\label{thm:dphyperfinitetestable}
    Let $\rho,\dst,\failp\in(0,1)$, and let \property be an arbitrary property of $\rho$-hyperfinite graphs of maximum degree \maxdeg. Let $\eta=\eta(\dst,\rho,\maxdeg)$ and $\rd=\rd(\dst,\rho,\maxdeg)$ be the parameters described in~\cref{alg:hyperfinite-tester}~\cite{newman2011every}. Let $N=N(\rd,\maxdeg)$ be the number of non-isomorphic rooted graphs of radius at most $\rd$, and set \[
    s\defeq \lceil \eta^{-2}N^2\log(40N/\failp)\rceil,\qquad K\defeq 2\maxdeg^\rd+1.
    \]
    Then for every $\priv>0$ there exists an $\priv$-node and edge differentially private, two sided error $\dst$-tester for \property on inputs $G=([\ns],E)$ with the following properties
    \begin{itemize}
        \item If $G$ is hyperfinite, then the tester has failure probability at most $\failp$ so long as $\priv=\bigOme{\frac{s\maxdeg^\rd}{\failp\ns}}$.
        \item If $G$ is bounded-degree but not hyperfinite, then there are no accuracy guarantees, but the tester is still private.
    \end{itemize}
    The tester samples $s$ vertices uniformly at random and performs a BFS of depth $\rd$ around each, making $\bigO{s\cdot \maxdeg^\rd}$ queries in total. Notably, the query complexity is independent of $\ns$ and within a constant factor of the non-private query complexity of~\cref{thm:non-priv-hyperfinite}.
\end{theorem}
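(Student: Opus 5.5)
We follow the same template as the proof of \cref{thm:dp-h-freeness}: take the non-private tester of \cref{thm:non-priv-hyperfinite} (\Cref{alg:hyperfinite-tester}), view it as a Boolean function applied to a $\rd$-disc sample, and wrap its output in randomized response. Concretely, let $y\sim\sampler^{\text{disc}}_{s,\rd}(G)$ be an $s$-tuple of $\rd$-discs rooted at $s$ distinct uniformly random vertices, and let $f:\range^{\text{disc}}_{s,\rd}\to\{0,1\}$ be the decision rule of the Newman--Sohler tester. It estimates the frequency vector of isomorphism types of $\rd$-discs from $y$ and accepts if and only if this vector is close (within the threshold determined by $\eta$) to that of some graph in \property. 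The first step is to check that the non-private tester is indeed of this form, with sample size $s=\lceil\eta^{-2}N^2\log(40N/\failp)\rceil$. We invoke it with failure probability $\failp/2$, which changes $s$ only by a constant factor. The tester we output is $\RR_{\priv_0}\circ f$.

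For privacy, $\RR_{\priv_0}\circ f$ is $(\priv_0,0)$-DP with respect to the $\rd$-disc tuple distance of \cref{def:k-disc-distance}, and in fact with respect to any pair of inputs. By \cref{thm:kdiscamplification} with $\ms=s$ and $\pdelta=0$, the composed mechanism is then $\priv$-node and edge DP with $\priv=K\log(1+\frac{s}{\ns}(e^{\priv_0}-1))$. Solving gives
\[
\priv_0=\log\left(1+\frac{\ns}{s}(e^{\priv/K}-1)\right).
\]
This step uses only that $G$ and $G'$ have maximum degree $\maxdeg$ and never uses hyperfiniteness, which proves the second bullet: privacy holds for every bounded-degree input.

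For accuracy on hyperfinite $G$, the error is at most the error of the non-private tester plus the probability that randomized response flips its input:
\[
\failp''\le\frac{\failp}{2}+\frac{1}{e^{\priv_0}+1}\le\frac{\failp}{2}+\frac{s}{\ns(e^{\priv/K}-1)}\le\frac{\failp}{2}+\frac{sK}{\ns\priv}.
\]
This is at most $\failp$ once $\priv\ge 2sK/(\failp\ns)$. Since $K=2\maxdeg^\rd+1$, this threshold is $\bigO{s\maxdeg^\rd/(\failp\ns)}$, matching the first bullet. The query count is $s$ BFS explorations of depth $\rd$, each using $\bigO{\maxdeg^\rd}$ queries, which is the same as the non-private tester.

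The main obstacle is the first step: confirming that the Newman--Sohler tester, once the partition-oracle check for hyperfiniteness is removed, is exactly a function of a without-replacement $\rd$-disc sample, so that \cref{thm:kdiscamplification} applies directly. Their analysis samples roots with replacement. Two fixes are available: argue that sampling without replacement gives the same Chernoff-type frequency concentration (hypergeometric tails are no worse), or note that for $s\ll\ns$ the two samplers are within total variation distance $\bigO{s^2/\ns}$, which can be absorbed into $\failp$. The promise that $G$ is hyperfinite is used only to import the soundness and completeness guarantees of \cite{newman2013every} for the frequency-vector test.
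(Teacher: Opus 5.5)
Your proposal is correct and follows the paper's proof. You restrict \Cref{alg:hyperfinite-tester} to its second phase, wrap the decision in $\RR_{\priv_0}$ with $\priv_0=\log(1+\frac{\ns}{s}(e^{\priv/K}-1))$, invoke \cref{thm:kdiscamplification} for privacy on every bounded-degree input, and charge the flip probability $\le sK/(\ns\priv)$ to the error. Your handling of with- versus without-replacement root sampling and your direct choice of Phase-2 failure probability $\failp/2$ are slight improvements on the paper's sketch: the paper ignores the former, and the latter yields exactly the stated $s$ in place of the paper's vague ``repeating the test.'' Your consistent use of $K=2\maxdeg^\rd+1$ also fixes the paper's proof, which writes $\maxdeg^\rd+1$.
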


\begin{proof}
    For the privacy analysis, we use a restricted version of~\cref{alg:hyperfinite-tester}, only performing the second phase of the algorithm. Applying~\Cref{thm:kdiscamplification} with $s$ as above, $\rd=\rd(\dst,\rho,\maxdeg)$ as above, and $K=\maxdeg^\rd+1$. Let $\pmech$ be the algorithm that given a list of $\rd$-discs decides as in the second phase~\cref{alg:hyperfinite-tester} and then applies randomized response with $\priv=\log(1 + \frac{\ns}{s}(e^{\priv'/K} - 1))$. Then $\pmech$ run on randomly sampled $\rd$-discs from $G$ satisfies $\priv'$-differential privacy and incurs additional failure probability $\bigO{\frac{s\cdot K}{\ns \priv'}}$. Repeating the test before applying randomized response yields the dependence on $\failp$ as in the theorem statement.
\end{proof}

\begin{remark}
We note that \Cref{alg:hyperfinite-tester} follows the proof structure of \cite{newman2011every}, the conference version of \cite{newman2013every}. As noted by the authors in \cite{newman2013every}, they revised the structure in the later version for better clarity. However, the algorithm directly obtained from the revised structure uses the partitioning oracle in phase~2 of the algorithm, which would significantly complicate the analysis of our private tester later on.    
\end{remark}

\paragraph{AI Disclosure.} 
The authors used a variety of commercially available LLMs, including but not limited to Claude, Gemini, and ChatGPT. These services were primarily used as ``advanced spell-checkers''----assisting with writing, grammar, and punctuation. We here also include a specific list of instances where these tools were used for more involved writing tasks, or to develop the ideas or mathematics of the paper:
\begin{itemize}
    \item The proof of~\cref{thm:induced-edge-amplification} was assisted by asking Claude Opus 4.8 to construct a coupling between two distributions over permutations where the indices are divided into blocks---an abstraction of the coupling used in our paper. A human interpretation of the final version of coupling, produced by several rounds of back and forth discussion, plus some heavy reinterpretation by the authors, is presented in the paper.
    \item The \LaTeX ~for \Cref{alg:hyperfinite-tester} was generated by Gemini given as input the paper~\cite{newman2011every}.
\end{itemize}
The authors verified the correctness and originality of all content including references.

\printbibliography
\newpage
\appendix
\section{Proof of edge-privacy amplification by induced subgraph sampling}
\label{sec:subgraph-amplification-proof}

In this section, we complete the proof of \Cref{thm:induced-edge-amplification}.

\inducedsubgraph*

\begin{proofof}{\cref{thm:induced-edge-amplification}}
We continue from where the proof sketch left off.

Following the methodology of~\cite{balle2020a}, let $\pmech$ be an $(\priv_0,\pdelta_0)$-differentially private protocol with respect to the edge distance relation $\simeq_{e}$~(Definition~\ref{def:edge-edit-distance}), and let $M$ be the Markov kernel defined by said protocol. This allows us to rewrite the distribution of outputs of our protocol as $\mu=\omega M$, where $\ome=\subsampler_{\ms,h}(G)$ and $\ome'=\subsampler_{\ms,h}(G')$ are our sampling distributions and $M$ is the Markov kernel defined by $\pmech$. In this way, $\mu=\ome M$ and $\mu'=\ome' M$.

Our task is then to bound the $D_{\diverge}$ terms using couplings and group privacy profiles. Specifically, we will construct distance bounded couplings between the pairs $\mu_1$ and $\mu_1'$ and $\mu_1$ and $\mu_0$.

The first case, between $\mu_1$ and $\mu_1'$ is fairly straightforward. Define the coupling $\pi_{1,1}$ between $\ome_1$ and $\ome_1'$ defined as follows
\begin{enumerate}
    \item Sample $B\sim\blockdist_1$;
    \item Construct $(y,y') = (g(B), g'(B))$.
\end{enumerate}
Since $\ome_1,\ome_1'=\blockdist_1\kappa_{g},\blockdist_1\kappa_{g'}$, this is clearly a valid coupling. Furthermore, for all pairs $y,y'$ generated in this way $d(y,y')\leq 1$. Applying~\cref{def:dcc}, we see that for every $\priv>0$,
\begin{equation}
    D_{e^{\priv}}(\mu_1\|\mu_1') \leq \sum_{y,y'}\pi_{1,1}(y,y')D_{e^{\priv}}(\pmech(y)\|\pmech(y')) \leq \pdelta_\pmech (\priv).\label{eq:distance-bound-11}
\end{equation}
{The distance $d(y,y')$ is the edge distance on $\range_{\ms,h}$ of~\cref{def:edge-edit-distance}. Since $y=g(B)$ and $y'=g'(B)$, they are tuples of graphs that differ on just the edge $(x_1, x_2)$ in one block. As such, $d(y,y')\leq 1$.}

The second case {between $\mu_1$ and $\mu_0$} is more complicated. We first define in more detail the sample space over our block configurations. Let
\[
\mathcal{B}:=\left\{B=(b_1,\ldots,b_\ms):b_i\in[\ns]^h,\text{ and the }\ms h\text{ entries of }B\text{ are pairwise distinct}\right\}.
\]
Since the sampler draws its $\ms h$ vertices uniformly without replacement, $\blockdist$ is uniform on $\mathcal{B}$, and hence the conditional distributions defined below are uniform on the corresponding subsets of $\mathcal{B}$. Let $\mathbf B$ denote the random block configuration produced by the subsampling procedure. Recall that $e=(x_1,x_2)$ is the edge on which $G$ and $G'$ differ. We partition $\mathcal B$ according to how the endpoints $x_1,x_2$ appear in the sampled blocks. Define
\begin{align*}
    \mathcal{B}_S&:=\left\{B=(b_1,\ldots,b_\ms)\in\mathcal B:\exists r\in[\ms]\text{ such that }\{x_1,x_2\}\subseteq b_r\right\},\\
    \intertext{and}
    \mathcal{B}_T&:=\left\{B\in\mathcal B:x_1,x_2\text{ are both sampled, but occur in different blocks}\right\},\\
    \mathcal{B}_H&:=\left\{B\in\mathcal B:\text{exactly one of }x_1,x_2\text{ is sampled}\right\},\\
    \mathcal{B}_M&:=\left\{B\in\mathcal B:\text{neither }x_1\text{ nor }x_2\text{ is sampled}\right\}.
\end{align*}
These sets form the disjoint partition
\[
\mathcal{B}\setminus\mathcal{B}_S
=\mathcal{B}_T\sqcup \mathcal{B}_H \sqcup \mathcal{B}_M.
\]
The probability of each case is as follows: $T$ occurs when we sample both $x_1$ and $x_2$, but they appear in different blocks. There are $\binom{\ms h}{2}$ pairs they can occur in, and we have to remove the $\ms\binom{h}{2}$ pairs that occur within one block, normalizing we see that
\[
p_T =\frac{ \binom{\ms h}{2} - \ms\binom{h}{2} }{\binom{n}{2}} = \frac{h^2 \ms (\ms - 1)}{n(n - 1)}.
\]
$H$ occurs when we sample exactly one endpoint. Let $d\defeq \ns - \ms h$ denote the number of remaining elements after sampling. Multiplying these probabilities and doubling them to account for the two symmetric cases shows that\[
p_H = \frac{2d\ms h}{n(n-1)}.
\]
Straightforwardly, $M$ just requires that neither end is sampled, which occurs with probability\[
p_M = \frac{d(d-1)}{n(n-1)}.
\]

With this, we can restate $\lambda_1$ as the distribution of block configurations
conditioned on $\mathbf B\in\mathcal B_S$, and let $\lambda_0$ denote
the distribution conditioned on
$\mathbf B\notin\mathcal B_S$:
\[\lambda_1(B):=\Pr[\mathbf B=B\mid \mathbf B\in\mathcal B_S],\qquad
\lambda_0(B):=\Pr[\mathbf B=B\mid \mathbf B\notin\mathcal B_S].\]
Similarly, let $\lambda_T,\lambda_H,\lambda_M$ denote the conditional
distributions given $\mathcal B_T,\mathcal B_H,\mathcal B_M$, respectively.
Writing
\[
\eta=\Pr[\mathbf B\in\mathcal B_S],
\]
and
\[q_T:=\Pr[\mathbf B\in\mathcal B_T\mid \mathbf B\notin\mathcal B_S],\qquad
q_H:=\Pr[\mathbf B\in\mathcal B_H\mid \mathbf B\notin\mathcal B_S],
\]
\[
q_M:=\Pr[\mathbf B\in\mathcal B_M\mid \mathbf B\notin\mathcal B_S],
\]
we have \[q_T+q_H+q_M=1\]
and
\[\lambda_0=q_T\lambda_T+q_H\lambda_H+q_M\lambda_M.
\]
Equivalently,
\[
q_T=\frac{p_T}{1-\eta},
\qquad
q_H=\frac{p_H}{1-\eta},
\qquad
q_M=\frac{p_M}{1-\eta}.
\]

Now, we can state the following coupling $\pi_{1,0}$ between $\mu_1,\mu_0$:
\begin{enumerate}
    \item Sample
    \[
    B_1=(b_1,\ldots,b_\ms)\sim\blockdist_1.
    \]
    Since $B_1\in\mathcal B_S$ and the blocks are disjoint, there is a unique $r\in[\ms]$ such that
    \[
    \{x_1,x_2\}\subseteq b_r.
    \]

    \item Independently choose a case
    \[
    C\in\{T,H,M\}
    \]
    with probabilities
    \[
    \Pr[C=T]=q_T,\qquad
    \Pr[C=H]=q_H,\qquad
    \Pr[C=M]=q_M.
    \]

    \item Conditional on $C$, construct $B_2$ from $B_1$ as follows.

    \begin{itemize}
        \item \textbf{Case $T$:}
        Choose $i\in\{1,2\},\,j\in[\ms]\setminus\{r\},\,z\in b_j$ uniformly at random, and swap $x_i$ and $z$ between $b_r$ and $b_j$.
        The resulting configuration satisfies $B_2\in\mathcal B_T.$

        \item \textbf{Case $H$:} Choose $i\in\{1,2\}$ uniformly at random and choose $z\in [\ns]\setminus\bigcup_{t=1}^{\ms} b_t$
        uniformly at random. Replace $x_i$ by $z$ in $b_r$. The resulting configuration satisfies $B_2\in\mathcal B_H.$

        \item \textbf{Case $M$:}
        Choose an ordered pair of distinct vertices $(z_1,z_2)$,
        with $z_1,z_2\in [\ns]\setminus\bigcup_{t=1}^{\ms} b_t$,
        uniformly at random, and for $i\in\{1,2\}$ replace
        $x_i$ in $b_r$ by $z_i$.
        The resulting configuration satisfies $B_2\in\mathcal B_M.$
    \end{itemize}

    \item Output $(y,y') = (g(B_1),g(B_2))$.
\end{enumerate}
By definition, $y\sim \ome_1 = \blockdist_1 \kappa_g$. We argue that (I) $y'\sim \ome_0 = \blockdist_0\kappa_g$, and (II) for all pairs $(y,y')$ produced by this coupling, $d(y,y')\leq2(h-1)$. First, notice that to show (I) $y'\sim \ome_0$ it is sufficient to show that $B_2\sim\blockdist_0$, since by definition $\ome_0=\blockdist_0\kappa_g$. We first proceed in proving that $B_2\sim \blockdist_0$. Along the way, we will also gain some intuition for the distance bound.

Let $N\defeq |\mathcal{B}|$ be the total number of different configurations, and $N_S\defeq |\mathcal{B}_S|=\eta N$ be the number of configurations that satisfy $S$. Similarly, define $N_T,\,N_H$ and $N_M$. Let $N' = N_T + N_H + N_M=(1-\eta)N$ be the number of configurations that satisfy $T$, $H$ or $M$. Given $B_1\sim\blockdist_1$, which is uniform on $\mathcal{B}_S$, the algorithm described above produces a uniformly random element of $\mathcal{B}\setminus\mathcal{B}_S$. In other words, $B_2\sim\blockdist_0$. To prove this, we fix $X\in\mathcal{B}\setminus\mathcal{B}_S$ and compute $\Pr[B_2=X]$, breaking it up into the three cases $X\in\mathcal{B}_T,\,\mathcal{B}_H$, and $\mathcal{B}_M$. We will show that in each case it equals $1/N'$.

\paragraph{Case $T$.} Recall that in case $T$, both $x_1$ and $x_2$ are sampled from $[\ns]$, but they appear in different blocks, say, $b_1$ and $b_2$. Fix one such $X\in \mathcal{B}_T$ and notice that there are $2(h-1)$ ways to turn it into an instance in $\mathcal{B}_S$ using one swap. We can choose to bring $x_1$ to $x_2$ or vice-versa, and we can choose which non-endpoint in $b_1$ or $b_2$ to swap with. By symmetry, these are also all of the possible choices of $B_1$ that can be made into an instance of $X$ by our coupling. Since $B_1$ is uniform over $\mathcal{B}_S$, the probability that we sample one of these configurations is $\frac{2(h-1)}{N_S}$. In the instance that we sample one of these configurations, we also need to consider the probability that we both enter the $T$ branch and make the unique choice of $(i,j,z)$ that yields $X$. Putting these probabilities together, we see that 
\begin{align*}
    \Pr[B_2 = X] & = q_T  \cdot \frac{2(h-1)}{N_S} \cdot \frac{1}{2h(\ms-1)}\\
        & = \frac{h^2\ms(\ms-1)}{(1-\eta)n(n-1)} \cdot \frac{2(h-1)}{2N_S (\ms-1) h }\\
        & = \frac{\eta}{1-\eta}\cdot \frac{1}{\eta N} = \frac{1}{N'}
\end{align*}

\paragraph{Case $H$.} Recall that the event $H$ corresponds to sampling exactly one of $x_1$ or $x_2$. Fix some choice of $X\in \mathcal{B}_H$, and similarly to before there are $(h-1)$ ways to transform this into an instance of $\mathcal{B}_S$ with one replacement. Just the same as before, this implies that there are $h-1$ choices of $B_1$ that can be made into an instance of $X$. Let $d=\ns - \ms h$ be the number of remaining elements after sampling. Multiplying through the probabilities that we sample a suitable element in $B_1$, that we enter the right path of the algorithm, and that we pick the matching values of $(i,z)$, we see
\begin{align*}
    \Pr[B_2 = X] &= q_H \cdot \frac{h-1}{N_S} \cdot \frac{1}{2\maxdeg}\\
        &= \frac{2\maxdeg\ms h}{(1-\eta)n(n-1)} \cdot \frac{h-1}{N_S 2d}\\
        &= \frac{1}{N'}.
\end{align*}

\paragraph{Case $M$.} Again, fix some choice of $X\in \mathcal{B}_M$ and notice that there are $\ms h(h-1)$ ways to turn this into an instance of $\mathcal{B}_S$: choose a block, and the ordered pair of positions to be occupied by $(x_1,x_2)$. Naturally, for each such choice of $B_1$, there is exactly one ordered pair $(z_1,z_2)$ whose selection in the coupling yields $X$. As such,
\begin{align*}
    \Pr[B_2 = X] & = q_M \cdot \frac{\ms h(h-1)}{N_S} \cdot \frac{1}{d(d-1)}\\
        & = \frac{d(d-1)}{(1-\eta)n(n-1)}\cdot \frac{\ms h(h-1)}{N_S d(d-1)}\\
        & = \frac{1}{N'}
\end{align*}

{
By definition, applying $g$ to the configurations produced in this coupling yields a coupling between $\omega_1$ and $\omega_0$, since those distributions are exactly $\lambda_1 \kappa_g$ and $\lambda_0\kappa_g$. To apply~\cref{def:dcc}, we need to bound the \emph{edge distance} between $y=g(B_1)$ and $y'=g(B_2)$.
}
In case $H$ we replace one vertex with another, changing at most $h-1$ edges. In case $T$ we modify one vertex in $b_r$ and another in $b_j$, this changes up to $2(h-1)$ edges. In case $M$ we change at most $(h-1) + (h-2)=2h - 3$ edges. For $\ms\geq 2$ the largest of these is case $T$, so we will use $2(h-1)$ as our group-privacy distance for the remainder of this proof.\footnote{In the case that $\ms=1$, there is no other $b_j$ to be affected in a move, meaning the right distance for group privacy in this case is $2h-3$.}

Similarly to~\cref{eq:distance-bound-11}, applying~\cref{def:dcc} with $\pi_{1,0}$, where any pair of graph tuples $(y,y')$ have $d(y,y')\leq 2(h-1)$, we get that for every $\priv>0$,
\begin{equation}
    D_\eep(\mu_1\|\mu_0) \leq \sum_{y,y'}\pi_{1,0}(y,y')D_\eep (\pmech(y)\|\pmech(y')) \leq \pdelta_{\pmech,2(h-1)}(\priv).\label{eq:distance-bound-10}
\end{equation}
Since $\pmech$ is $(\priv_0,\pdelta_0)$-edge differentially private, group privacy gives
\begin{equation*}
    \pdelta_{\pmech,2(h-1)}(2(h-1)\priv_0) \leq \frac{e^{2(h-1)\priv_0} - 1}{\eepz - 1}\pdelta_0\quad\text{and}\quad\pdelta_\pmech(2(h-1)\priv_0) \leq \pdelta_\pmech(\priv_0) \leq \pdelta_0,
\end{equation*}
so long as $h\geq 2$. The second term uses the fact that privacy profiles are non-increasing in $\priv$. 

We now combine these bounds using advanced joint convexity. Set\[
\diverge\defeq e^{2(h-1)\priv_0},\qquad \diverge'\defeq 1 + \eta(\diverge - 1) = \eepp,\qquad \varphi\defeq \diverge'/\diverge,
\]
so that \[
\priv' = \log( 1 + \eta (e^{2(h-1)\priv_0} - 1)).
\]
By~\cref{advanced-joint-convexity}, using the decompositions $\mu=(1-\eta)\mu_0 + \eta \mu_1$ and $\mu' = (1-\eta) \mu_0 + \eta \mu_1'$, followed by joint convexity and substituting in~\cref{eq:distance-bound-11,eq:distance-bound-10} with $\priv=2(h-1)\priv_0$,
\[
D_\eepp (\mu\| \mu') \leq \eta\left[ (1-\varphi) \frac{e^{2(h-1)\priv_0} - 1}{\eepz -1}\pdelta_0 + \varphi\pdelta_0 \right] \leq \eta \frac{e^{2(h-1)\priv_0} - 1}{\eepz -1}\pdelta_0
\]
where the final inequality holds because $\pdelta_0 \leq \frac{e^{2(h-1)\priv_0}- 1}{\eepz -1}\pdelta_0$ for $h\geq 2$. 

An identical argument, exchanging the roles of $G$ and $G'$, shows the same bound for $D_{\eepp}(\mu'\|\mu)$. Hence, by~\cref{thm:dp-profile}, the subsampled mechanism satisfies $(\priv',\pdelta')$-edge differential privacy with
\[
\priv' = \log\left( 1 + \frac{\ms h(h-1)}{\ns(\ns-1)}\left( e^{2(h-1)\priv_0} - 1 \right) \right),\qquad \pdelta' = \frac{\ms h(h-1)}{\ns(\ns-1)}\frac{e^{2(h-1)\priv_0 } - 1}{\eepz - 1}\pdelta_0.\qedhere
\]

\end{proofof}

\section{Various Non-private Testing Algorithms}
\label{sec:algorithms}

In this section, we present the non-private algorithms from the property testing literature that we employ to design their private counterparts. We will present the algorithms for $H$-freeness testing and Bipartiteness testing in the bounded-degree model.

\subsection{\texorpdfstring{$H$}{H}-freeness testing algorithm in the dense graph model}

We now present the $H$-freeness tester in the dense graph model.

\begin{algorithm}[H]
\caption{$H$-freeness tester in dense graphs}\label{alg:Hfreenessdensegraph}

\KwIn{Query access to a dense graph $G=([\ns],E)$, a known graph $H$, proximity parameter $\dst$, failure probability $\failp$.}
\KwOut{\texttt{Accept} or \texttt{Reject}.}

\For{$\frac{\log(1/\failp)}{\prox(\dst,H)}$ times}{

Sample a set $S$ of $|V(H)|$ vertices of $G$ uniformly at random.

\If{the induced subgraph $G[S]$ contains $H$ as a subgraph}{

Return \texttt{Reject}

}

}

Return \texttt{Accept}.

\end{algorithm}

\subsection{\texorpdfstring{$H$}{H}-freeness testing algorithm in the bounded-degree model}

This algorithm is divided into two subroutines. In \cref{alg:pot-h-free}, we present a proximity-oblivious algorithm, based on which in \cref{alg:gen-h-free} we design the final algorithm for $H$-freeness testing. This algorithm is from \cite{goldreich2002property}.

\begin{algorithm}[H]
\caption{Proximity Oblivious $H$-freeness tester}\label{alg:pot-h-free}

\KwIn{Query access to a graph $G=([\ns],E)$ with maximum degree $\maxdeg$, $H$ with radius $\rd$.}
\KwOut{\texttt{Accept} or \texttt{reject}.}

Uniformly select a vertex $v\in[\ns]$.

Conduct a BFS of depth at most $\rd$.

\texttt{Accept} if and only if the explored subgraph is $H$-free.
\end{algorithm}

\begin{algorithm}[H]
\caption{Testing $H$-freeness in bounded-degree model}\label{alg:gen-h-free}

\KwIn{Query access to a graph $G=([\ns],E)$ with maximum degree $\maxdeg$, $H$ with radius $\rd$, proximity parameter $\dst$, failure probability $\failp$.}
\KwOut{\texttt{Accept} or \texttt{Reject}.}

\For{$i=1$ to $\bigO{\frac{\log(1/\failp)}{\dst}}$}{

Call \Cref{alg:pot-h-free} (the proximity oblivious tester).

}

\texttt{Accept} if and only if all of the 
calls to \Cref{alg:pot-h-free} have returned \texttt{Accept}.

\end{algorithm}

\subsection{Bipartiteness testing algorithm}

Now we present the non-private bipartiteness testing algorithm from \cite{goldreich1999a}.

\begin{algorithm}[H]
\caption{Bipartiteness testing in bounded-degree graphs}\label{alg:biptest}

\KwIn{Query access to a graph $G=([\ns],E)$ with maximum degree $\maxdeg$, proximity parameter $\dst$, failure probability $\failp$.}
\KwOut{\texttt{Accept} or \texttt{Reject}.}

\For{$O(\frac{1}{\dst})$ times}{

Sample a vertex $v \in [n]$ uniformly at random.

Perform $O(\sqrt{n} \mathsf{poly}(\log n \dst))$ random walks from $v$, each with length $O(\mathsf{poly}(\log n, \dst))$.

Let $R_0$ and $R_1$ be the set of vertices reached from $v$ by even (odd) length random walks.

\If{$R_0 \cap R_1 \neq \emptyset$}{

Return \texttt{Reject}.

}

}

Return \texttt{Accept}.

\end{algorithm}

\begin{algorithm}
\caption{Tester for hyperfinite properties}\label{alg:hyperfinite-tester}

\KwIn{Query access to a graph $G=([\ns],E)$ with maximum degree $\maxdeg$, $\dst$, $\failp$, $\rho$, \property.}
\KwOut{\texttt{Accept} or \texttt{reject}.}

\tcp{Phase 1 Constants (Test that $G$ is $\rho$-hyperfinite)}
Let $\gamma_0 = \dst/4$, and $\dst_1 = \gamma_0^3 / 54000$\;
Let $\gamma_1 = \dst_1 / (4\maxdeg)$, and $q = \lceil \gamma_1^{-2} \ln(4/\failp) \rceil$\;
Let $k_1 = \rho(\gamma_1^3 / 54000)$\;

\tcp{Phase 2 Constants (Test for the property \property)}
Let $\eta = \eta(\dst, \rho, \maxdeg)$ and $D = D(\dst, \rho, \maxdeg)$ be the parameters guaranteed by Theorem~3.1 in \cite{newman2013every}\;
Let $N = N(D, \maxdeg)$ be the number of non-isomorphic rooted graphs of radius $\le D$\;
Let $s = \lceil \eta^{-2} N^2 \ln(20N/\failp) \rceil$\;
\BlankLine
\tcp{Phase 1: Test that G is $\rho$-hyperfinite}
Initialize marked queries count $M = 0$\;
\For{$i = 1$ \KwTo $q$}{
    Select a vertex $v_i \in [\ns]$ uniformly at random\;
    Select an integer $r_i \in [\maxdeg]$ uniformly at random\;
    \If{$\deg(v_i) \ge r_i$}{
        Let $u_i$ be the $r_i$-th neighbor of $v_i$\;
        Run the local partition oracle (as stated in Lemma~4.3 in \cite{newman2013every}) with parameter $\gamma_1$ on $v_i$\;
        Let $P[v_i]$ be the component (of size $\le k_1$) returned by the oracle\;
        \If{$u_i \notin P[v_i]$}{
            $M \gets M + 1$ \tcp*{The edge $(v_i, u_i)$ is cut}
        }
    }
}
\If{$M > 4\gamma_1 q$}{
    \Return{\texttt{reject}}\;
}
\BlankLine
\tcp{Phase 2: Test for the property \property}
Initialize an empirical frequency vector $\hat{f}$ of size $N$ with zeros\;
\For{$j = 1$ \KwTo $s$}{
    Select a vertex $w_j \in [\ns]$ uniformly at random\;
    Let $\ell \in [N]$ be the index of the isomorphism type of the raw $D$-disc of $w_j$\;
    $\hat{f}[\ell] \gets \hat{f}[\ell] + 1/s$\;
}
\BlankLine
\texttt{Accept} if, and only if, there exists a valid graph $G' \in \property$ such that its exact raw $D$-disc frequency vector $f^*$ satisfies $\| \hat{f} - f^* \|_1 \le \eta$. Otherwise, \texttt{reject}.
\end{algorithm}

\section{Impossibility of Pure DP in one-sided error models}
\label{sec:pure-dp}

In this section, we prove that one-sided error pure-DP tester is impossible regardless of the graph model.

\begin{theorem}
\label{prop:pure-dp}
Let $\property$ be any non-empty graph property (contains at least one $n$-vertex
graph with that property).  Any one-sided error tester $\tester$ for~$\property$ that is
$(\priv, 0)$-edge-DP (pure edge DP) must satisfy
$\Pr[\tester(G) = \mathrm{reject}] = 0$ for every graph $G$.
\end{theorem}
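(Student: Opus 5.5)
The plan is a transitivity argument along a path of edge-neighboring graphs. Since $\property$ is non-empty, we fix an $n$-vertex graph $G_0\in\property$. One-sided error means completeness holds with probability $1$, so $\Pr[\tester(G_0)=\mathrm{reject}]=0$. Pure $\priv$-edge-DP (\cref{def:DP} with $\pdelta=0$), applied to the output event $\cS=\{\mathrm{reject}\}$, gives for every edge-neighboring pair $G_1\sim_e G_2$
\[
\Pr[\tester(G_2)=\mathrm{reject}]\le e^{\priv}\Pr[\tester(G_1)=\mathrm{reject}].
\]
Consequently, if $G_1$ is rejected with probability $0$, then so is every edge-neighbor $G_2$ of $G_1$.

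The second step is connectivity. Take an arbitrary target graph $G$ on vertex set $[\ns]$ and list the pairs in $E(G_0)\triangle E(G)$ as $e_1,\dots,e_m$, where $m\le\binom{\ns}{2}$. Define $H_0=G_0$ and let $H_i$ be obtained from $H_{i-1}$ by toggling $e_i$. Each consecutive pair satisfies $H_{i-1}\sim_e H_i$, because their edge sets differ exactly in $e_i$, and $H_m=G$. Induction on $i$ using the first step then gives $\Pr[\tester(H_i)=\mathrm{reject}]=0$ for all $i$, and in particular for $G$. Equivalently, group privacy yields the multiplicative bound $e^{m\priv}\cdot 0=0$.

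For the bounded-degree model, where inputs must have maximum degree at most $\maxdeg$, the path has to stay inside the class. To handle this, we first delete all edges of $E(G_0)\setminus E(G)$ one at a time, which never raises a degree, and then add the edges of $E(G)\setminus E(G_0)$. During the addition phase every intermediate edge set is a subset of $E(G)$, so degrees remain at most $\maxdeg$.

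There is no real obstacle here. The only points needing care are the following.
\begin{itemize}
\item Use $\cS=\{\mathrm{reject}\}$ in the direction that propagates zero probability, which is legitimate since DP holds for both orders of the pair.
\item Ensure the path respects whatever input class the model imposes, namely the degree bound above.
\end{itemize}
The same argument transfers verbatim to node-DP, because edge-neighbors are also node-neighbors. Finally, note the consequence: such a tester rejects no graph, so it cannot be sound for any property that has $\dst$-far instances.
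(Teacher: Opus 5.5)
Your proposal is correct and follows the paper's proof. You fix $G_0\in\Pi$, use one-sided completeness to get rejection probability $0$ on $G_0$, and propagate this via pure DP along a path of at most $\binom{n}{2}$ edge-neighboring graphs; your delete-then-add ordering usefully makes explicit why the path stays within the degree bound, which the paper only asserts in a parenthetical remark.
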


\begin{proof}
Consider a graph $G_0 \in \property$. Since $\tester$ is one-sided error, this implies that $\Pr[\tester(G_0) = \mathrm{reject}] = 0$. For any graph~$G^*$, note that there is a path of edge-neighboring graphs $G_0 \sim_e G_1 \sim_e \cdots \sim_e G_m = G^*$ with
$m \leq \binom{n}{2}$ (this bound holds for both dense and bounded degree models).  By applying pure DP inductively, we have the following:
\[
  \Pr[\tester(G_{i+1}) = \mathrm{reject}]
    \;\leq\;
    e^{\priv} \cdot \Pr[\tester(G_i) = \mathrm{reject}]
    \;=\;
    e^{\priv} \cdot 0
    \;=\; 0.
\]
Hence $\Pr[\tester(G) = \mathrm{reject}] = 0$ for all graphs $G$. This completes the proof.    
\end{proof}

While it may be possible to bypass this bottleneck and design \emph{approximate}-DP testers with one-sided error, in our work, we choose to focus on the strongest privacy guarantees and thus, design pure DP testers with two-sided error.

\end{document}